\newcommand{\E}[1]{\mathbb{E}\left[ #1 \right]}
\DeclareMathOperator{\Ex}{\mathbb{E}}
\DeclareMathOperator{\Geo}{Geo}
\newcommand{\prob}[1]{\mathbb{P}\left(#1\right)}
\newcommand{\eps}{\epsilon}
\newcommand{\defeq}{\vcentcolon=}
\newcommand\simiid{\mathrel{\overset{\makebox[0pt]{\mbox{\normalfont\tiny iid}}}{\sim}}}
\newcommand{\norm}[1]{\left\lVert#1\right\rVert}
\let\originalmiddle=\middle
\def\middle#1{\mathrel{}\originalmiddle#1\mathrel{}}
\theoremstyle{plain}
\newtheorem{thm}{Theorem}
\newtheorem{lemma}{Lemma}
\newtheorem{conj}{Conjecture}
\newtheorem{cor}{Corollary}
\newtheorem{defn}{Definition}
\theoremstyle{definition}
\theoremstyle{remark}
\newtheorem{eg}{Example}
\Crefname{thm}{Theorem}{Theorems}
\Crefname{prop}{Proposition}{Propositions}
\Crefname{rk}{Remark}{Remarks}
\Crefname{eg}{Example}{Examples}
\Crefname{asm}{Assumption}{Assumptions}
\Crefname{defn}{Definition}{Definitions}
\Crefname{cor}{Corollary}{Corollaries}
\newcommand{\co}[1]{\ifthenelse{\boolean{commentsactivated}}{{\color{red} {\em CO: #1 }}}{}}
\newcommand{\ec}[1]{\ifthenelse{\boolean{commentsactivated}}{{\color{teal} {\em EC: #1 }}}{}}
\newcommand{\vc}[1]{\ifthenelse{\boolean{commentsactivated}}{{\color{olive} {\em VC: #1 }}}{}}
\newcommand{\note}[1]{\ifthenelse{\boolean{commentsactivated}}{{\color{magenta} {#1}}}{}}
\newcommand{\p}{\bm{p}}
\newcommand{\apply}{\mathit{apply}}
\newcommand{\consistent}{\mathit{consistent}}
\newcommand{\eGpB}{$\epsilon$Grounded$\pi$Bot }
\title{Characterising Simulation-Based Program Equilibria}
\author{Emery Cooper, Caspar Oesterheld, Vincent Conitzer}
\date{\today}
\begin{document}

\maketitle
\begin{abstract}
In \citeauthor{tennenholtz}'s program equilibrium, players of a game submit programs to play on their behalf. Each program receives the other programs' source code and outputs an action. This can model interactions involving AI agents, mutually transparent institutions, or commitments. \citet{tennenholtz} proves a folk theorem for program games, but the equilibria constructed are very brittle.
We therefore consider \emph{simulation-based programs} -- i.e., programs that work by running opponents' programs. These are relatively robust (in particular, two programs that act the same are treated the same) and are more practical than proof-based approaches.
\citeauthor{oesterheld2019robust}'s \citeyearpar{oesterheld2019robust} $\eps$Grounded$\pi$Bot is such an approach. Unfortunately, it is not generally applicable to games of three or more players, and only allows for a limited range of equilibria in two player games.
In this paper, we propose a generalisation to \citeauthor{oesterheld2019robust}'s \citeyearpar{oesterheld2019robust} $\eps$Grounded$\pi$Bot.
We prove a folk theorem for our programs in a setting with access to a shared source of randomness. We then characterise their equilibria in a setting without shared randomness. Both with and without shared randomness, we achieve a much wider range of equilibria than \citeauthor{oesterheld2019robust}'s \citeyearpar{oesterheld2019robust} $\eps$Grounded$\pi$Bot. Finally, we explore the limits of simulation-based program equilibrium, showing that the \citeauthor{tennenholtz} folk theorem cannot be attained by simulation-based programs without access to shared randomness.

\end{abstract}

\section{Introduction}

Consider a game in which, rather than choosing actions, players submit programs (\citealt{McAfee1984}; \citealt{Howard1988}; \citealt{Rubinstein1998}, Sect.\ 10.4; \citealt{tennenholtz}). Programs receive as input the source code of the opponents' programs, and output actions. E.g., in the (one-shot) prisoner's dilemma, the programs might be CliqueBots: `cooperate if the other program is equal to this program, else defect'. CliqueBots form a cooperative equilibrium. This illustrates how such games -- \emph{program games} -- allow for new equilibria.

Traditional game-theoretic concepts, such as Nash equilibrium, have been developed without mutual transparency in mind, and may therefore be less relevant to interactions involving AI agents whose source codes might be available to other agents \citep{ConitzerCoopAI}.
Program games could model interactions between principals designing AI agents, interactions between AI agents themselves, or smart contracts. This could include interactions between large language models where the programs are prompts provided by humans and interpreted by the models.
Program games can also model some interactions between humans, such as those involving mutually conditional commitments, or transparent institutions \citep{Critch2022}.

\citet{tennenholtz} proves a folk theorem for program equilibria \citep[cf.][Sect.\ 10.4]{Rubinstein1998}: any set of feasible and individually rational (i.e., better-than-minimax) payoffs of a game is achievable in a program equilibrium. However, this result is proved using programs based on syntactic comparison, such as the above CliqueBot in the prisoner's dilemma. Such equilibria are extremely fragile: if someone, perhaps from a different community, writes a slightly different version of CliqueBot (CliqueBot'), then CliqueBot and CliqueBot' will defect on each other.

Previous authors have proposed different ways of achieving more robust program equilibria. One proposal is that of Löbian programs \citep{Barasz2014,Critch2019,Critch2022,oesterheld2022notecompatibility}. In the prisoner's dilemma, the Löbian FairBot roughly works by cooperating if and only if it can prove the opponent cooperates. The Löbian FairBots cooperate against each other. Unfortunately, there are practical difficulties with this approach. If the opponent program is not very interpretable (e.g., a large neural net), it may be difficult or impossible to prove things about its output.

We therefore turn our attention to simulation-based approaches, such as \citeauthor{oesterheld2019robust}'s \citeyearpar{oesterheld2019robust} $\eps$GroundedFairBot and $\eps$Grounded$\pi$Bot. In the prisoner's dilemma, the $\eps$GroundedFairBot cooperates with probability $\eps$ and otherwise simulates the opponent and cooperates if and only if in the simulation the other player cooperates. If two $\eps$GroundedFairBots play against each other, the simulated player may in turn simulate the other, etc., but eventually the $\eps$-probability case will be triggered, ensuring termination.
This program is easy to implement and can be used against programs that are not very interpretable.

We can view \citeauthor{oesterheld2019robust}'s (\citeyear{oesterheld2019robust}) $\eps$Grounded$\pi$Bot as simulating a repeated game. With probability $\epsilon$, the program is on the first `time step', and does not observe any previous actions. Otherwise, it simulates the other program. If the other program is another $\eps$Grounded$\pi$Bot, that program will again be on the first time step with probability $\eps$, and will do another simulation with probability $1-\eps$. Overall, the total number of simulations (or `time steps') is then geometric, with programs at each time step observing only the immediate previous time step. \citet{oesterheld2019robust} proves a correspondence between program games of $\eps$Grounded$\pi$Bots and repeated games. Since only the previous time step is observed, the policy $\pi$ must be \emph{myopic} -- i.e., only depend on the last action of the opponent.\footnote{This is similar to the Markov property, viewing the actions taken on the previous time step as the state -- given the last action of the opponent, the $\eps$Grounded$\pi$Bot's action does not depend on earlier actions.}

We focus on the following question: What equilibria can be obtained by simulation-based approaches? Specifically, can they obtain all equilibria in \citeauthor{tennenholtz}' folk theorem?

Unfortunately, there is a huge gap between the \citet{oesterheld2019robust} results and \citeauthor{tennenholtz}'s folk theorem. The $\eps$Grounded$\pi$Bot cannot perform more than one simulation (when its opponents are also $\eps$Grounded$\pi$Bots) without running into halting issues due to infinite recursion, and so does not work well for games of more than two players. (Simulating two other players with probability $1-\eps$ each leads to each simulation in expectation generating $2-2\eps$ new ones, resulting in a hydra of simulations -- cf.\ \Cref{app:2019_halting}.) Even in two-player games, the equilibria it can enforce fall short of the folk theorem.

\begin{eg}\label{eg:intro_game}
    Consider the following three-player game (payoff matrix shown in \Cref{tab:eg_1}):%
    \begin{itemize}
        \item The utilities are \emph{additively separable}, i.e., for each player $i$, that player's utility $u_i$ may be decomposed into a sum of functions $u_{ij}$, where $u_{ij}$ depends only on the action of player $j$: $u_i(a_1,a_2,a_3) = u_{i1}(a_1)+u_{i2}(a_2)+u_{i3}(a_3)$.
        \item Players 2 and 3 may either Cooperate ($C$), granting all players an additional 3 utility, or Defect ($D$), granting 8 utility to the defecting player. Formally, for $j=2,3$ we have $u_{ij}(C)=3$ for all $i$, while $u_{ij}(D) = 8$ if $i=j$ and 0 otherwise.
        \item Player 1 may either Cooperate ($C$), Punish player 2 ($P_2$), or Punish player 3 ($P_3$). Playing $P_2$ (resp. $P_3$) takes 3 utility from player 2 (resp. 3) and gives it to player 3 (resp. 2). Formally, $u_{i1}(C) = 0$ for all $i$, while $u_{21}(P_2) = u_{31}(P_3) = -3$, $u_{21}(P_3) = u_{31}(P_2) = 3$, and $u_{11}(P_2)=u_{11}(P_3) = 0$.
    \end{itemize}
    \end{eg}
    \begin{table}[tbh]
\small
\centering
\setlength{\extrarowheight}{0pt}
\begin{tabular}{cc|c|c|}
  \multicolumn{1}{c}{} & \multicolumn{1}{c}{} & \multicolumn{2}{c}{Player 2}\\
  \multicolumn{1}{c}{Plr 3} & \multicolumn{1}{c}{Plr 1} & \multicolumn{1}{c}{$C$}  & \multicolumn{1}{c}{$D$}\\
\cline{3-4}
 & $C$ & $6,6,6$ & $3,11,3$\\
$C$ & $P_2$ & $6,3,9$ & $3,8,6$\\
 & $P_3$ & $6,9,3$  & $3,14,0$\\
\cline{3-4}
 & $C$ & $3,3,11$ & $0,8,8$\\
$D$ & $P_2$ & $3,0,14$ & $0,5,11$\\
 & $P_3$ & $3,6,8$ & $0,11,5$ \\
\cline{3-4}
\end{tabular}
\caption{The payoff matrix for \Cref{eg:intro_game}}
\label{tab:eg_1}
\end{table}
    Can we obtain a $(C,C,C)$ equilibrium in this game with simulation-based programs?
    With the \citet{oesterheld2019robust} $\eps$Grounded$\pi$Bot, the answer is no. As noted above, it cannot be na\"ively extended to simulate more than one other $\eps$Grounded$\pi$Bot while still halting with probability 1 (see \Cref{app:2019_halting} for details). In some three-player games, the $\eps$Grounded$\pi$Bots can work around this, for example, by choosing one player to simulate at random.  But this will not work in Example~\ref{eg:intro_game}. Intuitively, player 1 needs to be able to punish a defecting player; but if player 1 randomises which player to simulate, and then sees that player (say, 2) defect in simulation, then player 1 will not know which player defected `first' in the mental repeated game -- e.g., it could be that 2 is defecting only because 3 defected in a \emph{subsimulation}. That is, the simulated version of player 2 might in turn simulate player 3, observe player 3 defecting in that simulation, and defect so as to penalise player 3 for this. Thus, player 1 will not know which player to punish. See \Cref{app:2019_3player_schemes} for details.

    \paragraph{Contributions} 
Now, imagine the program worked as follows instead: First it randomly determines its own `time step' in the mental repeated game, according to a geometric distribution. Then if its time step is 1, it immediately outputs an action. Otherwise, it simulates the other programs, but manipulates those simulations' sources of randomness to ensure that (assuming the other programs operate similarly) they believe themselves to be on earlier time steps. This would then allow it to simulate multiple different opponents (and different time steps) while still halting. Our programs will be based on this idea.

    We first consider a setting (\Cref{sec:correl}) where programs have access to a shared random sequence. We propose a generalisation of the \citeauthor{oesterheld2019robust} (\citeyear{oesterheld2019robust}) $\eps$Grounded$\pi$Bot for this setting, which we call a \textit{correlated $\eps$Grounded$\pi$Bot}. This correlated $\eps$Grounded$\pi$Bot may be viewed as simulating a repeated game between $n$ players, and responding to the observed action history. We relate payoffs in games where all players but (at most) one are correlated $\eps$Grounded$\pi$Bots to payoffs in repeated games. We hence obtain a folk theorem: any feasible and strictly individually rational payoffs can be obtained in a program equilibrium of correlated $\eps$Grounded$\pi$Bots. From this, we can immediately see that these programs can solve \Cref{eg:intro_game}, since the profile $(C,C,C)$ is better for each player than that player's minimax utility. Our correlated $\eps$Grounded$\pi$Bots can be seen as a (decentralised) implementation of \citeauthor{kovarik2024recursive}'s joint simulation device (cf.\ \Cref{sec:related_work}).

    We then consider the same approach in a setting without access to shared randomness (\Cref{sec:uncorrel}). We call the resulting program an `uncorrelated $\eps$Grounded$\pi$Bot'. We characterise the equilibria that can be attained by our uncorrelated $\eps$Grounded$\pi$Bots. We give examples to show that our uncorrelated $\eps$Grounded$\pi$Bots attain equilibria that the \citet{oesterheld2019robust} $\eps$Grounded$\pi$Bots cannot, even with $2$ players. Nonetheless, our uncorrelated $\eps$Grounded$\pi$Bots do not achieve the full program equilibrium folk theorem, as we show with an example. While there is in general no simple relationship between program games of uncorrelated $\eps$Grounded$\pi$Bots and repeated games, we do obtain such a result in the special case of games with additively separable utilities. From this, we derive a folk theorem in the case of games with additively separable utilities. Hence, the uncorrelated $\eps$Grounded$\pi$Bots can still achieve a $(C,C,C)$ equilibrium in \Cref{eg:intro_game}.

    Finally, we ask whether more general simulation-based programs allow for additional equilibria in the setting without shared randomness (\Cref{sec:simulationist}). We give an example to show that it is possible to attain equilibria that cannot be attained by the uncorrelated $\eps$Grounded$\pi$Bot. We then prove a negative result on the equilibria attainable by simulation-based programs. From this, we show that without shared randomness, simulation-based programs cannot give us the full \citet{tennenholtz} folk theorem. It remains an open question exactly which equilibria can be attained.

\section{Background and Preliminaries}
    We will sometimes denote the tuple $(x_1,\ldots, x_n)$ by $x_{1:n}$, and the tuple $(x_1,\ldots, x_{i-1},x_{i+1},\ldots, x_n)$ by $x_{-i}$. Similarly, we will use `players $-i$' to refer to all players of a game but player $i$.
\paragraph{Normal-form games}
We assume familiarity with game theory. See, e.g., \citet{osborne2004} for an introduction.
We introduce some definitions for normal-form games.

    An $n$-player \emph{normal-form} game comprises finite \emph{action sets} $\mathcal{A}_i$ and \emph{utility functions} $u_i: \prod_i \mathcal{A}_i \rightarrow \mathbb{R}$ for each player $i \in \{1,\ldots n\}$.
    A \emph{strategy} for player $i$ is a probability distribution $s_i\in \Delta(\mathcal{A}_i)$ over actions for player $i$. A \emph{strategy profile} is a tuple $(s_1,\ldots s_n)$ of strategies for all players.

    Given a joint distribution over \emph{outcomes} $(a_1,\ldots, a_n)$, the \emph{expected utility} for player $i$ is, writing $A_i$ for the $i$th player's action (a random variable), $
        \E{u_i} = \E{u_i(A_{1:n})} = \sum_{a_{1:n}} \prob{A_{1:n} = a_{1:n}}u_i(a_1,\ldots, a_n)$.
    When players' actions are independent, with player $i$ following strategy $s_i$ for each $i$, write $u_i(s_1,\ldots, s_n) \defeq \E{u_i(A_{1:n})} = \sum_{a_{1:n}}u_i(a_{1:n})\prod_j\prob{A_j = a_j}$.

        We call a set of expected payoffs $v_{1:n}$ \emph{individually rational (without coordination)}
        if it is at least as good for each player as that player's minimax utility, i.e., $v_i \geq \min_{s_{-i} \in \prod_{j\neq i}\Delta(\mathcal{A}_j)}\allowbreak\max_{a_i\in \mathcal{A}_i} u_i(s_{-i},a_i)$ for all $i$. Say $v_{1:n}$ is \emph{strictly} individually rational if this inequality is strict for all $i$.

            Say that a set of payoffs $v_{1:n}$ is \emph{feasible (without correlation)} if it is achieved by some set of (independently sampled) strategies, i.e., if for some $s_{1:n}$, for all $i$, $v_i = u_i(s_{1:n})$.
    Meanwhile, $v_{1:n}$ is \emph{feasible with correlation} if it is achieved in expectation by some set of strategies not sampled independently: i.e., if when $A_{1:n}$ follows some joint distribution over $\prod_i \mathcal{A}_i$, we have $v_i = \E{u_i(A_{1:n})}$.

\paragraph{Program games}
We now define \emph{program equilibrium} \citep{tennenholtz}. We will consider programs that are themselves deterministic, but take as input one or more random sequences. We also assume that programs always have access to their own source code. Write $\apply$ for the function that takes as input a program and a tuple of input arguments and runs the program on the input arguments, outputting whatever the program outputs (if anything).

\begin{defn}[Program games]
Given an $n$-player normal-form game $G$, the \emph{correlated program game} of $G$ is a game where player $i$'s action set is the set of programs taking as input a list of $n-1$ opponent programs, and sequences $(r_m)_{m\geq 0}\in[0,1]^\infty$ and $(x^i_m)_{m\geq 0}\in [0,1]^\infty$, and outputting an action in $\mathcal{A}_i$. Given a \emph{program profile} $(p_1,\ldots, p_n)$ of programs that almost surely halt against each other (i.e., halt against each other with probability 1), the \emph{outcome distribution} is given by the distribution of output actions, $A_{1:n}$, where $A_i = \apply(p_i,(p_{-i},(r_m),(x_m^i)))$, and $(r_m)$, $(x_m^1),\ldots (x_m^n)$ are independent i.i.d.\ $U[0,1]$ sequences. The \emph{expected utility} for player $i$ is then $\E{u_i(A_{1:n})}$.

An \emph{uncorrelated program game} is the same, except that for each $i$, program $i$ only receives a single sequence $(r^i_m)$, independent for each $i$.
\end{defn}

If the programs do not halt against each other, we leave the outcome distribution and expected utility undefined. We will show that the programs we define halt against each other and a range of other programs. When we later consider more general programs, we will place restrictions on them to prevent them from strategically not halting (which our earlier programs also satisfy). While we can consider players randomising over programs (cf.\ \Cref{sec:princip_rand}), it will generally suffice to imagine each player submitting a single program.

    Given an $n$-player (correlated or uncorrelated) program game and programs $p_{-i}$ for players $-i$, a program $p_i$ is a \emph{best response} to $p_{-i}$ if it achieves maximal expected utility against $p_{-i}$ (among programs $p_i'$ such that $p_{-i},p_i'$ all almost surely halt against one another). Say $(p_1,\ldots p_n)$ is a \emph{program equilibrium} if for all $i$, $p_i$ is a best response to $p_{-i}$.

\citet{tennenholtz} proves the following folk theorem for program games (where the set of programs for each player is restricted to avoid issues related to halting). We will be interested in whether this result (or something similar) holds for simulation-based programs.
\begin{thm}
    Payoffs $v_{1:n}$ are achievable in an equilibrium of the uncorrelated program game of $G$ (given aforementioned restrictions on the programs available to each player) if and only if they are feasible (without correlation) and individually rational.
\end{thm}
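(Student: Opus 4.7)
My plan is to handle the two directions separately, with the bulk of the work going into the sufficiency (``if'') direction via a CliqueBot-style construction with syntactic code comparison.

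For the necessity (``only if'') direction, I would argue as follows. \emph{Individual rationality:} in any program equilibrium $(p_1,\ldots,p_n)$, player $i$ can always deviate to the program ``given the opponents' source code, compute opponents' action distribution $D_{-i}$ by simulating them on fresh random tapes, then output $\arg\max_{a_i} \Ex_{a_{-i}\sim D_{-i}}[u_i(a_i,a_{-i})]$'' (this requires the stated halting restriction, and it attains at least $\min_{s_{-i}}\max_{a_i} u_i(s_{-i},a_i)$). \emph{Feasibility without correlation:} in the uncorrelated program game, each $A_i=\apply(p_i,(p_{-i},(r_m^i)))$ is a deterministic function only of $r^i$ once the programs are fixed, and the $r^i$ are mutually independent, so $A_{1:n}$ is a product distribution over $\prod_i\mathcal{A}_i$. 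Its marginals $s_i^\ast\in\Delta(\mathcal{A}_i)$ then satisfy $v_i=u_i(s_1^\ast,\ldots,s_n^\ast)$.

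For the sufficiency direction, fix independent strategies $s_{1:n}$ with $u_i(s_{1:n})=v_i$ for all $i$, and for each $j$ fix a minimax profile $\sigma^{(j)}_{-j}\in\prod_{k\ne j}\Delta(\mathcal{A}_k)$ for player $j$, writing $\sigma^{(j)}_i$ for player $i$'s component. Using Kleene's recursion theorem, I would simultaneously define programs $p_1^\ast,\ldots,p_n^\ast$ such that each $p_i^\ast$ has access to the source codes of all $p_k^\ast$. Then $p_i^\ast$ on input $(p_{-i},(r_m^i))$ does the following: syntactically compare each $p_k$ to $p_k^\ast$; if all opponents match, draw $a_i\sim s_i$ using $r^i$; otherwise let $j^\ast$ be the smallest index with $p_{j^\ast}\ne p_{j^\ast}^\ast$ and draw $a_i\sim\sigma^{(j^\ast)}_i$ using $r^i$. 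These programs never simulate opponents, so halting is trivial and they sit inside any reasonable restricted program class. On the equilibrium path, all comparisons succeed and the joint output is $s_{1:n}$, yielding $v_i$. If player $j$ deviates to any $p_j'\ne p_j^\ast$, then every other honest $p_k^\ast$ identifies $j$ as the smallest deviator and plays $\sigma^{(j)}_k$; no matter what $p_j'$ outputs, the resulting payoff to $j$ is at most $\max_{a_j}u_j(\sigma^{(j)}_{-j},a_j)\le v_j$, so $p_j^\ast$ is a (weak) best response.

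The step I expect to require the most care is the self-referential construction of the $p_i^\ast$ together with their syntactic comparisons to $p_k^\ast$: I need to ensure that ``$p_k=p_k^\ast$'' can be evaluated inside each $p_i^\ast$ even though $p_i^\ast$'s own source includes the encodings of all $p_k^\ast$. A standard application of the simultaneous recursion theorem (or equivalently a single quined ``mother'' program that takes an index $i$ and dispatches accordingly) resolves this, and the unique-smallest-deviator tiebreaking rule is what guarantees that the honest players coordinate on punishing the same $j$ so that the minimax bound $\sigma^{(j)}_{-j}$ actually binds. Everything else — verifying that the payoff on-path is exactly $v_i$, and that off-path deviations give at most the minimax value — reduces to reading off the construction.
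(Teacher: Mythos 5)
The paper states this result as background and attributes it entirely to \citet{tennenholtz}; it gives no proof of it, so there is no in-paper argument to compare against. Judged on its own merits, your proposal follows the standard route to Tennenholtz's folk theorem and is essentially sound. The feasibility-without-correlation necessity argument is clean and correct: once the programs are fixed, each $A_i$ is a deterministic function of $(r^i_m)$ alone, the tapes are mutually independent, so $A_{1:n}$ is a product measure and its marginals $s^\ast_{1:n}$ witness feasibility. The sufficiency construction via mutually quined CliqueBots with smallest-index deviator selection and minimax punishment profiles $\sigma^{(j)}_{-j}$ is also correct: the honest programs never simulate, so halting is trivial, and the coordination on a common $j^\ast$ ensures the minimax bound is actually enforced.

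The step that is currently glossed and would need tightening is the individual-rationality-necessity argument. You say the deviating program computes the opponents' action distribution $D_{-i}$ ``by simulating them on fresh random tapes'' and then plays $\arg\max_{a_i}\Ex_{a_{-i}\sim D_{-i}}[u_i(a_i,a_{-i})]$. One run on a fresh tape gives a \emph{sample} from $D_{-i}$, not $D_{-i}$ itself, and in general the exact distribution over $\prod_{j\ne i}\mathcal{A}_j$ induced by the uncountable set of tapes is not effectively computable. Two standard repairs: (i) restrict attention (as part of the ``aforementioned restrictions'') to opponent programs whose action distributions are effectively computable from their source, or (ii) replace exact computation with Monte Carlo estimation: for each $\delta>0$, a deviation $p_i'(\delta)$ that estimates $D_{-i}$ from enough samples and plays the empirical best response attains expected utility at least the minimax value minus $O(\delta)$; the equilibrium inequality $v_i\geq$ (payoff of $p_i'(\delta)$) for every $\delta$ then gives $v_i\geq$ minimax on letting $\delta\to 0$. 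In addition, the whole step rests on the opponents $p_{-i}$ halting when handed $p_i'$'s source and on $p_i'$ itself lying inside the restricted class; you flag the halting part, but it is worth noting that this dependence is not incidental --- it is exactly the reason Tennenholtz restricts the program class, and it is what the present paper's simulation-based constructions are designed to replace with a different mechanism.
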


\section{Correlated Program Games}\label{sec:correl}
We first consider correlated program games.
We use an additional variant on $\apply$ calls for this setting:
    Define $\apply^*(p_i,(p_{-i},(r_m),(x_m)))$ to be equal to $\apply(p_i,(p_{-i},(r_m),(x_m)))$ if the variables $(x_m)$ are not \emph{accessed} except inside further $\apply^*$ calls, and $R_i$, a special symbol to indicate that the private random sequence was accessed, otherwise. For example, in the random-access machine model this would trigger whenever the simulated program runs the command for reading (with either direct or indirect addressing) one of memory addresses containing the independent random bits (e.g., loading from, comparing with the contents of, or performing an arithmetic operation using the contents of such a memory location, in the formalism of `MIX' from \citet[Sec.\ 1.3.1, Volume 1]{knuth97}). We do not include simply passing the $(x_m)$ on via $\apply$ calls as `accessing' them -- we may assume that these pass on only a pointer to the $(x_m)$.

The idea of $\apply^*$ calls is to screen off any dependence on private random variables (by replacing the output in cases where it could depend on them). Computing whether a program's output \emph{depends} on the $(x_m)$ is undecidable (by Rice's theorem), so we instead look at whether the program \emph{accesses} the $(x_m)$. We will discuss the purpose of this later.

\paragraph{Our algorithm}

We present our algorithm, which we call a \emph{correlated $\eps$Grounded$\pi$Bot}, in \Cref{algorithm:pi-bot-correl}. Note that the $\eps$Grounded$\pi$Bot \emph{policy}, $\pi_i$, takes as input action histories in which some actions may be replaced with $R_j$s, and outputs an action for player $i$ (possibly also depending on random variables $r_0$ and $x_0$). The program essentially simulates a repeated game between the programs. It works as follows (see also \Cref{fig:simulations} for an illustration):
\begin{itemize}
    \item First, it determines the number $T$ of previous `time steps' as the number of elements of the shared random sequence $(r_m)$ until the first element that is less than $\eps$. This gives it a geometric distribution over its time step. %
    \item If $T=0$, it immediately outputs its policy $\pi_i$ applied to the empty history.
    \item Otherwise, it simulates all programs (including itself) at earlier time steps. It does this by calling them with (between 1 and $T$) initial elements of the shared random sequence removed -- effectively putting them on earlier `time steps' (or with smaller values of $T$). This ensures that $\eps$Grounded$\pi_i$ bots will halt against one another, since the tree of $\apply$ calls will then have depth at most $T$.
    \item The simulated programs are not able to tell how many `time steps' there are `after' them, and hence cannot tell that they are simulated. This is analogous to how in a repeated game that ends with probability $\eps$ each round, players do not know if they are on the last time step.
    \item It then outputs its policy $\pi_i$ applied to the action history.
\end{itemize}

\begin{figure}[h]
    \centering
	\begin{tikzpicture}[->, >=stealth', auto, semithick, on grid,
	every state/.style = {fill=none,draw=black,thick,text=black},
	every text node part/.style={align=center}]
	\node[state, rectangle]    (0) {$p_1$ input $(r_m)_{m\geq 0}, (x_m)_{m\geq 0}$};
    \node[state, rectangle, draw = none] (label)[above left = 0.5cm and 3.3cm of 0] {$T=3$};
	\node[state, rounded rectangle] (1)[below = 1.5cm of 0] {$p_{1}, \ldots, p_n$ input $(r_m)_{m\geq 1}, (x_m)_{m\geq 0}$};
    \node[state, rounded rectangle] (2)[below = 1.5cm of 1] {$p_{1}, \ldots, p_n$ input $(r_m)_{m\geq 2}, (x_m)_{m\geq 0}$};
    \node[state, rounded rectangle] (3)[below = 1.5cm of 2] {$p_{1}, \ldots, p_n$ input $(r_m)_{m\geq 3}, (x_m)_{m\geq 0}$};
	\path
	(0) edge	(1)
    (1) edge (2)
    (2) edge (3)
    (0) edge[bend right, out = 280, in = 260] (2)
    (0) edge[bend right, out = 270, in = 260] (3)
    (1) edge[bend left, out = 80, in = 100] (3);
	\end{tikzpicture}
	\caption{The simulation tree in a game between $n$ correlated $\eps$Grounded$\pi$Bots with common $\eps$ in the case $T= 3$ (i.e., when $r_0,r_1,r_2\geq \eps$ and $r_3<\eps$). An arrow from a group of programs with one input, to a second group of programs with a second input, indicates that the first group with the first input all run all of the second group of programs with the second input (in this case, via an $\apply^*$ call). We omit the part of the input that is simply the list of other programs.}\label{fig:simulations}
\end{figure}
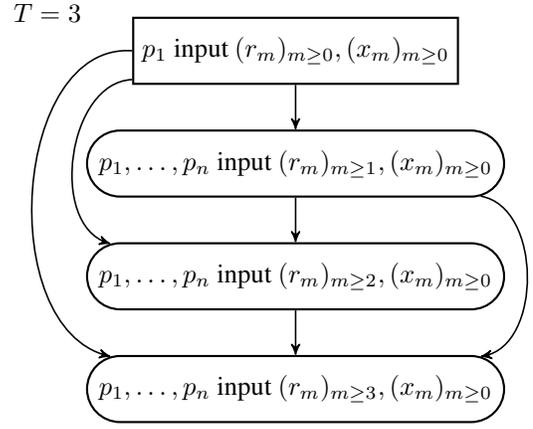

Note that aside from the availability of shared random variables, the main difference between our algorithm and the $\eps$Grounded$\pi$Bot from \citet{oesterheld2019robust} is that we allow for more than 2 players, and look back at more than one time step. This is possible because having the random variables as an input argument allows for manipulating the random variables that simulated programs receive, thus allowing us to correlate their maximum simulation depth. This in turn allows us to consider non-myopic policies $\pi_i$. One final thing to note is the use of $\apply^*$ calls rather than $\apply$ calls. This serves a few purposes. First, programs independently randomising can cause problems, as we will see in later sections, and using $\apply^*$ calls allows for punishing programs for doing this. Whilst we could additionally allow our programs to observe the results of the randomisation using the $(x_m)$s, this would result in the simulations being correlated in a way that the actual programs would not be, which would complicate our results. Finally, by making it so the output of the $\eps$Grounded$\pi$Bots themselves will not depend on the private random bits, it allows the history to still be observed normally after such randomisation occurs, allowing for non-grim-trigger strategies.

\begin{algorithm}[tb]
\caption{Correlated $\eps$Grounded$\pi_i$Bot}
\label{algorithm:pi-bot-correl}
\begin{algorithmic}
    \STATE {\bfseries Input:} Programs $p_{-i}$, sequences $(r_m)_{m=0}^\infty, (x_m)_{m=0}^\infty\in[0,1]^{\infty}$
    \STATE $T \leftarrow \min \{t:r_t <\epsilon \}$
    \FOR{$t=1$ to $T$}
        \FOR{$j=1$ to $n$}
        \STATE $a_j^t \leftarrow \apply^*(p_j, (p_{-j},(r_m)_{m\geq T+1-t},(x_m)_{m\geq 0}))$
    \ENDFOR
    \ENDFOR
    \RETURN{$\pi_i(\bm{a}^{1:T}; r_0,x_0)$}
\end{algorithmic}
\end{algorithm}

\paragraph{Halting results}

We show that the $\eps$Grounded$\pi$Bots halt against each other, and that this halting is relatively robust. Note that checking whether $r_t<\eps$ might take arbitrarily long (or be impossible) -- we abstract away this problem by simply assuming this computation is instant. Similarly, we assume throughout that the $\eps$Grounded$\pi$Bot policies $\pi_i$ are computable.

 \begin{restatable}{thm}{correlhalting}\label{thm:correl_halting}
Suppose that $p_i$ is a correlated $\eps_i$Grounded$\pi_i$Bot with $\eps_i> 0$, for $i=1,\ldots n$. Then programs $p_1,\ldots, p_n$ almost surely halt against each other.
\end{restatable}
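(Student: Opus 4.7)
The plan is to show that the tree of nested simulation calls generated by the programs $p_1, \ldots, p_n$ is almost surely finite. Set $\epsilon^* \defeq \min_i \epsilon_i$, which is strictly positive, and let $\tau \defeq \min\{m \geq 0 : r_m < \epsilon^*\}$. Since the $r_m$ are i.i.d.\ uniform on $[0,1]$ and each draw lies below $\epsilon^*$ with probability $\epsilon^* > 0$, $\tau$ is geometrically distributed and hence almost surely finite. It therefore suffices to prove that, on the event $\{\tau < \infty\}$, the entire simulation tree has finitely many nodes.

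To make this precise, I would attach to every call in the simulation tree a ``starting position'' $k$, defined so that the shifted $r$-sequence passed to that call is $(r_m)_{m \geq k}$ (the root call is at $k = 0$, and the loop in \Cref{algorithm:pi-bot-correl} passes shifted sequences to its children). The key invariant is: \emph{for a call to $p_j$ at starting position $k \leq \tau$, the locally computed value $T$ satisfies $T \leq \tau - k$, and its children occupy exactly the starting positions $k+1, k+2, \ldots, k+T$}. The bound on $T$ holds because $T = \min\{t : r_{k+t} < \epsilon_j\}$ and $r_\tau < \epsilon^* \leq \epsilon_j$ provides a witness $t = \tau - k$. Inspecting the inner loop over $t = 1, \ldots, T$ and $j = 1, \ldots, n$ in \Cref{algorithm:pi-bot-correl}, which passes the sequence $(r_m)_{m \geq k + T + 1 - t}$, confirms that the children's starting positions are precisely the integers in $\{k+1, \ldots, k+T\}$.

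From this invariant, a downward induction on $\tau - k$ finishes the proof. A call at $k = \tau$ has $T = 0$ and is a leaf. A call at $k < \tau$ has branching factor $nT \leq n(\tau - k)$ and every child sits at a strictly larger starting position, so by the inductive hypothesis each child has a finite subtree, hence so does the parent. Applying this with $k = 0$ shows the full tree is finite whenever $\tau < \infty$, which happens almost surely. Two minor remarks: (i) $\apply^*$ executes the called program to completion just as $\apply$ does, differing only in what value it returns, so using $\apply^*$ rather than $\apply$ does not affect the halting analysis; and (ii) the policies $\pi_i$ are assumed computable and are applied exactly once per node, so per-node work is finite. I do not anticipate a real obstacle; the only care needed is to take $\epsilon^*$ to be the \emph{minimum} of the $\epsilon_i$, since different players may have different parameters and a sub-simulation with a larger local $\epsilon_j$ could otherwise compute a larger $T$ than its parent.
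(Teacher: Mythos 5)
Your proof is correct. Both your argument and the paper's rest on the same key observation: $\tau \defeq \min\{m : r_m < \min_i\epsilon_i\}$ is a.s.\ finite, and every simulated call's ``starting position'' in the $r$-sequence strictly increases toward $\tau$, so the simulation tree has bounded depth. The difference is one of packaging. The paper proves an abstract lemma (\Cref{lemma:main_halting_lemma}) about programs with ``compatible halting sets'' and bounded-truncation apply calls and then derives \Cref{thm:correl_halting} by verifying that lemma's hypotheses, while you argue the bound directly by downward induction on the starting position. The abstraction buys reuse (the same lemma yields the robustness result \Cref{thm:correl_halting_ext}), but your direct version is shorter and avoids a slip in the paper's hypothesis-checking: the paper verifies condition \ref{cond:truncated} by asserting that all of $p_i$'s standard calls are $1$-truncated (so $c=1$ works), whereas \Cref{algorithm:pi-bot-correl} actually makes calls with every truncation $1, 2, \ldots, T$, and $T$ is unbounded. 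Your invariant ($T \leq \tau - k$ at starting position $k$, so children land in $\{k+1,\ldots,\tau\}$) handles the variable truncation correctly, which is what the lemma's inner argument implicitly needs anyway; the theorem is true, and your route gets there cleanly.
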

\begin{proof}
    See \Cref{pf:correlhalting}, p.\ \pageref{pf:correlhalting}.
\end{proof}

The intuition for this result is as follows: We can consider the `time step' ($T_i$) from the perspective of the program with minimal $\eps_i$. The $\eps$Grounded$\pi$Bots' simulations are always on a smaller time step (in the sense of $T_i$) than they are. On time step 1 (if not earlier), they must halt without running any simulations. Thus, the overall process must halt.

If one of the programs is not an $\eps$Grounded$\pi$Bot, the programs still halt under certain conditions.
    Given program $p_i$ with input programs $p_{-i}$, say that an $\apply$ (resp. $\apply^*$ call) is \emph{standard} if the program being called is $p_j$ for some $j$, with input programs $p_{-j}$.
    
\begin{restatable}{thm}{correlHaltingExt}\label{thm:correl_halting_ext}
Suppose that program $p_i$ has the following properties:
\begin{enumerate*}[label = (\roman*)]
    \item It never makes a standard $\apply/\apply^*$ call to itself.
    \item Its standard $\apply/\apply^*$ calls all have first input sequence of the form $(r_m)_{m\geq k}$ with $0\leq k \leq c$ for some constant $c$.
    \item $p_i$ almost surely makes only finitely many standard $\apply/\apply^*$ calls directly (i.e., not counting those made inside other $\apply$ calls).
    \item $p_i$ almost surely halts whenever all its standard $\apply/\apply^*$ calls halt.
    \item For all events $A$ with $\prob{((r_m),(x^i_m))\in A} = 0$, $p_i$ almost surely makes no standard $\apply$/$\apply^*$ calls with input sequences in $A$.
\end{enumerate*}
Then if for each $j\neq i$, $p_j$ is a (correlated) $\eps_j$Grounded$\pi_j$Bot with $\eps_j>0$, programs $p_{1:n}$ almost surely halt against each other.
 \end{restatable}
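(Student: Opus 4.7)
The plan is to extend the argument behind \Cref{thm:correl_halting} by tracking, at each standard $\apply/\apply^*$ call in the computation, the starting index $s$ of the shared random sequence $(r_m)_{m\geq s}$ passed as input. Let $\eps^* = \min_j \eps_j > 0$ and $T^* = \min\{t : r_t < \eps^*\}$; since the $r_m$ are i.i.d.\ uniform, $T^*$ is almost surely finite. I will show that the tree of standard calls is almost surely finite, from which halting follows.

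For depth: an $\eps_j$Grounded$\pi_j$Bot called with starting index $s$ halts immediately with no children whenever $r_s < \eps_j$ (which in particular holds for every $s \geq T^*$); otherwise, all of its children have starting indices strictly greater than $s$ (by inspection of \Cref{algorithm:pi-bot-correl}). By property (i), no child of $p_i$ is $p_i$ itself, so every child of $p_i$ is some $\eps$Grounded$\pi$Bot; by (ii), these children have starting indices in $[s, s+c]$. Hence on any root-to-leaf path in the standard call tree, any two consecutive edges include at least one $\eps$Grounded$\pi$Bot step, which either terminates the path or strictly increases $s$. Combined with the fact that $\eps$Grounded$\pi$Bot calls with $s \geq T^*$ are leaves, every path has depth $O(T^*)$, almost surely finite.

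For branching: each $\eps_j$Grounded$\pi_j$Bot node makes at most $T^* \cdot n$ standard child calls, and each $p_i$ node makes only almost surely finitely many standard child calls by (iii). Combined with the bounded depth, the standard call tree is almost surely finite by induction on its levels. By (iv), together with the structure of \Cref{algorithm:pi-bot-correl} and the standing assumption that the $\pi_j$ and the comparisons $r_t < \eps$ are computable, every node halts once its standard children do; so walking up from the leaves, the whole computation halts almost surely.

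The remaining technical point is the role of condition (v): the almost-sure halting guarantees for $\eps$Grounded$\pi$Bots are statements over the input random sequence, and we need them to chain through the tree even when the inputs are chosen by $p_i$ rather than being freshly sampled. Condition (v) rules out the possibility that $p_i$ systematically constructs input sequences lying in a measure-zero exceptional set on which an $\eps$Grounded$\pi$Bot could fail to halt, so the almost-sure statements composed across the tree survive countable-union bounds. The main obstacle is precisely this measure-theoretic bookkeeping: branching and halting each hold only almost surely at each node, and one must verify that the joint event of ``the entire tree is finite \emph{and} every node halts'' still has probability one. Once this is set up, the halting conclusion follows by structural induction on the almost surely finite standard call tree.
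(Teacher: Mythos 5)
There is a genuine gap in your depth (and branching) argument, in the step where you claim that every $\eps$Grounded$\pi$Bot call with starting index $s \geq T^*$ is a leaf. That would require $r_s < \eps^*$ for \emph{all} $s \geq T^*$, but $T^* = \min\{t : r_t < \eps^*\}$ only guarantees $r_{T^*} < \eps^*$ at the single index $T^*$. The reason this matters is precisely property (ii): $p_i$ can advance the starting index by up to $c$ in a single step, so it may be called with some index $\sigma$ and make a standard call with index $\sigma + c > T^*$. A bot receiving a starting index strictly above $T^*$ then computes a fresh $T$ from the tail $(r_m)_{m \geq \sigma + c}$, which can be arbitrarily large, so the subtree below it is not controlled by $T^*$ at all. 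Concretely, if $r_0 < \eps^*$ but $r_1, \ldots, r_{99} \geq \eps^*$ and the next small value is at index $100$, then $T^* = 0$, yet a call $p_i \to p_j$ with index $c = 5$ yields a bot subtree of depth roughly $95$. Your branching bound (``at most $T^* \cdot n$'') fails for the same reason, since a bot called at index $s > T^*$ can execute far more than $T^*$ time steps.

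The fix, which is what the paper does, is to replace ``a single index with $r_t < \eps^*$'' by ``a run of $2c$ consecutive indices all below $\eps^*$.'' Since $p_i$ advances $s$ by at most $c$ and bots advance $s$ by exactly $1$, no path of standard calls can jump entirely over such a run; once a call lands inside it, every program (bot or $p_i$'s bot-children) halts immediately. Because a run of length $2c$ has positive probability $(\eps^*)^{2c}$, its first occurrence is almost surely finite, and one can build a strictly decreasing, a.s.\ finite rank function out of its first-occurrence time (the paper's $D((r_m))$, with an even/odd offset to account for $p_i$'s possible $0$-truncated calls). Your framing in terms of $T^*$ is essentially the $c = 0$ case of this; once $p_i$ is allowed to skip ahead, you need the wider window. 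Your treatment of the measure-theoretic bookkeeping via condition (v) is sketched at the right level of detail and matches the paper's use of its Lemma on decreasing ranks, but the core combinatorial bound needs the run-of-$2c$ argument rather than the first-crossing argument.
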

 \begin{proof}
    See \Cref{pf:correlHaltingExt}, p.\ \pageref{pf:correlHaltingExt}.
\end{proof}

\subsubsection{Expected runtime}
By default, $\eps$Grounded$\pi$Bots have infinite expected runtime for small $\eps$, despite halting with probability 1. This is due to an exponential blow-up of the number of simulations, with the number of simulations increasing by a factor of at least $n$ at each level of simulation. However, these are mostly duplicates of each other -- there are only $n$ distinct simulations at each `time step' (i.e, with each possible pair of input sequences).
Thus, we may use \emph{memoization} \citep[][Exercise 3.27]{michie1968memo,norvig1991techniques,abelson1996structure} to circumvent this problem -- if we store the results of computations of $\apply^*$ calls, we can avoid redoing them, and obtain finite expected runtime.

\begin{restatable}{thm}{expectedRuntimeCorrel}\label{thm:expected_runtime_correl}
Suppose that program $p_i$ is a correlated $\eps_i$Grounded$\pi_i$Bot for $i = 1,\ldots n$. Then if at least two $p_i$ have $\eps_i<1/2$, every $p_i$ with $\eps_i < 1$ has infinite expected runtime against $p_{-i}$. However, with memoization, the expected runtime is finite and polynomial in $\frac{1}{\min_i \eps_i}$, provided that for all $i$, $\pi_i(\bm{a}^{1:t})$ runs in polynomial time in $t$.
\end{restatable}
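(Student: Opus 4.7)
The plan is to prove the lower bound (infinite expected runtime without memoization) and the upper bound (polynomial expected runtime with memoization) with essentially separate arguments. The lower bound rests on counting the self-call subtree---the subtree of $\apply^*$ calls consisting of calls from $p_i$ to $p_i$---which has a deterministic structure. The upper bound pivots on the observation that memoization caps the number of distinct $(p_j, k)$ computations in terms of $T_{\min} = \min\{t : r_t < \eps_{\min}\}$, where $\eps_{\min} = \min_i \eps_i$.

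For the lower bound, I observe that when $p_i$ is called with sequence starting at $k$ and time step $T$, the inner loop spawns $T$ self-calls at shifts $k+1, k+2, \ldots, k+T$, and the self-call at shift $k+T+1-t$ has time step exactly $t - 1$ (the intermediate $r_{k+T+1-t}, \ldots, r_{k+T-1}$ lie in $[\eps_i, 1]$ by the parent's conditioning, while $r_{k+T} < \eps_i$). The recurrence $M(T) = 1 + \sum_{s=0}^{T-1} M(s)$ yields $M(T) = 2^T$, so $p_i$'s runtime is at least $2^{T_i}$ and $\Ex[2^{T_i}] = \eps_i \sum_{T} (2(1 - \eps_i))^T$ diverges whenever $\eps_i \leq 1/2$. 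This handles every $p_i$ with $\eps_i < 1/2$. For $p_i$ with $\eps_i \geq 1/2$, I invoke the existence of a second program $p_j$ with $\eps_j < 1/2$: with probability $1 - \eps_i > 0$ we have $T_i \geq 1$, in which case $p_i$ spawns a sub-call to $p_j$ whose time step $T_j'$, conditional on the parent's trajectory, stochastically dominates a shifted $\Geo(\eps_j)$ random variable (because $(r_m)$ beyond $T_i$ is unconditionally iid uniform), giving $\Ex[2^{T_j'}] = \infty$ and hence infinite expected runtime for $p_i$.

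For the upper bound, I show by induction on call depth that every $\apply^*$ call $(p_j, k)$ in the computation satisfies $k \leq T_{\min}$: the root has $k = 0$; a call $(p_j, k)$ with $k \leq T_{\min}$ has time step $T_j' \leq T_{\min} - k$ (since $r_{T_{\min}} < \eps_{\min} \leq \eps_j$), so its spawned sub-calls have shifts $k + T_j' + 1 - t \in [k+1, k+T_j'] \subseteq [1, T_{\min}]$. With memoization keyed on $(p_j, k)$, at most $n(T_{\min} + 1)$ distinct fresh computations occur; each takes $O(n T_{\min} + \mathrm{poly}(T_{\min}))$ time by the polynomial-time hypothesis on $\pi_j$, so the total runtime is polynomial in $n$ and $T_{\min}$. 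Since $T_{\min} \sim \Geo(\eps_{\min})$, all moments $\Ex[T_{\min}^k]$ are polynomial in $1/\eps_{\min}$, yielding the required polynomial expected runtime. The main obstacle I anticipate is the $\eps_i \geq 1/2$ case of the lower bound, where $p_i$'s own self-calls do not suffice; the resolution is the conditional-distribution argument above, which works because $(r_m)$ beyond the parent's time step remains iid uniform after conditioning.
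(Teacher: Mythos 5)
Your proposal is correct. The memoization upper bound is essentially the same as the paper's: both show by induction that every $\apply^*$ call has shift in $\{0,\ldots,T_{\min}\}$ where $T_{\min}=\min\{t:r_t<\min_i\eps_i\}$, bound the number of distinct memoized computations by $O(nT_{\min})$, and use moments of the geometric $T_{\min}$.

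The lower bound, however, takes a genuinely different and cleaner route. The paper picks two programs $p_j,p_i$ with $\eps_j\leq\eps_i<1/2$, defines $f_i(t)$ as the expected number of $\apply^*$ calls of $p_i$ conditional on $T_i=t$, and derives $f_i(t)\geq f_i(t-1)+f_j(t-1)\geq 2f_i(t-1)$; the step $f_j\geq f_i$ requires a containment argument between the two programs' call trees, which is where the paper genuinely uses the second small-$\eps$ program. You instead count only the \emph{self-call} subtree of a single program: a $p_i$ at time step $T$ spawns $T$ self-calls whose time steps are exactly $0,1,\ldots,T-1$, so the subtree size obeys $M(T)=1+\sum_{s=0}^{T-1}M(s)=2^T$ and $\E{2^{T_i}}=\eps_i\sum_t(2(1-\eps_i))^t=\infty$ for $\eps_i\leq 1/2$. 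This is more self-contained (no cross-program call-tree comparison is needed for this step) and handles the boundary $\eps_i=1/2$ directly. Both you and the paper then transfer infinite expected runtime to the remaining programs via the shift-$1$ sub-call to a small-$\eps$ program.

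One wording in that last step should be fixed. You claim that $T_j'$ of the shift-$1$ sub-call, ``conditional on the parent's trajectory, stochastically dominates a shifted $\Geo(\eps_j)$''. This is not quite right if one conditions on $T_i=t$: the event $r_t<\eps_j$ (probability $\eps_j/\eps_i$ given $r_t<\eps_i$) pins $T_j'=t-1$, and one can check that this point mass destroys dominance over $t-1+\Geo(\eps_j)$. You do not need the per-trajectory claim: condition only on the event $\{r_0\geq\eps_i\}$ of probability $1-\eps_i$; under this conditioning $(r_m)_{m\geq1}$ is still i.i.d.\ uniform, so $T_j'\sim\Geo(\eps_j)$ exactly and the shift-$1$ sub-call alone contributes $\geq\E{2^{T_j'}}=\infty$ to $p_i$'s expected runtime. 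Replacing the dominance sentence with this weaker conditioning closes the only soft spot.
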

\begin{proof}
    See \Cref{pf:expectedRuntimeCorrel}, p.\ \pageref{pf:expectedRuntimeCorrel}.
\end{proof}

\subsubsection{Repeated games with correlating signals and screening}

Our results will come from analogy to repeated games. We will first need to define a slight variant on repeated games. These will have two new features. \emph{Screening}: we allow each player at each time step to `screen' their action, thereby hiding the action from the other players. This is analogous to actions being `screened' from the $\eps$Grounded$\pi$Bots by the $\apply^*$ calls when they are based on private randomness. \emph{Correlating signals:} We assume that at each time step, players all receive the same $U[0,1]$ random variable.
We now define screened histories and policies. %

    Given a game $G$ with $n$ players, define a \emph{screened history} of length $t$ to be an element $\bm{\alpha}^{1:t} \in (\prod_i (\{R_i\}\cup \mathcal{A}_i))^t$ (where $\alpha_i^k$ then denotes the action of player $i$ on time step $k$). Say that a screened history is \emph{compatible} with an \emph{unscreened history} in $(\prod_i \cup \mathcal{A}_i)^t$ if they agree up to screening (i.e., they agree on all elements that are not $R_i$ (for some $i$) in the screened history). That is, screened history $\bm{\alpha}^{1:t} \in (\prod_i (\{R_i\}\cup \mathcal{A}_i))^t$ is compatible with unscreened history $\bm{\beta}^{1:t} \in (\prod_i (\mathcal{A}_i))^t$ iff for all $j$, $t'$, whenever $\alpha_j^{t'} \neq R_j$ we have $\alpha_j^{t'}=\beta_j^{t'}$.
    
    Similarly, a \emph{screened policy} for player $j$ is a map from, for $t=0,1,
    \ldots$, screened histories of length $t$ and correlating signals $q \in [0,1]$ to $\Delta(\mathcal{A}_j \cup \{R_j\})$. An \emph{unscreened policy} in a repeated game with correlation and screening is the same, but instead mapping to $\Delta(\mathcal{A}_j)$. Say that a screened policy  $\pi_i^*$ is \emph{compatible} with an unscreened policy $\pi_i$ if they agree up to screening -- i.e., for each action $a_j \in \mathcal{A}_j$, each history $\bm{\alpha}^{1:t}$, and each $q$, $\pi_i^*$ assigns no more probability than does $\pi_i$ to action $a_j$. That is,
    $\pi_i^*(a_j\mid \bm{\alpha}^{1:t};q) \leq \pi_i(a_j\mid \bm{\alpha}^{1:t};q)$, writing $\pi_i(a_j\mid \bm{\alpha}^{1:t};q)$ for the probability mass given by $\pi_i(\bm{\alpha}^{1:t};q)$ to $a_j$.

    Then, given screened and unscreened policies $\pi_i^*$ and $\pi_i$ for player $i$, define their joint distribution by $(\tau_i,\tau_i^*)((a_j,a_j')\mid \bm{\alpha}^{1:t};q) \defeq \tau_i^*(a_j\mid \bm{\alpha}^{1:t};q)$ if $a'_j=a_j$, and $\tau_i(a_j\mid \bm{\alpha}^{1:t};q)-\tau_i^*(a_j\mid \bm{\alpha}^{1:t};q)$ if $a'_j=R_j$, and 0 otherwise. Given a repeated game with correlation and screening, and screened and unscreened policy profiles $\bm{\tau}^*$ and $\bm{\tau}$ respectively, the probability of the screened and unscreened histories at time $t$ (given that the game does not end before then) being equal to (compatible) histories $\bm{\alpha}^{1:t}$ and $\bm{\beta}^{1:t}$ is, where $q_1,q_2,\ldots \simiid U[0,1]$, defined as 
    $\prod_{k = 1}^t\prob{\bm{\tau}^*(\bm{\alpha}^{1:k-1};q_k) = \bm{\alpha}^{k}\cap \bm{\tau}(\bm{\alpha}^{1:k-1};q_k) = \bm{\beta}^{k}} \defeq \prod_{k = 1}^t\E{\prod_{i=1}^n (\tau_i,\tau_i^*)((\beta_i^k,\alpha_i^k)\mid\bm{\alpha}^{1:k-1};q_k)}$. (Thus, we assume that given correlating signals, samples from policies are independent.)

    The history length is independent of the actions taken, with, for \emph{discount factor} $1-\eps$, probability $(1-\eps)^{t-1}\eps$ of the final history having length $t$. This therefore defines a distribution over the final histories, and hence over the final unscreened history.
    
    Given that the game ends after action history $\bm{a}^{1:t}$, player $i$ receives utility $\sum_{t'=1}^t u_i(a_1^{t'},\ldots a_n^{t'})$. The \emph{expected utility} for player $i$ is then the expectation of the utility for player $i$, taken with respect to the distribution over unscreened histories.

    Screened and unscreened policy profiles $\bm{\tau^*}, \bm{\tau}$ are an \emph{equilibrium} if for all $i$, $(\tau_i^*,\tau_i)$ maximizes expected utility against $(\tau_{-i}^*,\tau_{-i})$ among pairs of compatible screened and unscreened policies for player $i$.

Given a game $G$, we write $G_\eps$ for the corresponding correlated repeated game with screening with discount factor $1-\eps$. We obtain the following folk theorem, as for usual repeated games. The idea is to consider policies that punish screening, such that it won't happen on the path of play.

\begin{restatable}{thm}{repFolkThm}\label{thm:repFolkThm}
    Suppose that payoffs $v_{1:n}$ for game $G$ are feasible with correlation and strictly individually rational. Then for $\eps$ sufficiently small, there exists an equilibrium of $G_\eps$ in which player $i$ receives expected utility $\frac{v_i}{\eps}$, for all $i$.
\end{restatable}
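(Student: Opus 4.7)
The plan is a standard folk-theorem construction with grim-trigger punishment, adapted to account for screening. Since $v_{1:n}$ is feasible with correlation, fix a joint distribution $\mu$ over $\prod_i \mathcal{A}_i$ with $\E_{(A_{1:n})\sim \mu}[u_i(A_{1:n})] = v_i$ for every $i$, and a measurable map $f\colon [0,1]\to \prod_i\mathcal{A}_i$ such that $f(Q)\sim \mu$ when $Q\sim U[0,1]$. Write $f_i$ for the $i$th coordinate of $f$. Since $v_{1:n}$ is strictly individually rational, pick for each $i$ a product strategy profile $s_{-i}^{\min}\in\prod_{j\neq i}\Delta(\mathcal{A}_j)$ of the other players witnessing the minimax inequality, with $\gamma_i := v_i - \max_{a_i\in \mathcal{A}_i}u_i(s_{-i}^{\min},a_i) > 0$. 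Let $M$ be a uniform bound on $|u_i|$.

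Next I would define the equilibrium. On the equilibrium path, all players set the unscreened action of player $i$ at time $t$ to $f_i(q_t)$, and the screened action also to $f_i(q_t)$ (no screening); so $\tau_i^* = \tau_i$ on path and both assign mass $1$ to $f_i(q_t)$. A history is \emph{clean} if for every previous time $t'$ and every $j$, the screened entry $\alpha_j^{t'}$ is equal to $f_j(q_{t'})$ (in particular never $R_j$ and never off-path). At the first clean-violation, identify the unique deviator $j^*$ (the smallest index $j$ whose $\alpha_j^{t'}$ differs from $f_j(q_{t'})$, including by being $R_j$) and switch, forever after, to $s_{-j^*}^{\min}$ played by players $-j^*$ (ignoring $q_t$); player $j^*$ in the punishment phase can use any best response (say a fixed action). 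In the punishment phase we again set screened and unscreened policies equal, so no further screening is triggered.

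Then I would verify equilibrium. By construction, expected per-period payoff on path is $v_i$ and the horizon is $\Geo(\eps)$, so on-path expected utility is $v_i/\eps$. Consider any one-shot deviation by player $i$ in period $t$, whether by playing an off-path action or by screening (choosing $R_i$ with positive probability); by the one-shot-deviation principle for discounted games this suffices to rule out all deviations. The immediate gain is at most $2M$. The deviation causes $i$ to be the identified deviator from period $t+1$ onward (using that screening is detected), so the continuation utility from $t+1$ on drops from $v_i\cdot\frac{1-\eps}{\eps}$ to at most $(v_i-\gamma_i)\cdot\frac{1-\eps}{\eps}$, a loss of $\gamma_i(1-\eps)/\eps$. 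For $\eps$ small enough that $\gamma_i(1-\eps)/\eps > 2M$ for every $i$, no deviation is profitable. Finally, note that compatibility of the equilibrium screened and unscreened policies is immediate since they coincide, and $(\tau_i,\tau_i^*)$ has the required form.

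The main obstacle I expect is the careful bookkeeping around screening: the definition of $(\tau_i,\tau_i^*)$ combines a screened and an unscreened policy, and one has to be sure that (i) triggering punishment on a screening event $R_i$ is consistent with the compatibility constraint $\tau_i^*(a_j\mid\cdot)\le \tau_i(a_j\mid\cdot)$, and (ii) a player contemplating a screening deviation genuinely learns nothing that lets them do better than the best pure deviation, so that the one-shot bound above applies. Both are handled by making the punishment deterministic in the identity of the deviator (so that the screened action reveals the deviation as reliably as an unscreened off-path action) and by noting that on the path $\tau_i=\tau_i^*$, so any increase in $R_i$-mass is an unambiguous unilateral deviation. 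Everything else is the routine folk-theorem computation above.
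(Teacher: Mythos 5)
Your proof is correct and takes essentially the same route as the paper: exploit the correlating signal to realize the correlated target distribution on path, grim-trigger with an independent (product) minimax punishment against the first identified deviator, and observe that because no $R_i$ ever appears on path, screening is itself an observable deviation. Your one-shot-deviation verification is a harmless substitute for the paper's direct tail estimate, and the tie-breaking among simultaneous deviators is irrelevant since only unilateral deviations need be ruled out.

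One detail is worth flagging. During punishment you set $\tau_i^* = \tau_i$ (no screening); the paper instead deliberately sets $\tau_i^* \equiv R_i$ for the punishers. For the theorem as stated this is immaterial — both pairs are compatible, and both yield equilibria of $G_\eps$, since punishment is triggered by the screened history. But the paper's choice is load-bearing downstream: the proof of \Cref{cor:correl_folk_thm} pushes exactly the equilibrium constructed here back into a correlated $\eps$Grounded$\pi$Bot equilibrium via \Cref{lemma:eGpB_policy_correl} and \Cref{thm:correl_utils}, and that correspondence only covers policy pairs whose screened component is $R_i$ precisely at those histories where the unscreened component randomizes independently of $q$ (an $\eps$Grounded$\pi$Bot that mixes using its private sequence $x_0$ has, by \Cref{lemma:eGpB_policy_correl}, associated screened policy $R_i$). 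Your punishers mix over a product minimax profile independently of $q$, so any implementing $\eps$Grounded$\pi$Bot would be screened in the punishment phase whether you declare it or not; an unscreened mixed punishment is not in the image of the correspondence. Nor can one dodge this by deterministically reading the minimax off $q_t$, since the deviator also observes $q_t$ and could then best-respond pointwise, beating the minimax value. So to make the construction usable for the corollary, you would want to turn on screening during punishment exactly as the paper does (the paper even notes explicitly that the construction ``has screening whenever players randomise independently of the $q$s'').
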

\begin{proof}
    See \Cref{pf:repFolkThm}, p.\ \pageref{pf:repFolkThm}.
\end{proof}

\subsubsection{Relating the program game to the repeated game}

We now translate from programs to policies in correlated repeated games with screening. Essentially, the policy of a program at time $t$ will be determined by its action distribution conditional on $T+1=t$, given the simulated actions of opponents at earlier time steps. The screening will be determined by whether the program accesses its private random sequence.

    Given programs $p_1,\ldots p_n$ for game $G$, we define the \emph{associated screened and unscreened policies} for the repeated game $G_\eps$ as follows. The screened policy is, where we use $\omega$ to denote the empty history: $\tau_i^*(\omega;q) \defeq \Ex[\apply^*(p_i,(p_{-i},(r_m),(x_m)))\mid r_0=\eps q]$ and $\tau_i^*(\bm{\alpha}^{1:t};q) \defeq \Ex[\apply^*(p_i,(p_{-i},(r_m),(x_m)))\mid r_0 = \eps + (1-\eps)q,\ \bm{a}^{1:t}=\bm{\alpha}^{1:t},T=t]$, where $T$ and $\bm{a}^{1:t}$ are as defined in Algorithm \ref{algorithm:pi-bot-correl}, i.e., $T$ is minimal such that $r_T<\eps$ and $a_j^k \allowbreak\defeq \apply^*(p_j,(p_{-j},\allowbreak(r_m)_{m\geq T+1 -k},\allowbreak(x_m)_{m\geq 0}))$.
        The unscreened policy is the same but with $\apply$ instead of $\apply^*$ calls: $\tau_i(\omega;q) \defeq \E{\apply(p_i,(p_{-i},(r_m),(x_m)))\mid r_0=\eps q}$ and $\tau_i(\bm{\alpha}^{1:t};q) \defeq \Ex[\apply(p_i,(p_{-i},(r_m),(x_m)))\mid r_0 =\eps + (1-\eps)q,\ \bm{a}^{1:t}=\bm{\alpha}^{1:t},T=t].$ If for some $\bm{\alpha}^{1:t}$ the event $
        {\bm{a}^{1:t}=\bm{\alpha}^{1:t}
        }$ occurs with probability 0, we simply define the screened and unscreened policies for $G_\eps$ arbitrarily at $\bm{\alpha}^{1:t}$. (For our results, it does not matter how the screened and unscreened policies are defined at histories with probability 0, since these histories will also have probability 0 in the associated repeated game with the associated policies.)
Note that these form compatible screened and unscreened policies, where they are defined.

For $\eps$Grounded$\pi_i$Bots, the associated repeated game policy then corresponds to the policy $\pi_i$:
\begin{restatable}{lemma}{eGpBPolicyCorrel}\label{lemma:eGpB_policy_correl}
    If $p_j$ is a correlated $\eps$Grounded$\pi_j$Bot, then regardless of programs $p_{-j}$, its associated repeated game policies are, where defined, given by screened policy: $\tau_j^*(\omega;q) = \pi_j(\omega;\eps q, x_0)$ unless $\pi_j(\omega;\eps q, x_0)$ accesses $x_0$, in which case $\tau_j^*(\omega;q) = R_j$, and
        $\tau_j^*(\bm{\alpha}^{1:t};q) = \pi_j(\bm{\alpha}^{1:t};\eps+(1-\eps) q, x_0)$ unless this accesses $x_0$, in which case $\tau_j^*(\bm{\alpha}^{1:t};q)= R_j$. The unscreened policy is: $\tau_j(\omega;q) = \E{\pi_j(\omega;\eps q, x_0)}$ and $\tau_j(\bm{\alpha}^{1:t};q) = \E{\pi_j(\bm{\alpha}^{1:t}, \eps+(1-\eps)q,x_0}.$
\end{restatable}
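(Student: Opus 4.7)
The plan is a direct unfolding of Algorithm~1 together with the $\apply^*$ semantics, split into the empty-history and non-empty-history cases. In both cases, I would condition on the events specified in the definitions of $\tau_j^*$ and $\tau_j$, read off what the algorithm returns, and then apply the screening/unscreening semantics.

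First, for the empty history $\omega$, fix $q \in [0,1]$ and condition on $r_0 = \eps q$. Since $\eps q < \eps$, we have $T \defeq \min\{t : r_t < \eps\} = 0$, so Algorithm~1 skips the for-loop and $p_j$'s raw output is $\pi_j(\omega; r_0, x_0) = \pi_j(\omega; \eps q, x_0)$. For the unscreened policy, $\apply$ returns this value verbatim; taking the conditional distribution over $x_0 \sim U[0,1]$ gives $\tau_j(\omega; q) = \E{\pi_j(\omega; \eps q, x_0)}$, as claimed. For the screened policy, the outer $\apply^*$ returns $R_j$ exactly when $p_j$'s execution accesses $(x_m)$ outside further $\apply^*$ calls, and returns $\pi_j(\omega; \eps q, x_0)$ otherwise.

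Second, for a non-empty history $\bm{\alpha}^{1:t}$ at which the policies are defined (i.e., $\bm{a}^{1:t} = \bm{\alpha}^{1:t}$ has positive probability), condition on $r_0 = \eps + (1-\eps)q$, $T = t$, and $\bm{a}^{1:t} = \bm{\alpha}^{1:t}$. Then $r_0 \geq \eps$ is consistent with $T \geq 1$, and the remaining conditions fix the inputs to $\pi_j$ at the final line of Algorithm~1. The raw output is thus $\pi_j(\bm{\alpha}^{1:t}; r_0, x_0) = \pi_j(\bm{\alpha}^{1:t}; \eps + (1-\eps)q, x_0)$; the same $\apply$-versus-$\apply^*$ analysis as in the empty-history case yields the claimed unscreened and screened formulas.

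The one observation needed for the screening clauses is that $p_j$'s use of $(x_m)$ is entirely channelled through (i) passing $(x_m)$ as an input sequence to the inner $\apply^*$ calls simulating opponents, which by definition do not count as accesses outside further $\apply^*$ calls, and (ii) passing $x_0$ as the final argument of $\pi_j$. Consequently, the event ``$p_j$'s execution accesses $(x_m)$ outside further $\apply^*$ calls'' reduces to ``$\pi_j$ accesses $x_0$,'' matching the condition in the lemma statement. I expect this bookkeeping -- together with the standard care around zero-probability events and the caveat that the $\apply^*$ call may fail to halt for arbitrary $p_{-j}$ -- to be the only (mild) obstacle; everything else is routine unpacking of the definitions.
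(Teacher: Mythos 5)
Your proposal is correct and takes essentially the same approach as the paper, which simply asserts that the lemma ``follows straightforwardly from the definitions''; you carry out that routine unfolding carefully, correctly observing that under the conditioning events $T$ is forced to $0$ (resp.\ $t$), that the final call $\pi_j(\bm{a}^{1:T};r_0,x_0)$ is the only place $(x_m)$ can be touched outside inner $\apply^*$ calls, and that passing $(x_m)$ onward in the simulation loop does not count as an access.
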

\begin{proof}
    See \Cref{pf:eGpBPolicyCorrel}, p.\ \pageref{pf:eGpBPolicyCorrel}.
\end{proof}

If all but one program is an $\eps$Grounded$\pi$Bot, all players obtain the same utility as their repeated game policies do in $G_\eps$, up to a factor of $\eps$:

\begin{restatable}{thm}{correlUtils}\label{thm:correl_utils}
Let $G$ be an $n$-player correlated program game. Let $p_{-i}$ be correlated $\eps$Grounded$\pi$Bots with policies $\pi_{-i}$, and $p_i$ be any program for player $i$ such that programs $p_1,\ldots, p_n$ almost surely halt against each other. For each $j$, let $\tau_j$ and $\tau_j^*$ be the associated $G_\eps$ policies for player $j$. Then $\E{\bm{u}\mid \p} = \eps\Ex_{G_\eps}[\bm{u}\mid \bm{\tau},\bm{\tau^*}]$.
\end{restatable}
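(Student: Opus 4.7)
The plan is to decompose the program-game expected utility by conditioning on the outer $T$, and match it term by term to a tail-sum expansion of the $G_\eps$ expected utility. Since $P(T=k-1)=\eps(1-\eps)^{k-1}$ in the program game, and since in $G_\eps$
\[
\eps \,\Ex_{G_\eps}[u_i] \;=\; \sum_{k\geq 1} \eps(1-\eps)^{k-1}\,\Ex_{G_\eps}[u_i(\bm{\beta}^k)]
\]
(from $\Ex\bigl[\sum_{t=1}^{T_{\mathrm{rep}}} u_i(\bm{\beta}^t)\bigr] = \sum_k P(T_{\mathrm{rep}} \geq k)\,\Ex[u_i(\bm{\beta}^k)]$, using that the random stopping time is independent of the actions), the target reduces to showing, for every $k\geq 1$, the identity $\Ex[u_i(A_{1:n}) \mid T = k-1] = \Ex_{G_\eps}[u_i(\bm{\beta}^k)]$.

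To establish this I would prove by induction on the depth $t\leq k-1$ that, conditional on $\{T=k-1\}$, the joint distribution of the depth-$\leq t$ simulated actions $\bm{a}^{1:t}$ in the program game coincides with the joint distribution of the screened history $\bm{\alpha}^{1:t}$ in $G_\eps$ under the associated policies. The inductive step rests on three facts. First, the depth-$t$ simulation receives the shifted sequence $(r_m)_{m\geq k-t}$, so from its perspective $T' = t-1$ and its own sub-simulations at depths $1,\ldots,t-1$ literally agree (as $\apply^*$ calls with identical arguments) with the outer simulations at those depths. Second, conditional on $\{T = k-1\}$, the scalar $r_{k-t}$ is distributed as $\eps + (1-\eps)U[0,1]$ for $t\geq 2$ and as $\eps\cdot U[0,1]$ for $t=1$, so it plays the role of the correlating signal $q_t$ in $G_\eps$. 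Third, the depth-$t$ output for player $j$, given history $\bm{a}^{1:t-1}$ and $r_{k-t}$, has exactly distribution $\tau_j^*(\bm{a}^{1:t-1};q_t)$ -- by \Cref{lemma:eGpB_policy_correl} when $j\neq i$, and by the very definition of $\tau_i^*$ when $j=i$. Because $\apply^*$ screens the passed-in private sequence, the depth-$t$ outputs are deterministic functions of $(\bm{a}^{1:t-1}, r_{k-t})$, so different outer programs (using their own $(x_m^j)$'s in their $\apply^*$ calls) see the same $\bm{a}^{1:t}$; this also makes the depth-$t$ actions trivially conditionally independent across $j$, matching the independent sampling in $G_\eps$.

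The top-level step is analogous but uses $\apply$ rather than $\apply^*$: given $\bm{a}^{1:k-1}$ and $r_0$ (which supplies $q_k$ in the same way), each top-level output $A_j$ has distribution $\tau_j(\bm{a}^{1:k-1};q_k)$ -- via \Cref{lemma:eGpB_policy_correl} for $j\neq i$ and the definition of $\tau_i$ for $j=i$ -- and the $A_j$ are conditionally independent because each depends only on its own independent private sequence $(x_m^j)$, matching the independent sampling from $\tau_j$ in $G_\eps$. I expect the main obstacle to be bookkeeping: keeping straight the conditional distributions of the various $r_m$'s under $\{T=k-1\}$, correctly aligning the two cases in the definitions of $\tau_j^*$ and $\tau_j$ (empty versus non-empty history, corresponding to $t=1$ versus $t\geq 2$, and to $k=1$ versus $k\geq 2$), and handling histories of probability zero -- at which the associated policies are defined arbitrarily, but which contribute nothing to the expected utility.
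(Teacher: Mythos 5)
Your proposal is correct and takes essentially the same approach as the paper: the paper packages your induction on the simulation depth as a separate lemma (that conditional on $T\geq t$, $\bm{a}^{1:t}$ is distributed as the $G_\eps$ screened history, independently of $T$ and of the private sequences), and expresses your tail-sum identity as the observation that the expected last-turn utility in a repeated game with stopping probability $\eps$ equals $\eps$ times the total expected utility. One small imprecision: you assert that all depth-$t$ outputs are deterministic functions of $(\bm{a}^{1:t-1}, r_{k-t})$; this is true for the $n-1$ correlated $\eps$Grounded$\pi$Bots (by \Cref{lemma:eGpB_policy_correl}), but the $\apply^*$ call to $p_i$ itself may depend on more of the shifted sequence than $r_{k-t}$ alone. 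This does not break the argument -- the $\apply^*$ screening still makes $\bm{a}^{1:t}$ independent of the private sequences, and conditional independence across players follows already from $n-1$ of the $n$ outputs being deterministic given the history and correlating signal (as the paper notes) -- but it is worth flagging when fleshing out the induction.
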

\begin{proof}
    See \Cref{pf:correlUtils}, p.\ \pageref{pf:correlUtils}.
\end{proof}

This gives us the following folk theorem:

\begin{restatable}{cor}{correlFolkThm}\label{cor:correl_folk_thm}
Suppose that $G$ is an $n$-player program game. Then for any feasible (with correlation), strictly individually rational $v_{1:n}$ there exists a program equilibrium of correlated $\eps$Grounded$\pi$Bots (with a common choice of $\eps$).
\end{restatable}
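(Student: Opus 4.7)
The plan is to combine the repeated-game folk theorem (\Cref{thm:repFolkThm}) with the program-to-repeated-game correspondence from \Cref{lemma:eGpB_policy_correl} and \Cref{thm:correl_utils}. First, apply \Cref{thm:repFolkThm} to obtain, for $\eps$ sufficiently small, an equilibrium $(\bm\tau^*, \bm\tau)$ of the correlated repeated game $G_\eps$ giving each player $i$ expected utility $v_i/\eps$.

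The main step is to construct correlated $\eps$Grounded$\pi_i$Bots whose associated screened and unscreened policies (per \Cref{lemma:eGpB_policy_correl}) coincide with $(\tau_i^*, \tau_i)$. On input $(h; r_0, x_0)$, the policy $\pi_i$ first recovers the correlating signal $q$ from $r_0$ via the rescaling in \Cref{lemma:eGpB_policy_correl}. It then extracts from $q$ an independent uniform $u_i$, using bit positions disjoint from those used by the other $\pi_j$'s so that the players enjoy mutually independent private randomness given $q$. Next it uses $u_i$ to sample a pair $(a^*, a) \in (\mathcal{A}_i \cup \{R_i\}) \times \mathcal{A}_i$ from the joint distribution defined earlier in the paper from $(\tau_i^*, \tau_i)(\cdot \mid h; q)$. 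Finally it outputs $a$, accessing $x_0$ if and only if $a^* = R_i$, which causes $\apply^*$ to return $R_i$ precisely when $\tau_i^*$ specifies.

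Given this construction, the remaining steps are short. Halting against each other follows from \Cref{thm:correl_halting}. By \Cref{thm:correl_utils}, each player's program-game expected utility equals $\eps$ times the $G_\eps$ utility of $(\bm\tau^*, \bm\tau)$, namely $v_i$. For the equilibrium property, any deviation $p_i'$ that almost surely halts against $p_{-i}$ induces compatible associated policies $(\tau_i^{*'}, \tau_i')$; applying \Cref{thm:correl_utils} again, the deviator's expected utility equals $\eps$ times the $G_\eps$ utility of $(\tau_i^{*'}, \tau_i')$ against $(\bm\tau_{-i}^*, \bm\tau_{-i})$, which is at most $v_i$ by the equilibrium property of $(\bm\tau^*, \bm\tau)$ in $G_\eps$.

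The main obstacle is the construction step: the folk-theorem policies must be realised by computable $\pi_i$'s that both correctly couple screened and unscreened components via selective $x_0$-access and that simulate independent private randomness across players using only the shared correlating signal. The strategy of sampling from the explicit joint distribution and partitioning $q$'s bits among players handles both requirements simultaneously, and the compatibility between the resulting $\tau_i^*$ and $\tau_i$ is automatic from the fact that $\apply^*$ agrees with $\apply$ on runs not accessing $x_0$.
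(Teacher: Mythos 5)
There is a genuine gap in your construction step. You propose to realise the folk-theorem policies $(\tau_i^*, \tau_i)$ from \Cref{thm:repFolkThm} by having each $\pi_i$ "extract from $q$ an independent uniform $u_i$, using bit positions disjoint from those used by the other $\pi_j$'s," and then use $u_i$ to sample the pair $(a^*, a)$. But $u_i$ is a deterministic function of $q$, and $q$ (derived from the shared sequence $(r_m)$) is observed by every program, including a deviating one. Two problems follow. First, the claim that the resulting associated policies coincide with $(\tau_i^*,\tau_i)$ is false: with your construction the associated unscreened policy $\tau_i(\cdot\mid h;q) = \mathbb{E}[\apply(p_i,\cdot)\mid r_0]$ is a point mass in $q$ for every history $h$, whereas the folk-theorem punishment policies are independently mixed minimax strategies, i.e.\ non-degenerate distributions for each fixed $q$. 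Second, and more substantively, the equilibrium property can break. Off the equilibrium path the punishers play a minimax strategy profile $s_{-i,i}$; in your construction this becomes a deterministic function $s_{-i,i}(q)$ of the publicly observed $q$. A deviating program $p_i'$ can read $(r_m)$, recover $q$, and best-respond pointwise, achieving $\mathbb{E}_q[\max_{a_i} u_i(s_{-i,i}(q),a_i)]$, which by convexity of the max exceeds the minimax value $\max_{a_i}\mathbb{E}_q[u_i(s_{-i,i}(q),a_i)]$. (Matching-pennies-style punishments are the cleanest counterexample: a bit-of-$q$ punishment is perfectly predictable and thus worthless.) So the chain $\text{repFolkThm} \to \text{construction} \to \text{correl\_utils}$ does not go through with your $\pi_i$.

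The correct fix, which is what the paper does (and states only tersely), is not to partition bits of $q$, but to exploit the fact that the program's policy $\pi_i(\bm a^{1:T};r_0,x_0)$ has access to a genuinely private uniform $x_0$, and that \emph{accessing} $x_0$ is exactly what triggers screening via $\apply^*$. Concretely: when the target unscreened policy $\tau_i(\cdot\mid h;q)$ is a deterministic function of $q$ (as on the equilibrium path of \Cref{thm:repFolkThm}), $\pi_i$ computes that action from $r_0$ alone without touching $x_0$; when $\tau_i(\cdot\mid h;q)$ randomises beyond $q$ (the punishment phase), $\pi_i$ samples the action using $x_0$, which both produces the correct non-degenerate unscreened policy (since $x_0$ is a genuinely private, unobserved uniform) and causes $\apply^*$ to return $R_i$, giving the correct screened policy. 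This is why the paper restricts attention to policies "for which the screened policy is screened whenever the unscreened policy randomises independently" and notes that the proof of \Cref{thm:repFolkThm} produces exactly such policies. The rest of your argument (halting via \Cref{thm:correl_halting}, utility matching and the best-response step via two applications of \Cref{thm:correl_utils}) has the right structure and matches the paper's route; only the implementation lemma needs to be corrected as above.
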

\begin{proof}
    See \Cref{pf:correlFolkThm}, p.\ \pageref{pf:correlFolkThm}.
\end{proof}

\begin{eg}\label{eg:pirates1}
Pirates 1, 2, 3 have plundered 30 gold pieces. Each pirate can Cooperate ($C$) with the other pirates, Defect ($D$) and try to run off with more than their fair share of the gold pieces, or hire a Lawyer ($L$) to enforce a fair split.
If all three pirates Cooperate, they each receive 10 gold pieces. If one or more pirates defect, and the rest cooperate, then the first up-to-two pirates in numeric order to defect receive 14 gold pieces, and all other pirates receive 0. (The defecting pirates clumsily drop the rest of the gold pieces in the sea.)
If one pirate hires a lawyer, and the other pirates cooperate, the pirate who hired the lawyer receives 9 gold pieces, and other pirates receive 10 gold pieces as usual. If at least one pirate defects, and exactly one pirate hires a lawyer, the outcome is the same as if the latter pirate cooperated (they have no money to pay the lawyer). If two pirates hire a lawyer, the lawyer enforces a fair split, and all pirates receive 9 gold pieces, regardless of what the third pirate plays.
We show the payoffs in \Cref{tab:Pirates}.

Note that $(C,C,C)$ is the unique symmetric (strictly) \emph{Pareto optimal} strategy profile (strictly Pareto optimal meaning that for any other strategy profile $(s_1,s_2,s_3)\neq (C,C,C)$, we have $u_i(s_1,s_2,s_3)<u_i(C,C,C)$ for some $i$). Can correlated $\eps$Grounded$\pi$Bots enforce a $(C,C,C)$ equilibrium, for a utility of 10 for each player? \Cref{cor:correl_folk_thm} gives that the answer is yes: The minimax utility for each player is at most 9. Hence, this payoff profile is feasible and strictly individually rational. We explicitly describe a correlated $\eps$Grounded$\pi$Bot equilibrium for this game in \Cref{app:correl_pirates}.

\begin{table}[tbh]
\small
\centering
\setlength{\extrarowheight}{0pt}
\begin{tabular}{cc|c|c|c|}
  \multicolumn{2}{c}{} & \multicolumn{3}{c}{Player 2}\\
  \multicolumn{1}{r}{Plr 3} & \multicolumn{1}{r}{Plr 1} & \multicolumn{1}{c}{$C$}  & \multicolumn{1}{c}{$D$} & \multicolumn{1}{c}{$L$}\\
\cline{3-5}
 & $C$ & $10,10,10$ & $0,14,0$ &$10,9,10$\\
$C$ & $D$ & $14,0,0$ & $14,14,0$&$14,0,0$ \\
 & $L$ & $9,10,10$  & $0,14,0$&$9,9,9$\\
\cline{3-5}
 & $C$ & $0,0,14$ & $0,14,14$ & $0,0,14$ \\
$D$ & $D$ & $14,0,14$ & $14,14,0$ & $14,0,14$ \\
 & $L$ & $0,0,14$ & $0,14,14$ & $9,9,9$ \\
\cline{3-5}
 & $C$ & $10,10,9$ & $0,14,0$ & $9,9,9$ \\
$L$ & $D$ & $14,0,0$ & $14,14,0$ & $9,9,9$ \\
 & $L$ & $9,9,9$ & $9,9,9$ & $9,9,9$ \\
\cline{3-5}
\end{tabular}
\caption{The game between the three pirates.}
\label{tab:Pirates}
\end{table}
\end{eg}
\section{Epsilon-Grounded Simulation without Shared Randomness}\label{sec:uncorrel}

We now consider uncorrelated program games, where no correlated random sequence is available to the programs, so that programs can only randomise independently.
We present our algorithm, the \emph{uncorrelated $\eps$Grounded$\pi$Bot}, in \Cref{algorithm:pi-bot-uncorrel}. It works almost identically to correlated $\eps$Grounded$\pi$Bots: it determines its time step based on the first element of the random sequence that is less than $\eps$, and then simulates the other programs at earlier time steps. %

\begin{algorithm}[tbh]
\caption{uncorrelated \eGpB}\label{algorithm:pi-bot-uncorrel}
\begin{algorithmic}
    \STATE {\bfseries Input:} Programs $p_{-i}$, sequence $(r_m)_{m=0}^\infty\in[0,1]^{\infty}$
    \STATE $T \leftarrow \min \{t:r_t <\epsilon \}$
    \FOR{$t=1$ to $T$}
        \FOR{$j=1$ to $n$}
            \STATE $a_j^t \leftarrow \apply(p_j, (p_{-j},(r_m)_{m\geq T+1-t}))$
        \ENDFOR
    \ENDFOR
    \RETURN{$\pi_i(\bm{a}^{1:T};r_{0:T})$}
\end{algorithmic}
\end{algorithm}

\paragraph{Halting Behaviour}
We obtain essentially identical halting, halting robustness, and expected runtime results as for the correlated $\pi$Bots -- the uncorrelated $\eps$Grounded$\pi$Bots halt against each other, have finite expected runtime against each other if memoization is used, and still halt if one program is replaced with a program satisfying conditions similar to those in  \Cref{thm:correl_halting_ext}. We therefore state these results formally only in \Cref{app:uncorrel_halting}.

\paragraph{No full folk theorem}

We cannot obtain the full folk theorem with the uncorrelated $\eps$Grounded$\pi$Bots. We show this as follows:

\begin{eg}\label{eg:pirates_uncorr_pi_bots}
    Recall the game between the pirates from \Cref{eg:pirates1}. Can we still obtain a $(C,C,C)$ equilibrium with uncorrelated $\eps$Grounded$\pi$Bots? Suppose that programs $p_1,p_2,p_3$ are $\eps$Grounded$\pi_i$Bots which play $C$ against one another with probability 1. We claim that this cannot be an equilibrium. To see this, consider a program $p_1'$ which looks at $r_{T+1}$ and plays $D$ if $r_{T+1}<q$, for some $q$, and otherwise plays $C$.

    Now, note that in computing the $\bm{a}^{1:t}$, the $\apply$ call tree of the $\eps$Grounded$\pi$Bots only has calls with the sequence $(r_m)$ truncated by at most $T$. Thus, $r_{T+1}$ is the same for each call in the call tree. Then, if $r_{T+1}\geq q$, which occurs with probability $1-q$, $p_i$ ($i\neq 1$) must output $C$. Moreover, the outputs of the three programs are independent. Hence, player 1 receives utility at least $\min_{s_2,s_3} u_1((1-q)C+qD, (1-q)C + qs_2, (1-q)C + q s_3)$.
    Now, when minimising the above quantity with respect to $s_2$ and $s_3$, we may assume WLOG that $s_2(C)=s_3(C) = 0$, since players 2 and 3 playing $C$ is always at least as good as other actions for player 1. Moreover, a mixture of $D$ and $L$ is always no better than both players using the better one of the two (see \Cref{app:pirates_uncorr_pi_bots} for a derivation of this). We then need only consider $s_2=s_3=L$ and $s_2=s_3=D$. %
    Taking $q=\frac{3}{4}$ and considering both possibilities, we obtain that player 1 receives utility equal to the minimum of $10\frac{3}{4}> 10$ and $10\frac{21}{32} > 10$. Thus, $(p_1,p_2,p_3)$ cannot be an equilibrium.
\end{eg}

The above example illustrates a general problem: if an opponent randomises in such a way that their actions at different time steps in the $\eps$Grounded$\pi$Bots' simulations are perfectly correlated, deviation with probability $q$ can be detected with probability at most $q$. Since actions are then uncorrelated between programs, a player may be able to benefit from deviating.

\paragraph{Characterising the equilibria}
We now characterise the equilibria that may be obtained by uncorrelated $\eps$Grounded$\pi$Bots.

\begin{restatable}{prop}{uncorrelEquilibria}\label{prop:uncorrelEquilibria}
Suppose that payoffs $v_{1:n}$ are feasible (without correlation), with $v_{1:n} = u_{1:n}(s_{1:n})$. Suppose further that for each $i$ we have $v_i \geq \min_{(S_{j}':\mathcal{A}_i\rightarrow\Delta(\mathcal{A}_j))_{j\neq i}} \max_{q\in[0,1], s_i' \in \Delta(A_i)}\allowbreak u_i\left((1-q)s_i+qs_i', \left((1-q)s_j + qS_j'\bullet s_i'\right)_{j\neq i}\right) -\lambda q$, for some $\lambda > 0$, writing $S_j'\bullet s_i'$ for $\sum_a s_i'(a)S_j'(a)$. Then there exist uncorrelated $\eps$Grounded$\pi$Bot equilibria with payoffs $v_{1:n}$.
    
\end{restatable}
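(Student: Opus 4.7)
The plan is to construct uncorrelated $\eps$Grounded$\pi_i$Bots whose joint play realises approximately $v_{1:n}$ and from which no player can profitably deviate. For each $i$, fix a punishment scheme $(S_j')_{j \neq i}$ attaining the minimum in the hypothesis. I would have $\pi_i$ behave as follows: at the base case $T = 0$, sample from $s_i$ using $r_0$; at depth $T \geq 1$, inspect the most recent simulated action profile $\bm{a}^T$, play $s_i$ if this is consistent with everyone having played their equilibrium strategies, and otherwise identify a putative deviator $j$ (via a deterministic tie-breaking rule) and play $S_i'(a_j^T)$, with $S_i'$ drawn from the scheme associated with this putative deviator. To realise the $\lambda q$ slack, $\pi_i$ also performs a small amount of extra randomised punishment using entries of $r_{0:T}$, calibrated so as to fire with zero probability when no deviation is present.

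On the equilibrium path, by induction on simulation depth, every simulated $\eps$Grounded$\pi_k$Bot plays $s_k$ and no deviation is detected at any level, so the top-level actions are independent draws from $s_{1:n}$ and each player obtains $v_i$. For the off-path analysis, fix a deviator $p_i'$ for player $i$ and write its top-level output distribution $d_i$ as $(1-q)s_i + q s_i'$, where $q \defeq \norm{d_i - s_i}_{\mathrm{TV}}$ and $s_i'$ is the normalised excess mass. By an analogue for the uncorrelated setting of the policy-extraction argument used for \Cref{thm:correl_utils} (such as that in \Cref{app:uncorrel_halting}), each $\eps$Grounded$\pi_j$Bot with $j \neq i$ detects player $i$'s deviation in its simulation tree with probability close to $q$, and conditional on detection plays $S_j'$ applied to an action drawn from $s_i'$. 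To first order, player $j$'s top-level action is therefore $(1-q)s_j + q(S_j'\bullet s_i')$, independently across $j$, and player $i$'s payoff is approximately the expression inside the hypothesis' max. By the hypothesis this is at most $v_i + \lambda q$; the extra randomised punishment then contributes a further $\approx -\lambda q$, so no deviation beats $v_i$ in the $\eps \to 0$ limit.

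The main obstacle, highlighted by \Cref{eg:pirates_uncorr_pi_bots}, is that in the uncorrelated setting a deviator can make its output perfectly correlated across an entire simulation tree (for instance by reading $r_{T+1}$, which is invariant across the shifted sequences the $\eps$Grounded$\pi$Bots pass around). Hence ``detection probability'' cannot be estimated from an empirical frequency of deviations within a single tree; instead the analysis must tie $q$ to the \emph{top-level} output distribution of $p_i'$, and the detection mechanism inside $\pi_i$ must match this faithfully irrespective of the within-tree correlation pattern chosen by the deviator. Simultaneously, the extra randomised punishment providing the $\lambda q$ slack must vanish on path quickly enough as $\eps\to 0$ that the realised equilibrium payoff actually tends to $v_i$. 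Balancing these two competing demands --- uniform control against adversarial deviators on the one hand, asymptotic optimality on the other --- is the main technical content of the proof.
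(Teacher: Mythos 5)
Your high-level plan (build $\eps$Grounded$\pi$Bots that reproduce $s_{1:n}$ on path, detect deviations in the simulation tree, punish via the $S_j'$, and use the $\lambda$ slack to absorb error) points in the right direction, and you correctly anticipate the central obstacle: in the uncorrelated setting a deviator can make its output constant across the whole simulation tree (e.g.\ depend only on $r_{T+1}$), so ``detection'' cannot be inferred from an empirical frequency within the tree. But the proposal does not supply the idea that actually resolves this, which is the crux of the paper's proof.

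The paper does \emph{not} have $\pi_i$ sample from $s_i$ at each level. Instead it invokes \Cref{lemma:seq_approx} to pick a \emph{deterministic periodic sequence of pure actions} $(\alpha^i_m)$ (period $N$) whose geometric-discounted average $\sum_t \eps(1-\eps)^{t-1}\alpha^i_t$ is within $\delta$ of $s_i$, and sets $\pi_i(\bm a^{1:t}) = \alpha^i_{t+1}$ so long as the observed history matches $(\alpha^k_{1:t})_k$ exactly. Because the ``correct'' action at every virtual time step is a single pure action, any observed action $a_i^{t'} \neq \alpha^i_{t'}$ is an unambiguous deviation, and the punishment $S_j'(a_i^{t'})$ fires. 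This is precisely what defeats the correlated-deviator problem: the deviator's top-level play determines, up to the independence of the shifted random sequences, the distribution of observed actions at each level, and every nontrivial deviation from the pure sequence is visible. Your construction --- ``play $s_i$ if consistent with everyone having played their equilibrium strategies'' --- breaks down whenever $s_i$ is properly mixed: a deviator that plays a different mixture supported on $\text{supp}(s_i)$ produces actions that are all ``consistent,'' and your policy has no way to detect it from a single observed action per level. Your appeal to ``an analogue of the policy-extraction argument'' (and the pointer to \Cref{app:uncorrel_halting}, which contains only halting results) does not substitute for this construction; the paper's quantitative result for the off-path utility (\Cref{lemma:uncorr_utils}) is stated and proved specifically for $\pi$Bots playing a deterministic periodic sequence, and the $O(\eps N q)$ error bound depends on the periodicity.

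A secondary gap: your mechanism for realising the $\lambda q$ slack is an added ``extra randomised punishment'' calibrated to deviation probability $q$, but $\pi_j$ has no access to $q$ (it sees only the finite simulated history), and you do not say how this extra punishment could be triggered and scaled. The paper does not need any such device: after reducing to the deterministic-sequence construction, \Cref{lemma:uncorr_utils} already shows the deviator's utility equals the expression in the hypothesis up to $O(\delta q) + O(\eps N q)$; the $\lambda q$ slack in the hypothesis is then used only to absorb these error terms by taking $\delta$ and $\eps N$ sufficiently small, not to fund additional punishment.
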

\begin{proof}
    See \Cref{pf:uncorrelEquilibria}, p.\ \pageref{pf:uncorrelEquilibria}.
\end{proof}

The main ideas for the above are as follows: We can approximate any strategy $s_i$ arbitrarily closely via a repeating sequence of pure actions over the different time steps. As $\epsilon$ goes to zero, the chance of detecting a deviation from this sequence is almost as large as the probability of said deviation. Program $j$ then plays a punishment strategy ($S_j'$) according to the observed deviation action.

Using similar ideas to \Cref{eg:pirates_uncorr_pi_bots}, we obtain a converse:
\begin{restatable}{prop}{uncorrelEquilibriaNecessary}\label{prop:uncorrelEquilibriaNecessary}
    Suppose that there exists an uncorrelated $\eps$Grounded$\pi$Bot equilibrium with payoffs $v_{1:n}$. Then $v_{1:n}$ must be feasible (without correlation), and for each $i$, we must have $
        v_i \geq \min_{(S_{j}':\mathcal{A}_i\rightarrow\Delta(\mathcal{A}_j))_{j\neq i}} \max_{q\in[0,1],s_i' \in \Delta(\mathcal{A}_i)}\allowbreak
        u_i\left((1-q)s_i+qs_i', \left((1-q)s_{-i} + qS_j'\bullet s_i'\right)_{j\neq i}\right)$.
\end{restatable}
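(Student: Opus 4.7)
The plan is to generalise the deviation argument of \Cref{eg:pirates_uncorr_pi_bots} to arbitrary games. Let $(p_1,\ldots,p_n)$ be an uncorrelated $\eps$Grounded$\pi$Bot equilibrium with payoffs $v_{1:n}$, and let $s_j\in\Delta(\mathcal{A}_j)$ denote the marginal distribution of $p_j$'s outer action. Feasibility is immediate: in the uncorrelated setting, each $p_j$'s outer action is a function of $p_j$'s own independent random sequence $(r_m^j)$, so the outer actions are jointly independent and $v_j=u_j(s_1,\ldots,s_n)$.

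For the punishment bound, fix $s_i'\in\Delta(\mathcal{A}_i)$ and $q\in[0,1]$ and consider the deviation $p_i'$: on input $(r_m)$, compute $T=\min\{t:r_t<\eps\}$; if $r_{T+1}<q$, use $r_{T+2},r_{T+3},\ldots$ to sample and return $a\sim s_i'$; otherwise return $\apply(p_i,(p_{-i},(r_m)))$. Halting against $p_{-i}$ follows from the uncorrelated analogue of \Cref{thm:correl_halting_ext} stated in \Cref{app:uncorrel_halting}. The crucial observation, exactly as in \Cref{eg:pirates_uncorr_pi_bots}, is that inside any opponent $p_j$'s call tree every invocation of $p_i'$ reads precisely the same random variables $r^j_{T_j+1},r^j_{T_j+2},\ldots$: whenever $p_i'$ is called with a tail $(r^j_m)_{m\geq k}$ for some $k\leq T_j$, the first element below $\eps$ in this tail sits at position $T_j-k$, so $p_i'$'s internal $T$ equals $T_j-k$ and its deviation-check index becomes $r^j_{T_j+1}$ (with sampling indices $r^j_{T_j+2},\ldots$). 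Hence, within $p_j$'s tree, $p_i'$ either (with probability $q$) outputs a single consistent action $a\sim s_i'$ in every sub-simulation, or (with probability $1-q$) faithfully emulates the equilibrium $p_i$ in every sub-simulation.

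Let $S_j'(a)\in\Delta(\mathcal{A}_j)$ denote the conditional distribution of $p_j$'s outer action given that $p_i'$ is pinned to output $a$ throughout $p_j$'s tree. Under the first event $p_j$'s outer marginal is $S_j'\bullet s_i'$; under the second event the entire call tree of $p_j$ coincides with the equilibrium one, so $p_j$'s outer marginal is $s_j$. Combining, $p_j$'s outer marginal is $(1-q)s_j+q(S_j'\bullet s_i')$, and an analogous argument (using that $p_i$'s output distribution does not depend on $r^i_{T+1}$) gives that $p_i'$'s own outer marginal is $(1-q)s_i+qs_i'$. Crucially, $S_j'$ depends on $a$ alone (not on $q$ or $s_i'$), because the indices of $(r_m^j)$ consumed by $\pi_j$ and by every $\pi_k$ in its recursive sub-simulations lie in $\{0,\ldots,T_j\}$ whereas $p_i'$'s deviation-triggering and sampling indices lie in $\{T_j+1,T_j+2,\ldots\}$, so conditioning on $r^j_{T_j+1}<q$ and on the specific sample $a$ leaves the distribution of the former block unchanged given $T_j$. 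Since outer actions are independent across players, equilibrium rationality for player $i$ yields $v_i\geq u_i((1-q)s_i+qs_i',((1-q)s_j+qS_j'\bullet s_i')_{j\neq i})$ for every $(q,s_i')$; maximising over $(q,s_i')$ and noting that the equilibrium fixes one particular $(S_j')_{j\neq i}$ (so the minimum over such choices is no larger) yields the claimed inequality. The main obstacle is the bookkeeping behind the disjointness claim: one must verify, across arbitrary recursion depth, that every $\eps$Grounded$\pi_k$Bot call in $p_j$'s tree uses a tail $(r_m^j)_{m\geq k_*}$ with $k_*\leq T_j$, so that the indices read by the surrounding computation really are disjoint from those read by $p_i'$ and $S_j'(a)$ is a genuine function of $a$.
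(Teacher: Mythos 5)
Your proposal is correct and follows essentially the same route as the paper's own proof: both construct the same deviating program $p_i'$ (check whether $r_{T+1}<q$; if so, output a fixed sample from $s_i'$, else delegate to $p_i$), both invoke the key observation that inside any opponent $p_j$'s call tree the recursion only ever truncates by at most $T_j$ positions so that every nested invocation of $p_i'$ reads the \emph{same} deviation-trigger and sampling indices, and both conclude that the deviated outer marginals are $(1-q)s_i+qs_i'$ for player $i$ and $(1-q)s_j+q\,S_j'\bullet s_i'$ for $j\ne i$, giving the inequality via equilibrium optimality and independence of outer actions. The only cosmetic differences are that you sample from $s_i'$ using $r_{T+2},r_{T+3},\ldots$ while the paper reuses $r_{T+1}$ itself, and that you spell out more explicitly why $S_j'$ is a well-defined map $\mathcal{A}_i\to\Delta(\mathcal{A}_j)$ (disjointness of the index blocks given $T_j$) — a useful level of care that the paper leaves implicit.
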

\begin{proof}
    See \Cref{pf:uncorrelEquilibriaNecessary}, p.\ \pageref{pf:uncorrelEquilibriaNecessary}.
\end{proof}
We also obtain the following corollary (by considering small probability deviations in the above result):
\begin{cor}\label{cor:uncorrel_necessary}
    Suppose that $v_{1:n} = u_{1:n}(s_{1:n})$ in some game $G$. Then if for some $i$, there exists $s_i'\in \Delta(\mathcal{A}_i)$ such that $
        \min_{(s_j'\in \Delta(\mathcal{A}_i))_{j\neq i}} \sum_j u_i(s_j',s_{-j}) - v_i > 0$,
    then $v_{1:n}$ is not attainable in any uncorrelated $\eps$Grounded$\pi$Bot equilibrium.
\end{cor}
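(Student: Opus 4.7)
The plan is to derive the corollary from Proposition \ref{prop:uncorrelEquilibriaNecessary} via a first-order Taylor expansion in $q$ around $q=0$, turning the hypothesised pointwise gap into a positive derivative that yields a strict improvement for every possible punisher. Suppose for contradiction that $v_{1:n}$ is attained in some uncorrelated $\eps$Grounded$\pi$Bot equilibrium, and let $i$ and $s_i'$ be as in the hypothesis.

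For any punisher profile $S' = (S_j':\mathcal{A}_i\to\Delta(\mathcal{A}_j))_{j\neq i}$, set
\[ f(q;S') \defeq u_i\bigl((1-q)s_i + q s_i',\,\bigl((1-q)s_j + q S_j'\bullet s_i'\bigr)_{j\neq i}\bigr). \]
By multilinearity of $u_i$, $f(\,\cdot\,;S')$ is a polynomial in $q$ with $f(0;S') = u_i(s_{1:n}) = v_i$. I would compute its first-order coefficient, which by a direct expansion equals
\[ D(S') \defeq \bigl[u_i(s_i',s_{-i})-v_i\bigr] + \sum_{j\neq i}\bigl[u_i(S_j'\bullet s_i',s_{-j}) - v_i\bigr], \]
i.e., exactly the sum appearing in the hypothesis once one identifies $s_j' = S_j'\bullet s_i'$ for $j\neq i$. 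The next step is to observe that as $S_j'$ ranges over maps $\mathcal{A}_i \to \Delta(\mathcal{A}_j)$, the convex combination $S_j'\bullet s_i' = \sum_{a} s_i'(a) S_j'(a)$ already sweeps out all of $\Delta(\mathcal{A}_j)$ (take $S_j'$ constant equal to the desired distribution), so $\min_{S'} D(S')$ coincides with the minimum in the hypothesis and is therefore strictly positive.

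Finally, I would upgrade this pointwise strict positivity to a uniform bound via compactness: the domain of $S'$ is a compact product of simplices and $D$ is continuous, so there exists $\delta > 0$ with $D(S') \geq \delta$ for all $S'$. Since the higher-order $q$-coefficients of $f(\,\cdot\,;S')$ are polynomials in the entries of $S'$ and hence uniformly bounded in $S'$, picking some fixed $q_0 > 0$ sufficiently small yields $f(q_0;S') \geq v_i + \tfrac{1}{2}\delta q_0$ uniformly in $S'$. In particular,
\[ \min_{S'}\max_{q,s_i''} u_i\bigl((1-q)s_i+qs_i'', \bigl((1-q)s_j + qS_j'\bullet s_i''\bigr)_{j\neq i}\bigr) \ > \ v_i, \]
(just by plugging in $s_i'' = s_i'$ and $q = q_0$ inside the max), contradicting Proposition \ref{prop:uncorrelEquilibriaNecessary}. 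The only mildly non-routine step is this uniformity across punisher profiles, which the compactness argument dispatches; everything else is an elementary linear perturbation in $q$.
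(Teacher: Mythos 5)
Your proof is correct and fills in exactly the argument the paper leaves implicit behind the phrase ``by considering small probability deviations.'' The first-order coefficient $D(S') = [u_i(s_i',s_{-i})-v_i] + \sum_{j\neq i}[u_i(S_j'\bullet s_i',s_{-j})-v_i]$ is the right expansion of the multilinear $u_i$, your observation that $\{S_j'\bullet s_i' : S_j'\} = \Delta(\mathcal{A}_j)$ (via constant $S_j'$) correctly identifies $\min_{S'}D(S')$ with the minimum in the corollary's hypothesis, and the compactness step supplying the uniform $q_0$ cleanly closes the gap to a strict violation of \Cref{prop:uncorrelEquilibriaNecessary}. One could avoid the explicit uniform-$q_0$ step by instead noting that $S'\mapsto\max_{q,s_i''}u_i(\cdot)$ is a continuous function on the compact product of simplices and hence attains its minimum, but your version is equally valid. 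The only caveat, which is a looseness in the corollary's statement rather than in your proof, is that the $s_{1:n}$ appearing in the hypothesis must be the equilibrium strategy profile (the same $s_{1:n}$ that implicitly appears in \Cref{prop:uncorrelEquilibriaNecessary}); you correctly read it that way.
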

    Since the behaviour of the uncorrelated $\eps$Grounded$\pi$Bots is unaffected by changes to a program that do not affect the program's output against the other programs on a given input, the equilibria are also robust to such changes. As a special case, if an uncorrelated $\eps$Grounded$\pi$Bot equilibrium entails one program playing a pure strategy at equilibrium, then replacing that program with any other program that almost surely plays that same pure strategy against the $\eps$Grounded$\pi$Bots will leave the expected utilities of all players unchanged.

Even in two-player games, our $\eps$Grounded$\pi$Bots retain an advantage over the \citet{oesterheld2019robust} $\eps$Grounded$\pi$Bots. In particular, they allow for better detection and punishment of deviation from mixed strategies, which allows for more equilibria. This is illustrated in the following example:

\begin{table}[ht]
\centering
\setlength{\extrarowheight}{3pt}
\begin{tabular}{cc|c|c|}
  & \multicolumn{1}{c}{} & \multicolumn{2}{c}{Player $2$ }\\
  & \multicolumn{1}{c}{} & \multicolumn{1}{c}{$G$}  & \multicolumn{1}{c}{$C$}\\
  \cline{3-4}
  \multirow{2}*{Player 1}  & $K$ & $3,0$ & $3,0$\\
  \cline{3-4}
  & $S$ & $2,4$ & $4,2$\\
  \cline{3-4}
\end{tabular}
\caption{The trust game from \Cref{eg:trust}}
\label{tab:TrustMixed}
\end{table}

\begin{eg}\label{eg:trust}
Consider the following trust game (cf.~\citet{Berg95:Trust};
with payoffs shown in \Cref{tab:TrustMixed}): Player 1 receives \$3, which he may Keep ($K$) or Send to the other player ($S$). Money sent to the other player is doubled. Player 2 may then either be Greedy ($G$) and take $\$4$ and send only $\$2$ back, or be Charitable ($C$) and take only $\$2$, giving $\$4$ to player 1. We show in \Cref{app:mixed} that there exists an uncorrelated $\eps$Grounded$\pi$Bot equilibrium in which player 1 plays $S$ and player 2 randomises at an approximately $1:1$ ratio between $T$ and $R$, whereas the \citet{oesterheld2019robust} $\eps$Grounded$\pi$Bot cannot attain any equilibria in which player 1 plays $S$ and player 2 mixes.
\end{eg}

In general, there is no simple relationship between utilities in games of $\eps$Grounded$\pi$Bots and repeated games. We discuss this further in \Cref{app:rep_uncorrel}. In the special case of additively separable utilities, we do obtain a folk theorem:

\begin{restatable}{thm}{additiveFolk}\label{thm:additiveFolk}
    Suppose $G$ is an $n$-player program game with additively separable utilities. Then for any set of payoffs $v_{1:n}$, there exists a program equilibrium of uncorrelated $\eps$Grounded$\pi$Bots (with a common $\eps$) whenever $v_{1:n}$ is feasible (without correlation) and strictly individually rational.
\end{restatable}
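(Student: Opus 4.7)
Since $v_{1:n}$ is feasible without correlation, fix $s_{1:n}$ with $u_i(s_{1:n}) = v_i$. I plan to apply \Cref{prop:uncorrelEquilibria}, so the task reduces to verifying its hypothesis:
\[v_i \geq \min_{S'}\max_{q,s_i'} u_i\bigl((1-q)s_i + qs_i',\, ((1-q)s_j + qS_j'\bullet s_i')_{j\neq i}\bigr) - \lambda q\]
for each $i$ and some $\lambda > 0$.

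Using additive separability together with the linearity of each $u_{ij}$ on the simplex, the inner expression equals $v_i + q\Delta_i(s_i';S')$, where
\[\Delta_i(s_i';S') \defeq u_{ii}(s_i') - u_{ii}(s_i) + \sum_{j\neq i}\bigl(u_{ij}(S_j'\bullet s_i') - u_{ij}(s_j)\bigr).\]
Since $q\in[0,1]$, the hypothesis is equivalent to $\min_{S'}\max_{s_i'}\Delta_i(s_i';S') \leq \lambda$. Being affine in $s_i'$, $\Delta_i$ attains its max at some pure $a_i$; for pure $a_i$, $S_j'\bullet \delta_{a_i} = S_j'(a_i)$, so by additive separability the choices $S_j'(a_i)$ decouple across $j$ (and across $a_i$). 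Picking each $S_j'(a_i) \in \arg\min_{a_j}u_{ij}(a_j)$, I obtain
\[\min_{S'}\max_{a_i}\Delta_i(\delta_{a_i};S') = \max_{a_i}u_{ii}(a_i) + \sum_{j\neq i}\min_{a_j}u_{ij}(a_j) - v_i = m_i - v_i,\]
where $m_i$ is player $i$'s minimax value; additive separability is precisely what lets the minimax factor into this form (the minimiser for each opponent $j$ only has to drive down $u_{ij}$).

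By strict individual rationality, $m_i < v_i$ for each $i$, so any $\lambda \in (0,\min_i(v_i - m_i))$ uniformly verifies the hypothesis, and \Cref{prop:uncorrelEquilibria} then delivers uncorrelated $\eps$Grounded$\pi$Bot equilibria with payoffs arbitrarily close to $v_{1:n}$ (taking $\eps$ small enough in the common choice given by that proposition). The only substantive step is the decoupling of punishers: in a general (non-separable) game, the choices $S_j'(a_i)$ for different $j$ interact through $u_i$, so the inner $\min$-$\max$ can strictly exceed $m_i - v_i$, breaking the folk theorem (as \Cref{eg:pirates_uncorr_pi_bots} illustrates). Additive separability removes this interaction, which is the crux of the argument.
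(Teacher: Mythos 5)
Your proof is correct but takes a genuinely different route from the paper's. The paper deduces \Cref{thm:additiveFolk} from the program-game/repeated-game correspondence for additively separable utilities (\Cref{cor:uncorrActDist}) together with the repeated-game folk theorem (\Cref{thm:repFolkThm}), noting that for separable utilities the lack of correlated randomness is harmless because only marginals affect payoffs. You instead verify the sufficient condition in \Cref{prop:uncorrelEquilibria} directly, and the key step — that additive separability linearizes the inner expression to $v_i + q\,\Delta_i$, makes the punishers $S_j'(a_i)$ decouple across $j$ so the $\min$-$\max$ collapses to $m_i - v_i$ (with $m_i = \max_{a_i} u_{ii}(a_i) + \sum_{j\neq i}\min_{a_j}u_{ij}(a_j)$ indeed being player $i$'s minimax value under separability) — is correct and well-motivated. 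This is arguably more illuminating: it makes visible why separability is the operative hypothesis and connects directly to the failure mode in \Cref{eg:pirates_uncorr_pi_bots}, whereas the paper's route leans on the repeated-game machinery and the observation about marginals.

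One small discrepancy worth flagging: \Cref{prop:uncorrelEquilibria} (because its proof replaces $s_i$ by a deterministic periodic sequence of pure actions) only delivers payoffs \emph{arbitrarily close} to $v_{1:n}$, whereas the paper's route — using the $r_0$ component of the input sequence to sample exactly from $s_i(q)$, which is legitimate in the separable case since per-player randomization need not be correlated — achieves $v_{1:n}$ exactly. The theorem's statement does not spell out which is meant, but the surrounding folk theorems (\Cref{thm:repFolkThm}, \Cref{cor:correl_folk_thm}) achieve $v_{1:n}$ exactly, so if that is the intent your conclusion is marginally weaker; it could be repaired by upgrading the construction in the proof of \Cref{prop:uncorrelEquilibria} to let $\pi_i$ randomize via $r_0$ (rather than via a deterministic period-$N$ schedule), which works here precisely because correlation is irrelevant under additive separability.
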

\begin{proof}
    See \Cref{pf:additiveFolk}, p.\ \pageref{pf:additiveFolk}.
\end{proof}

\begin{eg}
    Consider \Cref{eg:intro_game} from the introduction.
    Can the uncorrelated $\eps$Grounded$\pi$Bots attain a $(C,C,C)$ equilibrium? Since the game has additively separable utilities, \Cref{thm:additiveFolk} gives that they can. In particular, we may use the following policies: $\pi_1(\bm{a}^{1:t}) = C$ if no player has defected, and $P_i$ if player $i$ is the first to defect. For $i>1$, use $\pi_i(\bm{a}^{1:t}) = C$ if no player has defected and $D$ otherwise. Player 1 cannot benefit by deviating. Meanwhile, if player 2, say, defects with probability $q$, the defection will be observed by each other player with probability at least $q(1-\eps)$ (independently between players). Thus, the gain from defecting will be $5q - (3+3)q(1-\eps)$. Thus, this is an equilibrium provided $\eps<1/6$.
\end{eg}

\section{General Simulation-based Programs}\label{sec:simulationist}
We now turn to more general simulation-based programs, in the uncorrelated setting from the last section. Can we do better than the uncorrelated
$\eps$Grounded$\pi$Bots? We start by defining the class of programs we are interested in.

Say a program $p_i$ is \emph{simulationist} if it only references the input programs, $p_{-i}$, via calls of the form $\apply(p_j,(p_{-j}',(r_m')))$ where each program in $p_{-j}'$ is either one of the input programs, or is itself simulationist, $(r_m')$ is any sequence, and we allow $i=j$.

This definition is inductive -- e.g., uncorrelated $\eps$Grounded$\pi$Bots are simulationist, as are, say, DefectBots. Meanwhile, a program $p$ that ran the input programs on a mix of such programs and input programs would be simulationist, as would a program $q$ that ran the input programs on $p$. Note that we exclude programs that run the input programs on programs that are \emph{not} simulationist. We explain this further in \Cref{app:simulationist}.

We give here a simple example of a game in which it is possible to sustain a profile of equilibrium payoffs with \emph{some} simulationist programs, but not with uncorrelated $\eps$Grounded$\pi$Bots.

\begin{table}[ht]
    \centering
    \setlength{\extrarowheight}{3pt}
    \begin{tabular}{cc|c|c|}
      & \multicolumn{1}{c}{} & \multicolumn{2}{c}{Player $2$ }\\
      & \multicolumn{1}{c}{} & \multicolumn{1}{c}{$G$}  & \multicolumn{1}{c}{$F$}\\
      \cline{3-4}
      \multirow{2}*{Player 1}  & $K$ & $3,0$ & $3,0$\\
      \cline{3-4}
      & $S$ & $0,6$ & $4,2$\\
      \cline{3-4}
    \end{tabular}
    \caption{The trust game from \Cref{eg:trust_pure}.}
    \label{tab:TrustSimple}
\end{table}

\begin{eg}\label{eg:trust_pure}
    Consider the following trust game: Player 1 receives $\$3$, which she may Keep ($K$) or Send to the other player ($S$). Money sent to Player 2 is doubled. Player 2 may then be Greedy ($G$) and keep all $\$6$, or Fair ($F$) and return $\$4$ to player 1. The payoff matrix is shown in \Cref{tab:TrustSimple}. Can an $(S,F)$ equilibrium be sustained?
    
    For the uncorrelated $\eps$Grounded$\pi$Bots, \Cref{cor:uncorrel_necessary} shows that the answer is no:  Note that $u_2(S,F) = 2$. Let $s_2' = G$. Then, we have $\min_{s_1'\in \Delta(\mathcal{A}_1)}\allowbreak (u_2(s_1',F)-v_2) + (u_2(S,G)-v_2) = u_2(K,F)-2+u_2(S,G)-2 = 2 > 0$.

    However, an $(S,F)$ equilibrium can be sustained as follows: Player 2 submits program $p_2$, which just outputs $F$. Player 1 submits program $p_1$, which simulates the other program twice on independent input sequences (e.g., by splitting its input sequence into odd and even entries), and then outputs $S$ only if both simulations output $F$. Since $p_2$ never calls any other program, $p_1$ and $p_2$ halt against each other, with outputs $(S,F)$. It is easy to show that this is then an equilibrium.
\end{eg}

The above example is relatively extreme, but illustrates two ideas for outperforming uncorrelated $\eps$Grounded$\pi$Bots. First, taking multiple samples. Typically, we cannot take multiple independent samples of a program's action without halting issues. In the previous example, we could, because of program 2 always being effectively on time step 1. In general, we can take multiple samples of a program's action, provided that we ensure all samples are on earlier time steps than our own.
Second, using different distributions over the time step for different players. In the above example, there is no need for player 2 to simulate player 1, so we may have player 2 be always on time step 1. More generally, we might do better in some games by having players have different (and potentially non-geometric) distributions over their time step. We discuss these ideas further in \Cref{app:gen_simul_idea}.

\paragraph{Negative results}
We will now explore the limits of equilibria enforceable with simulationist programs. We first need to place a mild additional constraint on the programs we consider:

Say that programs $p_1,\ldots,p_n$ halt against each other \emph{output independently} if they almost surely halt against each other, and this remains true if one program, $p_i$, is replaced by a program $p_i'$ that almost surely halts on any input that $p_i$ almost surely halts on. This condition rules out programs that, e.g., loop forever once they determine that another program has played the wrong action. In particular, if programs halt against each other output independently, they must still halt if one program is replaced with a program that simply plays according to its minimax strategy, and so individually rational payoffs is a necessary condition for equilibria of such programs. $\eps$Grounded$\pi$Bots halt output independently against each other (\Cref{app:gen_simul_further}, \Cref{prop:output_indep_halting}).

We obtain the following negative result. An equivalent result expressed explicitly in terms of payoffs may be found in (\Cref{app:gen_simul_further}, \Cref{thm:simulationist_negative_result}).

\begin{restatable}{thm}{simulationistNegativeResultGame}\label{prop:simulationistNegativeResultGame}
    Consider games of the following form, given $\mathbb{N}$-valued random variables $T_1,\ldots T_n$: Each player $i$ observes $T_i$, sampled independently from other players, and for each $t<T_i$ sees each other player $j$'s policy conditional on that player observing $T_j = t$, and then chooses a strategy.
    
    If an equilibrium with payoffs $v_{1:n}$ is attained by simulationist programs that halt output independently, there must exist a policy profile $\tau_{1:n}$ in a game of the above form attaining payoffs $v_{1:n}$, with the following property: for all $i$, all possible deviations of the form `play according to $\tau_i$ if $T_i<t_0$, else play according to $d_i$' for some fixed strategy $d_i$ attain at most as much utility against $\tau_{-i}$ as does $\tau_i$.
\end{restatable}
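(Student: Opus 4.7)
The plan is to extract $(T_i, \tau_i)$ from the simulationist program equilibrium via the $\apply$-call-tree structure. For each $i$, define $T_i$ as the depth of the tree of $\apply$ calls made when $p_i$ runs on its private randomness $(r_m^i)$ with opponents $p_{-i}$. Since $p_{1:n}$ almost surely halt against each other, $T_i$ is almost surely finite. In the uncorrelated setting, $T_i$ depends only on $(r_m^i)$ with $p_{-i}$ fixed, so $T_1, \ldots, T_n$ are independent. Let $\tau_i(t)$ be the distribution of $p_i$'s top-level output conditional on $T_i = t$. The realized top-level action $A_i$ is then distributed as $\tau_i(T_i)$, giving $v_{1:n} = \Ex[u(A_{1:n})] = \Ex[u(\tau_1(T_1), \ldots, \tau_n(T_n))]$. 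Moreover, the simulationist structure ensures $p_i$ at $T_i = t$ interacts with $p_{-i}$ only through $\apply$ subcalls of depth strictly less than $t$, matching the auxiliary-game observation of $\tau_j$ restricted to $T_j < T_i$.

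For the deviation property, I argue by contradiction. Suppose that for some $i$, $t_0$, and $d_i \in \Delta(\mathcal{A}_i)$, the auxiliary-game deviation $\tau_i'$ (defined by $\tau_i'(t) = \tau_i(t)$ if $t < t_0$ and $\tau_i'(t) = d_i$ if $t \geq t_0$) strictly improves player $i$'s utility against $\tau_{-i}$. I build a deviant program $p_i'$ that, on input $(r_m)$, executes $p_i$ inline so that the $\apply$-call tree (and hence $T_i$) is preserved, measures the resulting call-tree depth $d$, and returns $p_i$'s output if $d < t_0$, else a sample from $d_i$ drawn from an unused slice of $(r_m)$. Halt-output-independence of $p_{1:n}$ ensures $p_i'$ halts against $p_{-i}$, and the top-level output of $p_i'$ is distributed exactly as $\tau_i'(T_i)$.

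The crux is relating $p_i'$'s program-game utility against $p_{-i}$ to the auxiliary-game utility of $\tau_i'$ against the fixed $\tau_{-i}$: if the former is at least the latter, combining with the equilibrium inequality $u_i(p_i, p_{-i}) \geq u_i(p_i', p_{-i})$ and the payoff identity from the first paragraph yields the desired contradiction. The hard part will be that when $p_j$ for $j \neq i$ simulates $p_i'$ with an input whose internal $T_i \geq t_0$, the simulation returns a $d_i$-sample rather than a $\tau_i$-sample, potentially perturbing the induced $\tau_j$ at $T_j \geq t_0 + 1$. I expect to close this gap either by choosing $\tau_{-i}$ more carefully (so $\tau_j(t)$ for $t \geq t_0$ absorbs the perturbation while preserving the payoff identity) or by inducting on $t_0$ and peeling off one deep layer at a time so that the $p_{-i}$-responses at each layer are pinned down. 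Either route completes the contradiction and hence the proof.
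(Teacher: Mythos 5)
Your setup matches the paper's: you define $T_i$ as the depth of the $\apply$-call tree, define $\tau_i(t)$ as $p_i$'s top-level action distribution conditional on $T_i=t$, note independence in the uncorrelated setting, and build a deviant program $p_i'$ that runs $p_i$ inline to measure $T_i$ and substitutes $d_i$ when $T_i\geq t_0$. You also correctly identify the crux: when $p_j$ simulates $p_i'$ on an input with internal depth $\geq t_0$, the returned sample is perturbed, and you must show this does not break the payoff comparison. But you stop there, only gesturing at two possible routes without carrying either one out. That leaves the proof incomplete, and the unresolved part is exactly where all the work is.

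The paper's resolution has two components that your sketch does not supply. First, you need the ``unperturbed-before-$t_0$'' claim: conditional on $T_j=t\leq t_0$, $p_j$'s action distribution is unchanged when $p_i$ is replaced by $p_i'$. This is not just ``calls to $p_i$ have depth $<t$'' --- it requires a careful induction over the call tree using the \emph{simulationist} property, because $p_i'$ has different source code than $p_i$, and nested calls pass $p_i'$ around as an argument; one must argue that no program in the tree (all simulationist) can distinguish $p_i'$ from $p_i$ except behaviourally, and that there is no behavioural difference below depth $t_0$. Your phrase ``the simulationist structure ensures $p_i$ at $T_i=t$ interacts with $p_{-i}$ only through $\apply$ subcalls of depth strictly less than $t$'' waves at this but does not deliver the induction. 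Second, the perturbation for $T_j>t_0$ is handled not by an inequality but by \emph{definition}: the auxiliary-game policy $\tau_j$'s response to player $i$ deviating to $d_i$ at time $t_0$ is defined to be $s_j(i,t_0,d_i)\defeq\E{\apply(p_j,(p_{-ij},p_i',(r_m)))\mid T_j>t_0}$, i.e., precisely $p_j$'s actual behaviour in the program game when $p_i$ is replaced by $p_i'$. This makes the auxiliary-game utility of $\tau_i'$ against $\tau_{-i}$ \emph{equal} to the program-game utility of $p_i'$ against $p_{-i}$, from which the equilibrium inequality immediately rules out the profitable deviation. Note that this also requires $\tau_j$ to be a policy of the observed strategy history (with branches indexed by deviator, time, and deviation action), not merely a function $\tau_j(t)$ of $T_j$; you should make this explicit, since otherwise the auxiliary game's information structure is not respected.
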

\begin{proof}
    See \Cref{pf:simulationistNegativeResultGame}, p.\ \pageref{pf:simulationistNegativeResultGame}.
\end{proof}

The idea behind this result is as follows: for any profile of simulationist programs $p_{1:n}$, we may effectively define a `time step' $T_i$ for each program $p_i$, depending on the random sequence, as the maximum depth of the simulation tree with input programs $p_{-i}$. We then consider programs $p_i'$ that look at the `time step' $T_i$, and deviate if it is sufficiently high (at least $t_0$, say). We can show that the other programs only observe this deviation when their own time step is in turn high enough (at least $t_0+1$). This result then also provides additional justification for the idea of a simulated repeated game, since it shows that all simulationist programs may be in some sense viewed as being on a particular (random) time step and only being able to observe deviations occurring on earlier time steps.

From this result, we obtain that in the pirate example (\Cref{eg:pirates1}) the action profile $(C,C,C)$, cannot be attained in a simulationist uncorrelated program equilibrium (see \Cref{eg:pirates_gen_simul}, below). Thus, the folk theorem does not hold for simulationist programs in the uncorrelated setting. We further discuss which equilibria may be attained in \Cref{app:gen_simul_further}.

\begin{eg}\label{eg:pirates_gen_simul}
    Recall the game between the pirates from \Cref{eg:pirates1}.
Can the pirates obtain pure cooperation in a simulationist equilibrium? We will apply \Cref{prop:simulationistNegativeResultGame} to show that the answer is no. Consider any choice of random variables $T_1$, $T_2$, $T_3$ for the time steps of the three players. Let $t_0$ be maximal such that $\prob{T_1\geq t_0} \geq \frac{3}{4}$, and suppose pirate 1 defects whenever she observes $T_1 \geq t_0$. There are two potentially optimal punishment strategies: punishing with $D$, and punishing with $L$ (cf.\ \Cref{app:pirates_uncorr_pi_bots}). A $D$ punishment strategy cannot work here, since even if both other players were to always play $D$, player 1 would still receive a utility of $\frac{3}{4}\times 14 > 10$. So we may assume that the other players play $L$ when they observe player 1 defect. Now, player 2 observes player $1$ defect with probability $\prob{T_2\geq t_0+1}$. Even if player 3 were to always observe the defection, for the punishment to be effective, we then need $4\prob{T_1\geq t_0}(1-\prob{T_2\geq t_0+1}) - \prob{T_2\geq t_0+1} \leq 0$, i.e. $\prob{T_2\geq t_0+1} \geq \frac{4\prob{T_1\geq t_0}}{4\prob{T_1\geq t_0}+1} \geq \frac{3}{4}$. Then instead consider the case where pirate 2 defects whenever he observes $T_2 \geq t_0+1$. Pirate 1 sees this defection with probability $\prob{T_1\geq t_0+2}$, and we may obtain by similar argument that we then need $\prob{T_1\geq t_0+2} \geq \frac{4\prob{T_2\geq t_0+1}}{4\prob{T_2\geq t_0+1}+1} \geq \frac{3}{4}$. But this contradicts the definition of $t_0$.

\end{eg}

\section{Related Work}\label{sec:related_work}

Our work is most closely related to papers on program equilibrium, especially \citet{oesterheld2019robust}, to which we have related our ideas throughout the paper. \citet{digiovanni2023commitment} use a variant of $\epsilon$GroundedFairBot to implement mutual commitments to disclose information. Their extension addresses similar limitations of \citeauthor{oesterheld2019robust}'s program as ours (failures to halt when simulating multiple opponents). Like the present paper, their main idea is based on correlated halting, somewhat similarly to our \Cref{sec:correl}, but via access to a shared simulation device. They prove a variant of the full folk theorem for their setting. However, their approach differs from our correlated $\eps$Grounded$\pi$Bots in important ways. For example, they assume that programs directly observe each others' randomisation probabilities, and their programs only allow for a restricted range of policies. We give a detailed comparison in \Cref{app:comparison-to-digiovanni-et-al}. 

A recent line of papers has studied games in which players are specifically given the ability to simulate each other. Most relevantly, \citet{kovarik2024recursive} give players a joint simulation device, which allows simulating an interaction between the players. With probability $1-\epsilon$, the simulated players in turn have access to a joint simulation device, and so on. Very roughly, our correlated $\epsilon$Grounded$\pi$Bots could be viewed as decentralised implementations of such joint simulation. Like our $\epsilon$Grounded$\pi$Bots, joint simulation devices enable a full folk theorem. \citet{kovarik2023game} study games in which only one player is able to simulate the other at some cost, which is technically much less related.

\section{Conclusion}
The concept of program equilibrium promises to enable one-shot cooperation among agents whose code is legible to each other.  Among several approaches to designing bots that play program equilibria, 
\citeauthor{oesterheld2019robust}'s \citeyearpar{oesterheld2019robust} $\eps$Grounded$\pi$Bot stands out for its simplicity and robustness, requiring only simulation of the other program.  Nevertheless, the original design of has several limitations, facing challenges in generalising to settings with more than two players, and falling short of the full folk theorem even with two players.

In this paper, we considered how $\eps$Grounded$\pi$Bot might be generalised to address these issues.  The key insight was to have a program determine its own `time step' first and then manipulate simulated programs' sources of randomness so that they must be on earlier time steps.  With this idea, we first showed that if the programs have access to shared randomness, we obtain a full folk theorem.  We also characterised the equilibria that can be obtained without shared randomness, but these fall short of the full folk theorem.  We showed one could do a bit better with more general `simulationist' programs, but these still fall short of the full folk theorem as well.

\section*{Acknowledgments}

We are grateful to the anonymous AAAI reviewers for helpful comments that improved the exposition of this paper. We also thank the Cooperative AI Foundation, Polaris Ventures (formerly the Center for Emerging Risk Research) and Jaan Tallinn’s donor-advised fund at Founders Pledge for financial support. Caspar Oesterheld was supported by an FLI PhD fellowship. Emery Cooper began work on this project while working for the Center on Long-Term Risk.

\bibliography{refs}

\begin{thebibliography}{20}
\providecommand{\natexlab}[1]{#1}

\bibitem[{Abelson and Sussman(1996)}]{abelson1996structure}
Abelson, H.; and Sussman, G.~J. 1996.
\newblock \emph{Structure and interpretation of computer programs}.
\newblock The MIT Press, 2 edition.

\bibitem[{Barasz et~al.(2014)Barasz, Christiano, Fallenstein, Herreshoff,
  LaVictoire, and Yudkowsky}]{Barasz2014}
Barasz, M.; Christiano, P.; Fallenstein, B.; Herreshoff, M.; LaVictoire, P.;
  and Yudkowsky, E. 2014.
\newblock Robust Cooperation in the Prisoner's Dilemma: Program Equilibrium via
  Provability Logic.
\newblock A version with different author order was published at the Multiagent
  Interaction without Prior Coordination workshop at AAAI-14, see
  \url{https://cdn.aaai.org/ocs/ws/ws1249/8833-38015-1-PB.pdf}.

\bibitem[{Berg, Dickhaut, and McCabe(1995)}]{Berg95:Trust}
Berg, J.; Dickhaut, J.; and McCabe, K. 1995.
\newblock Trust, Reciprocity, and Social History.
\newblock \emph{Games and Economic Behavior}, 10: 122--142.

\bibitem[{Conitzer and Oesterheld(2023)}]{ConitzerCoopAI}
Conitzer, V.; and Oesterheld, C. 2023.
\newblock Foundations of Cooperative {AI}.
\newblock \emph{Proceedings of the AAAI Conference on Artificial Intelligence},
  37(13): 15359--15367.

\bibitem[{Critch(2019)}]{Critch2019}
Critch, A. 2019.
\newblock A Parametric, Resource-Bounded Generalization of L{\"o}b's Theorem,
  and a Robust Cooperation Criterion for Open-Source Game Theory.
\newblock \emph{Journal of Symbolic Logic}, 84(4): 1368--1381.

\bibitem[{Critch, Dennis, and Russell(2022)}]{Critch2022}
Critch, A.; Dennis, M.; and Russell, S. 2022.
\newblock Cooperative and uncooperative institution designs: Surprises and
  problems in open-source game theory.

\bibitem[{DiGiovanni and Clifton(2023)}]{digiovanni2023commitment}
DiGiovanni, A.; and Clifton, J. 2023.
\newblock Commitment games with conditional information disclosure.
\newblock In \emph{Proceedings of the AAAI Conference on Artificial
  Intelligence}, volume~37, 5616--5623.

\bibitem[{Howard(1988)}]{Howard1988}
Howard, J.~V. 1988.
\newblock Cooperation in the Prisoner's Dilemma.
\newblock \emph{Theory and Decision}, 24: 203--213.

\bibitem[{Knuth(1997)}]{knuth97}
Knuth, D.~E. 1997.
\newblock \emph{The Art of Computer Programming, Vol. 1: Fundamental
  Algorithms}.
\newblock Reading, Mass.: Addison-Wesley, third edition.
\newblock ISBN 0201896834 9780201896831.

\bibitem[{Kova{\v{r}}{\'\i}k, Oesterheld, and Conitzer(2023)}]{kovarik2023game}
Kova{\v{r}}{\'\i}k, V.; Oesterheld, C.; and Conitzer, V. 2023.
\newblock Game theory with simulation of other players.
\newblock In \emph{Proceedings of the Thirty-Second International Joint
  Conference on Artificial Intelligence}, 2800--2807.

\bibitem[{Kovarik, Oesterheld, and Conitzer(2024)}]{kovarik2024recursive}
Kovarik, V.; Oesterheld, C.; and Conitzer, V. 2024.
\newblock Recursive Joint Simulation in Games.
\newblock \emph{arXiv preprint arXiv:2402.08128}.

\bibitem[{Maskin and Fudenberg(1991)}]{actionSequence}
Maskin, E.; and Fudenberg, D. 1991.
\newblock On the Dispensability of Public Randomization in Discounted Repeated
  Games.
\newblock \emph{Journal of Economic Theory}, 53: 428--438.

\bibitem[{McAfee(1984)}]{McAfee1984}
McAfee, R.~P. 1984.
\newblock Effective Computability in Economic Decisions.

\bibitem[{Michie(1968)}]{michie1968memo}
Michie, D. 1968.
\newblock “Memo” functions and machine learning.
\newblock \emph{Nature}, 218(5136): 19--22.

\bibitem[{Norvig(1991)}]{norvig1991techniques}
Norvig, P. 1991.
\newblock Techniques for automatic memoization with applications to
  context-free parsing.
\newblock \emph{Computational Linguistics}, 17(1): 91--98.

\bibitem[{Oesterheld(2019)}]{oesterheld2019robust}
Oesterheld, C. 2019.
\newblock Robust program equilibrium.
\newblock \emph{Theory and Decision}, 86(1): 143--159.

\bibitem[{Oesterheld(2022)}]{oesterheld2022notecompatibility}
Oesterheld, C. 2022.
\newblock A Note on the Compatibility of Different Robust Program Equilibria of
  the Prisoner's Dilemma.
\newblock arXiv:2211.05057.

\bibitem[{Osborne(2004)}]{osborne2004}
Osborne, M.~J. 2004.
\newblock \emph{An Introduction to Game Theory}.
\newblock New York: Oxford University Press.

\bibitem[{Rubinstein(1998)}]{Rubinstein1998}
Rubinstein, A. 1998.
\newblock \emph{Modeling Bounded Rationality}.
\newblock Zeuthen Lecture Book Series. The MIT Press.

\bibitem[{Tennenholtz(2004)}]{tennenholtz}
Tennenholtz, M. 2004.
\newblock Program equilibrium.
\newblock \emph{Games and Economic Behavior}, 49(2): 363--373.

\end{thebibliography}

\appendix
\onecolumn
\section{Why the \citet{oesterheld2019robust} $\eps$Grounded$\pi$Bot cannot be na\"ively extended to 3 or more players}\label{app:2019_halting}
The obvious way to extend the \citet{oesterheld2019robust} $\eps$Grounded$\pi$Bot to games of 3 or more players would be for it to simply simulate both players with probability $1-\eps$. However, it is easy to see that this will not work, at least if its opponents are also $\eps$Grounded$\pi$Bots. First, this would result in an expected $2(1-\eps)$ depth 1 simulations (i.e., simulations made by a non-simulated program), which would lead to $4(1-\eps)^2$ expected depth 2 simulations (i.e., simulations made directly by a depth 1 simulation), and so on, for an infinite expected runtime when $\eps\leq 0.5$. More generally, if the expected number of depth 1 simulations were $N$ (accounting for probability of zero such simulations being run), then the expected runtime would be infinite whenever $N \geq 1$.

Worse still, if two simulations are run with probability $1-\eps$, the halting probability is less than one for small $\eps$. Let $f(t)$ be the probability the maximum simulation depth is at most $t$ (where is no simulations are made, the depth is 0). Then $f(0) = \eps$ and for $t\geq 1$, we have $f(t) = \eps + (1-\eps)f(t-1)^2$. Hence, if $f(t-1) \leq \frac{\eps}{1-\eps}$, we have $f(t)\leq \eps + \frac{\eps^2}{1-\eps}= \frac{\eps}{1-\eps}$. So since $f(0) = \eps < \frac{\eps}{1-\eps}$, we obtain that $f(t)\leq \frac{\eps}{1-\eps}$ for all $t$. Thus, the probability of halting is $\lim_{t\rightarrow\infty}f(t)\leq \frac{\eps}{1-\eps}< 1$ for $\eps<\frac{1}{2}$.

Similarly, if instead the number of simulations is a random variable $N$, with $\prob{N=k} = a_k$, we would have $f(t) = \sum_{k=0}^\infty a_k f(t-1)^k$ for $t\geq 1$, and $f(0) = a_0$. Now, consider $g(x) = \sum_{k=0}^\infty a_k x^k$. We then have that $g$ is increasing, with $g(1) = 1$, and $g'(x) = \sum_{k=1}^\infty k a_k x^{k-1}$. Thus, $g'(1) = \E{N}$. Then, if $\E{N}>1$, $g'(x)$ must be greater than 1 on some interval $[x_0,1]$ with $x_0< 1$ (by continuity of $g'$), and so by the mean value theorem, $g(x)<x$ for $x_0\leq x < 1$. WLOG take $x_0> a_0$. Then, combined with the fact $g$ is increasing, this implies that $g^n(a_0)$ can never exceed $x_0< 1$ when $\E{N}>1$. Thus, $f(t)\leq x_0<1$ for all $t$, and so the probability of halting is less than 1 whenever the expected number of simulations directly made by each $\eps$Grounded$\pi$Bot is greater than 1.

\section{How the \citet{oesterheld2019robust} $\eps$Grounded$\pi$Bot can solve the 3 player prisoner's dilemma, and why this doesn't work for \Cref{eg:intro_game}}\label{app:2019_3player_schemes}
Despite the fact that the \citet{oesterheld2019robust} $\eps$Grounded$\pi$Bot cannot make more than 1 expected simulation (cf.\ \Cref{app:2019_halting}), it can still be effective in some games of more than 2 players. Sometimes, one simulation is enough. For instance, consider a three-player prisoner's dilemma: Each player chooses between Cooperate ($C$) and Defect ($D$). For each player that cooperates, every player receives an additional 1 utility. If a player defects, she instead keeps 2 utility for herself. Thus, the utility of player $i$ is equal to $(\#\text{ of players that cooperate}) + 2I\{\text{player }i\text{ defected}\}$.

Then the \citet{oesterheld2019robust} $\eps$Grounded$\pi$Bot may simply simulate a random other player, and play the same action as that player. If player 1, say, then deviates to overall strategy $(1-q)C+qD$, other players will defect with probability $p$, where $p = \frac{(1-\eps)}{2}q + \frac{(1-\eps)}{2}p$, since with probability $\frac{(1-\eps)}{2}$ player 2 (say) will simulate player 1 directly, and with probability $\frac{(1-\eps)}{2}$ he will instead simulate player 3, who also defects with probability $p$, by symmetry. Thus, solving for $p$, players 2 and 3 defect with probability $\frac{1-\eps}{1+\eps}q$. The profit to player 1 of deviating is then equal to $q - 2\frac{1-\eps}{1+\eps}q$, which is non-positive when $\eps \leq \frac{1}{3}$. Thus, this is an equilibrium for $\eps\leq \frac{1}{3}$. We can also use schemes in which, e.g., player 1 simulates player 2, player 2 simulates player 3, and player 3 simulates player 1. These schemes works because there is no need for players to know who first defected, since they can defect to effectively punish everyone.

By contrast, in \Cref{eg:intro_game}, player 1 needs to know which player defected first, since the punishment actions just transfer utilities between players. Moreover, player 2 and 3 also need to punish each other for defecting, so schemes in which only player 1 simulates the other players (cf.\ \Cref{eg:trust_pure}) will not work. Formally, suppose that player 1 were to simulate player 2 with probability $\rho_{12}$ and player 3 with probability $\rho_{13}$, where $\rho_{12}+\rho_{13} < 1$. Now, players 2 and 3 have no need to simulate player 1 (since player 1 has no incentive to deviate from $(C,C,C)$), and so WLOG assume that player 2 simulates player 3 with probability $\rho_2$, and player 3 simulates player 2 with probability $\rho_3$, where $\rho_2, \rho_3 < 1$. Now, assume (WLOG, since can otherwise decrease $\rho_2$ or $\rho_3$) that players 2 and 3 defect when the simulation they run defects.

Now, suppose that player 2 deviates to defecting with probability $q$. Then player 3 will defect with overall probability $\rho_3 q$. Thus, player 1 would observe player 2 defecting with probability $\rho_{12}q$, and player 3 defecting with probability $\rho_{13}\rho_3 q$. Now, there is no point in player 1 randomising which player to punish on a given observation, since $P_2$ and $P_3$ cancel each other out. Moreover, we may assume WLOG that player 1 plays either $P_2$ or $P_3$ when one of player 1's simulations defect, otherwise we may decrease the simulation probabilities. We can then consider 4 different policies: always punishing player 2 for defection, always punishing player 3, always punishing the player observed defecting, or always punishing the other player. We can see immediately that the first two policies won't work (in the first case by considering player 3 defecting), so we can restrict attention to (i) always punishing the player observed defecting, and (ii) always punishing the other player. In case (i), player 1 will then play $P_2$ with probability $\rho_{12}q$ and $P_3$ with probability $\rho_{13}\rho_3 q$. The gain from deviating for player 2 is then $5q - 3(\rho_{12}-\rho_{13}\rho_3)q - 3\rho_3 q$, and so we would need $3\rho_3 + 3(\rho_{12}-\rho_{13}\rho_3) \geq 5$, i.e., $3\rho_{12}+ 3\rho_3(1-\rho_{13})\geq 5$. For this, we need $\rho_{12} -\rho_{13}\geq \frac{2}{3}$. But then considering player 3 defecting, we would need $\rho_{13}>\rho_{12}$, a contradiction. Thus, consider instead policy (ii). We then get that player 2 profits from deviating by $5q + 3(\rho_{12}-\rho_{13}\rho_3)q - 3\rho_3q$. We would then need $3\rho_3(1+\rho_{13})-3\rho_{12}\geq 5$, and so $\rho_{13} -\rho_{12}\geq \frac{2}{3}$. Considering deviations for player 3, this fails similarly.

\section{A detailed comparison to \citeauthor{digiovanni2023commitment}'s construction}
\label{app:comparison-to-digiovanni-et-al}

We here give a detailed comparison to \citeauthor{digiovanni2023commitment}'s (\citeyear{digiovanni2023commitment}) extension of \citeauthor{oesterheld2019robust}'s (\citeyear{oesterheld2019robust}) $\eps$GroundedFairBots. The comparison is complicated by the fact that \citeauthor{digiovanni2023commitment} consider a different setting, namely one where players can choose whether to (credibly) disclose their private information. Translated to our setting, though, here's a rough description of their extension. Imagine the programs have access to some shared randomisation device $\mathcal C ^ P$. Let's say that the goal is to implement some strategy profile $\boldsymbol \sigma$. Then their proposed program works roughly as follows: With probability $\epsilon$ -- where the randomness is sampled from $\mathcal C ^ P$ -- terminate and play according to $\boldsymbol{\sigma}$. Otherwise, simulate all opponents. If all opponents play according to $\boldsymbol{\sigma}$, then play according to $\sigma_i$ as well. Otherwise play the minimax strategy against the appropriate player. (There are some subtleties to determining who the deviating player is (since multiple opponent simulations may be found to deviate from $\boldsymbol{\sigma}$), but those subtleties don't matter here.) Halting is ensured by how $\mathcal C ^ P$ works. In particular, when $\mathcal C ^ P$ is called by one of the programs, $\mathcal C ^ P$ can see the depth of the call stack of that program. $\mathcal C^P$ correlates its answers across different programs for a given depth. This way, the programs all terminate at the same depth.

Their approach is most similar to our correlated $\epsilon$Grounded$\pi_i$Bots. There are two main similarities between how these two approaches extend \citeauthor{oesterheld2019robust}'s work. 1) Like our paper, they use correlated termination to enable simulation of multiple opponents. 2) They prove a folk theorem for their bots.

However, there are important differences between their approach and ours. The most important difference is that their setting takes care of some of the issues that we have to deal with in our programs. For one, $\mathcal C ^ P$ ``automatically'' takes care of correlated halting, which our programs implement by how they feed each other the sequences of random numbers $r$ and $x$. More significantly, their programs require that programs return the mixed strategies that they play and that one can directly access those mixed strategies. In contrast, we only ever sample a \textit{single} pure strategy/action from applying the opponent program.

Because of the latter difference, it's enough for \citet{digiovanni2023commitment} to only simulate each opponent once (`on the previous time step') and react in only one of two different ways (play $\boldsymbol\sigma$ or play minimax against the deviating player). Note specifically that in the two-player case (if we remove correlated halting, which is unnecessary in the two-player case), \citeauthor{digiovanni2023commitment}'s program is very similar to \citeauthor{oesterheld2019robust}'s $\epsilon$GroundedFairBot, where $\sigma_i$ is used in place of Cooperate and instead of Defect we use the minimax strategy against the opponent.
In contrast, our programs much more closely relate to repeated games with full memory. We thus obtain results like \Cref{lemma:eGpB_policy_correl,thm:correl_utils} that specifically relate $\eps$GroundedFairBots to repeated play.

\section{Obtaining correlated randomness via randomising over programs}\label{sec:princip_rand}
One way of obtaining correlated random sequences might be for the principals (that is, the players who submit the programs) to randomise over a set of programs, each encoding a different random sequence. Then, the programs could add their random sequences together (modulo 1), in order to obtain a shared random sequence. If all other players randomise uniformly at random (and submit $\eps$Grounded$\pi$Bots that obtain the shared random sequence in this way), no player can benefit from submitting any other distribution of sequence, since the overall sequence will still be uniform.

In practice, the principals cannot actually encode an infinite sequence in their programs. However, a similar effect could be achieved via each principal randomising over a random seed which they include in their program. The programs could then combine each of their random seeds in some way and then use this to generate a shared pseudorandom sequence. For instance, the random seeds might be strings of $k$ bits, and they could then be combined by taking the XOR of all the random seeds. Then, if each principal randomises uniformly, a principal cannot change the distribution of the final random seed by deviating from this distribution.

One issue that arises from generating a random sequence from a seed of finite length is that there must then be a maximum time-step, and a program could defect if it knew it were on the maximum time-step, because it could not then be in a simulation. The hope would be that for large enough $k$, it would be very difficult to exploit this in practice, because computing all possible random sequences would be very costly for the principals. Moreover, the potential gain from exploiting this would go to zero as $k$, and hence the maximum time-step, go to infinity. Exploring these ideas further is left for future work.

\section{Statement of halting and runtime results for uncorrelated $\eps$Grounded$\pi$Bots}\label{app:uncorrel_halting}

 \begin{restatable}{thm}{haltingUncorrel}\label{thm:halting_uncorrel}
Suppose for $i = 1, \ldots n$, $\eps_i > 0$, $\pi_i$ is a computable function and $p_i$ is an uncorrelated $\eps_i$Grounded$\pi_i$Bot (\Cref{algorithm:pi-bot-uncorrel}). Suppose further that $r_0, r_1, \ldots \simiid U[0,1]$. Then for all $i$, $p_i$ halts with probability 1 on input $(p_{-i},(r_m))$.
 \end{restatable}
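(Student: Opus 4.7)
I would follow the intuition sketched for \Cref{thm:correl_halting}: introduce a single \emph{master} grounding constant $\eps^* \defeq \min_{j=1,\ldots,n} \eps_j > 0$ and, for any $\rho = (\rho_m)_{m \geq 0} \in [0,1]^\infty$, the master time $T^*(\rho) \defeq \min\{t : \rho_t < \eps^*\}$ (with the convention $\inf \emptyset = \infty$). Because the $r_m$ are i.i.d.\ $U[0,1]$ and $\eps^* > 0$, the quantity $T^*((r_m))$ is geometrically distributed and hence finite almost surely.

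The core of the proof is the following claim, to be established by strong induction on $k \in \mathbb{N}$: \emph{for every $j \in \{1,\ldots,n\}$ and every sequence $\rho$ with $T^*(\rho) = k$, the bot $p_j$ halts on input $(p_{-j}, \rho)$.} The base case $k = 0$ is immediate: $\rho_0 < \eps^* \leq \eps_j$ forces the value $T_j$ computed in \Cref{algorithm:pi-bot-uncorrel} to be $0$, so both for-loops are skipped and $p_j$ returns $\pi_j(\omega; \rho_0)$ in finite time by computability of $\pi_j$. For the inductive step with $T^*(\rho) = k+1$, note first that $\rho_{T^*(\rho)} < \eps^* \leq \eps_j$ gives $T_j \leq T^*(\rho) = k+1$, so $T_j$ itself is found in finitely many comparisons. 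Every $\apply$ call the bot makes has input sequence $\rho' = (\rho_{m+s})_{m \geq 0}$ with shift $s = T_j + 1 - t \in \{1,\ldots, T_j\} \subseteq \{1,\ldots, k+1\}$; hence $T^*(\rho') = T^*(\rho) - s \leq k$, and the inductive hypothesis applies to each of the (finitely many) recursive calls. Combined with the finite for-loops and the final computable call to $\pi_j$, this shows $p_j$ halts on $\rho$.

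Applying the claim to the top-level bot $p_i$ and $\rho = (r_m)$ on the full-measure event $\{T^*((r_m)) < \infty\}$ then yields the theorem. The main conceptual step is the choice of $T^*$: a naive induction on the bot's own $T_j$ fails because a simulated $\eps_{j'}$Grounded$\pi_{j'}$Bot with a smaller $\eps_{j'}$ can compute a strictly \emph{larger} $T_{j'}$ than its caller's $T_j$, so the caller's $T_j$ is not a well-founded ranking on the recursion. Replacing each bot's local threshold with the common floor $\eps^*$ provides a single nonnegative integer that strictly decreases along every edge of the recursion tree, bounding its depth by $T^*((r_m))$ and reducing halting to the trivial a.s.\ finiteness of a geometric waiting time.
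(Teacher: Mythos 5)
Your proof is correct, and its mathematical core — ranking the recursion by the common floor $T^*(\rho) = \min\{t : \rho_t < \eps^*\}$ with $\eps^* = \min_j \eps_j$, so that every shifted $\apply$ call strictly decreases the rank and $T^*((r_m))$ is a.s.\ finite — is precisely what drives the paper's argument. The route differs, though: the paper dispatches \Cref{thm:halting_uncorrel} as an immediate special case of the correlated version (\Cref{thm:correl_halting}), and that theorem in turn verifies the hypotheses of a general-purpose halting lemma (\Cref{lemma:main_halting_lemma}, via \Cref{lemma:decreasingT}) whose ranking, built from ``immediate halting sets'' $H = [0,\eps^*)\times[0,1]^\infty$ and their $k$-repetitions, reduces to exactly your $T^*$ once the definitions are unwound. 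Your self-contained strong induction on $T^*$ trades reusability for transparency: because it establishes halting deterministically for every sequence $\rho$ with $T^*(\rho)<\infty$, it needs none of the measure-theoretic bookkeeping of the paper's condition \ref{cond:zero_measure_apply_calls}, and the randomness enters only to guarantee $T^*((r_m))<\infty$ almost surely. Both are sound; the paper's indirection pays off by also yielding the robustness variants (\Cref{thm:correl_halting_ext}, \Cref{thm:uncorrel_halting_ext}) from the same lemma, while yours is shorter and more legible for the statement at hand.
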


 \begin{restatable}{thm}{uncorrelHaltingExt}\label{thm:uncorrel_halting_ext}
Suppose that program $p_i$ has the following properties:
\begin{enumerate*}[label = (\roman*)]
    \item It never makes a standard $\apply$ call to itself.
    \item Its standard $\apply$ calls all have input sequence of the form $(r_m)_{m\geq k}$ with $0\leq k \leq c$ for some constant $c$.
    \item $p_i$ almost surely makes only finitely many standard $\apply$ calls directly (i.e., not counting those made inside other $\apply$ calls).
    \item $p_i$ almost surely halts whenever all its standard $\apply$ calls halt.
    \item For all events $A$ with $\prob{((r_m),(x^i_m))\in A} = 0$, $p_i$ almost surely makes no standard $\apply$ calls with input sequences in $A$.
\end{enumerate*}
Then if for each $j\neq i$, $p_j$ is an (uncorrelated) $\eps_j$Grounded$\pi_j$Bot with $\eps_j>0$, programs $p_{1:n}$ almost surely halt against each other.
 \end{restatable}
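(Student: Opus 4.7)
My plan is to show the call tree is almost surely finite, adapting the proof of \Cref{thm:halting_uncorrel}. First I would assign each node in the call tree a \emph{starting index} $k\geq 0$: a call with input sequence $(r_m)_{m\geq k}$ has starting index $k$. This is well-defined since every call's input is a shift of the root's input sequence. At an $\eps_j$Grounded$\pi_j$Bot node with starting index $k$, the algorithm yields children at starting indices $k+1,\ldots,k+T_j(k)$ where $T_j(k):=\min\{t\geq 0: r_{k+t}<\eps_j\}$; moreover $r_k,\ldots,r_{k+T_j(k)-1}\geq \eps_j\geq \eps^*:=\min_j\eps_j$. At a $p_i$ node with starting index $k$, properties (i)--(iii) force the (a.s.\ finitely many) children to be $\eps$Grounded$\pi$Bots with starting indices in $[k,k+c]$. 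Fan-out at every node is therefore a.s.\ finite, so by König's lemma it suffices to rule out infinite paths.

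Suppose for contradiction an infinite path $v_0,v_1,\ldots$ with starting indices $k_0\leq k_1\leq\ldots$ exists with positive probability. By (i) no two consecutive path nodes are $p_i$, and since $\eps$Grounded$\pi$Bot transitions strictly increase the starting index, the path contains infinitely many $\eps$Grounded$\pi$Bot nodes and $k_n\to\infty$. Let $K_1<K_2<\ldots$ be the $\eps$Grounded$\pi$Bot starting indices on the path. At most one $p_i$ sits between $K_l$ and $K_{l+1}$, so $K_{l+1}-K_l\leq T_{j_l}(K_l)+c$, and within each segment $[K_l,K_{l+1})$ the initial $T_{j_l}(K_l)$ indices satisfy $r\geq\eps^*$ while the remaining at most $c$ indices (from the possible $p_i$ jump) are unconstrained. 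Consequently $|\{m\in[K_l,K_{l+1}):r_m<\eps^*\}|\leq c$ per segment.

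The main obstacle is then closing the counting argument. By the strong law of large numbers, $|\{m\in[0,N):r_m<\eps^*\}|/N\to\eps^*$ a.s.; combined with the bound $|\{m\in[K_1,K_{L+1}):r_m<\eps^*\}|\leq cL$ this gives $\eps^*(K_{L+1}-K_1)\leq cL(1+o(1))$, which alone is not immediately contradictory since $K_{L+1}$ may still grow linearly in $L$. To force the contradiction I would additionally exploit that each $\eps$Grounded$\pi$Bot at $K_l$ on the path witnesses an independent Bernoulli-type event $r_{K_l}\geq \eps_{j_l}$, appealing to property (v) to ensure $p_i$'s choice of children does not concentrate on probability-zero subsets of the shift space (so the events along the random sub-sequence $(K_l)$ may be treated compatibly with the stationary structure of $(r_m)$). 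Property (iv) is then invoked to reduce halting of each $p_i$ node to halting of its (a.s.\ finitely many) $\eps$Grounded$\pi$Bot children, completing the induction. The overall structure mirrors \Cref{thm:correl_halting_ext}: the role played there by the shared sequence $(r_m)$ and $\apply^*$ masking is here played by the fact that every subsimulation in the call tree inherits a shift of the root call's $(r_m)$, giving the same stationarity that drives the halting argument.
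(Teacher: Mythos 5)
Your route --- K\"onig's lemma on the call tree with per-node starting indices and a counting argument along a would-be infinite path --- is genuinely different from the paper's, which reduces the uncorrelated statement to the correlated one (\Cref{thm:correl_halting_ext}) and then to \Cref{lemma:one_naughty_one}, where halting follows from an explicit integer-valued potential $T(j,(r_m))$ that strictly decreases under every standard $\apply$ call. Your bookkeeping is correct as far as it goes (indices strictly increase at $\eps$Grounded$\pi$Bot nodes, increase by at most $c$ at $p_i$ nodes, and at most $c$ indices with $r_m<\eps^*$ occur in each segment $[K_l,K_{l+1})$), but the counting argument does not close, as you note yourself: the SLLN comparison yields only $\eps^*(K_{L+1}-K_1)\lesssim cL$, while $K_{L+1}-K_1\geq L$, which gives the trivial $\eps^*\leq c$. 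The fix you sketch is not available: the segment boundaries $K_l$ and the identities $j_l$ of the $\eps$Grounded$\pi$Bots encountered there are measurable functions of the very sequence $(r_m)$ you are averaging over, so the events $\{r_{K_l}\geq\eps_{j_l}\}$ are not independent, and condition (v) --- which only restricts which input sequences get passed to $\apply$ calls --- supplies neither independence nor stationarity along the random subsequence $(K_l)$.

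The missing idea is to replace the counting by a single almost-sure witness that bounds the entire tree at once. Almost surely there is some $m^*$ with $r_{m^*},\ldots,r_{m^*+c}$ all $<\eps^*\defeq\min_j\eps_j$ (the $(r_m)$ are i.i.d.\ and $\eps^*>0$). Given such an $m^*$, an induction on depth shows every $p_i$ node has index $\leq m^*$ and every $\eps$Grounded$\pi$Bot node has index $\leq m^*+c$: an $\eps_j$Grounded$\pi_j$Bot at index $k<m^*$ computes $T\leq m^*-k$ (because $r_{m^*}<\eps^*\leq\eps_j$), so its children lie at indices $\leq m^*$; one at index $k\in[m^*,m^*+c]$ has $r_k<\eps_j$ and halts with no children; and $p_i$ is either the root (index $0\leq m^*$) or a child of such a bot (index $\leq m^*$), so by (ii) its children sit at indices $\leq m^*+c$. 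Since the index strictly increases at least every other step along any path (by (i)), all paths have length $O(m^*+c)$; together with your finite-fan-out observation and (iii)--(v) to promote the countably many per-node a.s.\ statements to one simultaneous one, the tree is a.s.\ finite, and then (iv) plus computability of the $\pi_j$ gives halting by induction on depth. This run-of-$c{+}1$-small-values idea is exactly what the paper encodes via $\mathcal{R}_{2c}(H)$ with $H=[0,\eps^*)\times[0,1]^\infty$ and the parity split $T=2D$ versus $T=1+2D$ in \Cref{lemma:one_naughty_one}.
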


\begin{restatable}{thm}{expectedRuntime}\label{thm:expected_runtime}
Suppose that program $p_i$ is an uncorrelated $\eps_i$Grounded$\pi_i$Bot (\Cref{algorithm:pi-bot-uncorrel}) for $i = 1,\ldots n$. Then if at least two $p_i$ have $\eps_i<1/2$, every $p_i$ with $\eps_i < 1$ has infinite expected runtime on input $(p_{-i},(r_m))$, where $r_1,r_2,\ldots \simiid U[0,1]$. However, with memoization, the expected runtime is finite (and polynomial in $\frac{1}{\min_i \eps_i})$, provided that for all $i$, $\pi_i(\bm{a}^{1:t})$ runs in polynomial time in $t$.
\end{restatable}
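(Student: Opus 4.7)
The plan is to analyze the random call tree directly, with a separate argument for each claim.

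For the infinite expected runtime, I will count the expected total number of $\apply$ calls as a sum over all root-to-node paths in the call tree. A path $((p_{j_0}, k_0), \dots, (p_{j_d}, k_d))$ with $j_0 = i$, $k_0 = 0$, and $0 = k_0 < k_1 < \cdots < k_d$ lies in the tree iff for every $m < d$ the sub-call from $(p_{j_m}, k_m)$ reaches position $k_{m+1}$, i.e., $T_{j_m}^{(k_m)} = \min\{s \geq 0 : r_{k_m + s} < \eps_{j_m}\} \geq k_{m+1} - k_m$. This is equivalent to $r_p \geq \eps_{j_m}$ for every $p \in [k_m, k_{m+1})$, and because these position intervals are disjoint and the $r_p$ are independent, the event has probability $\prod_{m=0}^{d-1} (1-\eps_{j_m})^{k_{m+1} - k_m}$.

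Summing over length-$(d+1)$ paths: the sum over $j_d$ contributes a free factor $n$ (since $\eps_{j_d}$ does not appear in the product), and the sums over $j_1, \dots, j_{d-1}$ and gaps $\delta_m = k_{m+1} - k_m \geq 1$ factor and, via $\sum_{\delta \geq 1} (1-\eps_j)^\delta = \mathbb{E}[T_j]$, yield $\mathbb{E}[T_i] \bigl(\sum_j \mathbb{E}[T_j]\bigr)^{d-1}$. Hence the expected number of paths of length $d+1$ is $n \mathbb{E}[T_i] \bigl(\sum_j \mathbb{E}[T_j]\bigr)^{d-1}$ for $d \geq 1$, so the expected total number of $\apply$ calls diverges whenever $\sum_j \mathbb{E}[T_j] \geq 1$. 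Two of the $\eps_i$ being less than $1/2$ makes $\mathbb{E}[T_i] = (1-\eps_i)/\eps_i > 1$ for those indices, so $\sum_j \mathbb{E}[T_j] > 2 \geq 1$, and since each $\eps_i < 1$ gives $\mathbb{E}[T_i] > 0$, the expected runtime is infinite for every $p_i$.

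For the polynomial bound with memoization, the key combinatorial fact is that every $\apply$ call in the tree uses input of the form $(p_\ast, (r_m)_{m \geq k})$ with $0 \leq k \leq T^*$, where $T^* = \min\{t : r_t < \min_j \eps_j\}$. The initial call is at $k=0$; and for any call to $p_j$ at position $k \leq T^*$ we have $T_j^{(k)} \leq T^* - k$ because $r_{T^*} < \min_j \eps_j \leq \eps_j$, so its sub-calls land at positions in $\{k+1, \dots, T^*\}$. Therefore at most $n(T^* + 1)$ distinct (program, position) pairs appear, and memoization keyed by program identity and the integer offset $k$ computes each exactly once.

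Each memoized computation does $O(1)$ work to determine its time step, $O(n T^*)$ work for the doubly nested loop (each inner step being a constant-time memo lookup or a freshly spawned computation whose cost is charged elsewhere), and $\mathrm{poly}(T^*)$ work for $\pi_j$. Summing over the $\leq n(T^*+1)$ distinct calls gives total runtime $\mathrm{poly}(n, T^*)$; since $T^*$ is geometric with parameter $\min_j \eps_j$, every moment of $T^*$ is polynomial in $1/\min_j \eps_j$, giving the desired $\mathrm{poly}(1/\min_j \eps_j)$ expected-runtime bound. The main subtlety I anticipate is specifying the memo-key so lookups really are constant time (use the integer $k$ rather than the full tail sequence) and ensuring the analysis handles the recursive charging of freshly spawned memo entries without double-counting; modulo this bookkeeping, the argument is routine.
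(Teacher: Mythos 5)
Your proof is correct, and the first part (infinite expected runtime without memoization) takes a genuinely different route from the paper's. The paper conditions on $T=t$ (from the viewpoint of a program $p_i$ with one of the two smallest $\eps$'s) and proves the recursion $f_i(t)\geq f_i(t-1)+f_j(t-1)\geq 2f_i(t-1)$, giving $f_i(t)\geq 2^t$ and hence $\Ex[f_i(T)]=\infty$ when $\eps_i<1/2$, then extends to the remaining programs by noting they all call $p_i$ with positive probability. Your approach instead sums the existence-probability of each potential root-to-node path directly; the disjointness of the intervals $[k_m,k_{m+1})$ makes these probabilities multiplicative in a way that lets the whole sum factor cleanly into $n\,\Ex[T_i]\bigl(\sum_j \Ex[T_j]\bigr)^{d-1}$ per depth $d$. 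This buys you a sharper threshold — divergence exactly when $\sum_j \Ex[T_j]\geq 1$, of which ``two $\eps_i<1/2$'' is merely one sufficient condition — and avoids the paper's somewhat delicate monotonicity claims ($f_j\geq f_i$, $f_j$ nondecreasing in $t$). In exchange it requires some care to justify the path characterisation, which you handle correctly. The memoization half of your argument is essentially identical to the paper's: bound the number of distinct (program, offset) pairs by $O(nT^*)$ with $T^*=\min\{t:r_t<\min_j\eps_j\}$, charge each a polynomial cost, and use that all moments of the geometric $T^*$ are polynomial in $1/\min_j\eps_j$.

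One small point worth flagging explicitly (though the paper glosses it too): if some $\eps_k=1$ then $p_k$ makes no calls at all and trivially has finite runtime, so ``every $p_i$'' implicitly requires all $\eps_i<1$; your remark ``since each $\eps_i<1$ gives $\Ex[T_i]>0$'' silently carries this assumption, and it would be cleaner to state it.
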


\section{Relating uncorrelated $\eps$Grounded$\pi$Bots to repeated games}\label{app:rep_uncorrel}
In general, there is no simple relationship between utilities in games of $\eps$Grounded$\pi$Bots and utilities in repeated games. Here, we will restrict attention to a special case where there is a simple relationship.

We may define associated policies for repeated games with correlation (but no screening) exactly analogously as for the correlated setting, treating the sequences $(r_m^i)$ as if they were the shared sequence:
\begin{defn}[Associated repeated game policy]
    Given programs $p_1,\ldots p_n$ for game $G$, we define the associated (correlated) repeated game policy in $G_\eps$ for player $i$ as $\tau_i(\omega;q) \defeq \E{\apply(p_i,(p_{-i},(r_m)))\mid r_0=\eps q}$ and
                $\tau_i(\bm{\alpha}^{1:t};q) \defeq \Ex[\apply(p_i,(p_{-i},(r_m)))\mid r_0 = \eps + (1-\eps)q,\ \bm{a}^{1:t}=\bm{\alpha}^{1:t},T=t]$ wherever $\bm{a}^{1:t}=\bm{\alpha}^{1:t}$ occurs with positive probability.
\end{defn}
Since the sequence $(x_m)$ no longer exists, we do not need to consider screening. Similarly to before, we have that if $p_i$ is an $\eps$Grounded$\pi_i$Bot its repeated game policy is given by $\tau_i(\omega;q) = \pi_i(\omega;\eps q)$ and $\tau_i(\bm{\alpha}^{1:t},1q) = \pi_i(\bm{\alpha}^{1:t}, \eps+(1-\eps)q)$.

Now, note that the distribution of each program's output does not depend on whether the other programs receive the same random sequence as it or not. Thus, we may simply apply our result for correlated program games in the special case where no program ever accesses the $(x_m^i)$:

\begin{restatable}{cor}{uncorrActDist}\label{cor:uncorrActDist}
    Let $G$ be an $n$-player program game with uncorrelated random sequences. Let $G_\eps$ be the corresponding correlated repeated game. Suppose that for $j\neq i$, $p_j$ is an $\eps$Grounded$\pi_j$Bot, and that $p_i$ is any program such that $p_1,\ldots p_n$ almost surely halt against each other.

    For each $j$, let $\tau_j$ be the associated repeated game policy for program $j$, defined arbitrarily anywhere it is undefined. Then program $j$ has overall strategy $\E{\apply(p_j,(p_{-j},(r_m)))}$ equal in distribution to player $j$'s last action in the repeated game between policies $\bm{\tau}$.

     As a special case, if the utilities are additively separable we have $\Ex_G[\bm{u}\mid p_{1:n}]= \eps \Ex_{G_\eps}[\bm{u}\mid \tau_{1:n}]$.
\end{restatable}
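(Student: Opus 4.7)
The plan is to reduce to the correlated-setting result \Cref{thm:correl_utils} by exploiting the observation (flagged in the paragraph preceding the corollary) that in the uncorrelated game each program's output is a deterministic function of its own input sequence. Inspecting the $\eps$Grounded$\pi_k$Bot code confirms that whenever it launches a subsimulation it passes on (a suffix of) the sequence it received, so the entire $\apply$-tree rooted at $p_j$'s direct invocation consumes only $(r_m^j)$. Consequently, the distribution of $A_j\defeq\apply(p_j,(p_{-j},(r_m^j)))$ would be unchanged if every other program also received $(r_m^j)$ in place of its own independent sequence.

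To turn this into a formal invocation of \Cref{thm:correl_utils}, I would define, for each $k$, a correlated-setting analogue $\tilde p_k$ that takes inputs $(\tilde p_{-k},(r_m),(x_m))$, ignores $(x_m)$, and runs $p_k$ on $(\tilde p_{-k},(r_m))$. For $k\neq i$, $\tilde p_k$ is equivalent to a correlated $\eps$Grounded$\pi'_k$Bot whose policy $\pi'_k$ discards its $x_0$ argument. Since $\tilde p_k$ never accesses $(x_m)$, the $\apply^*$ calls behave exactly like $\apply$ calls and no screening ever occurs, so by \Cref{lemma:eGpB_policy_correl} the associated correlated screened and unscreened policies of $\tilde p_k$ both equal the associated uncorrelated policy $\tau_k$ of $p_k$. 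By the preceding paragraph, the marginal distribution of $\tilde p_j$'s output in the correlated game on $\tilde p_{1:n}$ equals that of $A_j$ in the uncorrelated game. The argument behind \Cref{thm:correl_utils} -- which identifies the joint distribution of simulated per-time-step actions inside the correlated $\eps$Grounded$\pi$Bots with the distribution of histories in $G_\eps$ under $\bm\tau$ -- then gives that this marginal coincides with the distribution of player $j$'s last action in $G_\eps$, establishing the first claim.

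For the additively-separable special case, linearity together with separability gives $\Ex_G[u_i\mid p_{1:n}] = \sum_j \E{u_{ij}(A_j)}$, which depends only on the marginals just computed. With $T$ geometric (so $\prob{T=t}=\eps(1-\eps)^{t-1}$) and independent of the per-step actions, the marginal expectation expands as $\E{u_{ij}(A_j)} = \eps\sum_{t\geq 1}(1-\eps)^{t-1}\E{u_{ij}(A_j^t)}$, while $\Ex_{G_\eps}[u_i\mid\bm\tau] = \sum_{t\geq 1}\prob{T\geq t}\sum_j\E{u_{ij}(A_j^t)} = \sum_j\sum_{t\geq 1}(1-\eps)^{t-1}\E{u_{ij}(A_j^t)}$. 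Summing the marginal identities over $j$ and comparing yields $\Ex_G[u_i\mid p_{1:n}] = \eps\,\Ex_{G_\eps}[u_i\mid\bm\tau]$.

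The main obstacle is the marginal reduction in the first paragraph: one must carefully verify, by induction over the $\apply$-tree, that no input sequence other than $(r_m^j)$ ever enters a call within the tree rooted at $p_j$'s direct invocation. This relies crucially on the specific sequence-passing structure of the $\eps$Grounded$\pi_k$Bots, and it is essential that only the \emph{marginal} distribution of $A_j$ is needed -- the joint distribution of $(A_1,\ldots,A_n)$ in the uncorrelated game genuinely depends on all $n$ independent sequences, so the analogous joint reduction to the correlated setting would be false, and we would not be able to apply \Cref{thm:correl_utils} without the additive-separability assumption in the second claim.
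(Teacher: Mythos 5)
Your proof follows essentially the same approach as the paper's: both reduce the uncorrelated setting to the correlated one by observing that the programs never access a private $(x_m)$ sequence (so $\apply^*$ and $\apply$ coincide and no screening occurs), then invoke \Cref{thm:correl_utils} to obtain the \emph{marginal} distribution of each program's output individually, and finally note that under additive separability only marginals matter. Your $\tilde p_k$ wrapper construction is a little informal as stated (literally running $p_k$ on input programs $\tilde p_{-k}$ would have it pass two-argument inputs to three-argument programs $\tilde p_l$, and the uncorrelated $\eps$Grounded$\pi_k$Bot's policy depends on $r_{0:T}$ rather than just $r_0$ as in \Cref{algorithm:pi-bot-correl}), but the intended reduction is clear and coincides with what the paper's terse proof uses implicitly when it says ``in the special case where programs cannot use the sequences $(x^i_m)$.''
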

\begin{proof}
    Note that in the proof of \Cref{thm:correl_utils}, we showed that the programs' actions were distributed as the actions of the repeated game policies on the last time-step of the repeated game with correlation and screening $G_\eps$. Applying this result individually for each program, in the special case where programs cannot use the sequences $(x^i_m)$, we obtain the result on overall strategies of each program.

    For the last part, we need only note that if the utilities are additively separable, the joint distribution of actions does not matter (since we then have $\E{u_{i}(A_1,\ldots,A_n)} = \E{u_{i1}(A_1)+\cdots+u_{in}(A_n)} = \sum_{j}\E{u_{ij}(A_j)}$ for any random variables $A_1,\ldots,A_n$). Hence, the same utility is obtained whether the actions are independent or correlated.
\end{proof}
This gives us the following folk theorem for games with additively separable utilities:
\additiveFolk*
\begin{proof}\label{pf:additiveFolk}
    This follows from \Cref{cor:uncorrActDist}, and our folk theorem for correlated repeated games with screening (\Cref{thm:repFolkThm}), applied to the special case where screening never occurs, and our programs cannot randomise independently. Note that in our proof of the latter, we used independent randomisation for punishment, but could equally have used the correlated random variables for this randomisation in the case of additively separable utilities, since correlation does not matter for such games.
\end{proof}

\section{Why define simulationist that way?}\label{app:simulationist}
Recall that a program $p_i$ is simulationist if it only references the input programs via calls of the form $\apply(p_j,(p_{-j}',(r_m')))$ where each $p_j'$ is either one of the input programs, or is itself simulationist. Why do we define it this way, instead of just allowing it simulationist programs to make \emph{any} calls of the form $\apply(p_j,(p_{-j}',(r_m')))$, including where the $p_{-j}'$ are not themselves simulationist? Let's imagine we instead used the latter definition. Consider the Prisoner's Dilemma. We could then have a program $p_1$ that runs a call of the form $\apply(p_2,(p^*,(r_m')))$, where $p^*$ is a program that checks something syntactic about its input program. If $p_2$ were then something like an $\eps$GroundedFairBot, $p_2$ would run $p^*$ on $p_2$, whose output would depend on syntactical details of $p_2$, and hence $p_2$'s output would also depend on syntactical details of $p_2$, which would reveal syntactical information about $p_2$ to $p_1$. $p_1$ does not then seem to fit the spirit of `simulationist'.

One can also consider a 3-player example in which programs $p_1$ and $p_2$ would be `simulationist' under the latter example, but their joint behaviour would not even be behaviourist (i.e., they could give different outputs on two programs that give the same output on every input). To see this, let $p_1$ simulate $p_2$ with input programs $p^*$ and $p_3$, and $p_2$ simulate its first input program with input (its second input program, $p_3^*$) for some $p_3^*$. Then, the simulation of $p_2$ would run $p^*$ on input $p_3$, $p_3^*$. If $p^*$ then is a program that outputs a certain action iff its inputs are exactly equal, this would then tell $p_2$ whether $p_3$ is syntactically equal to $p_3^*$, and this information could be passed to $p_1$ via $p_2$'s action. This would allow the output of $p_1$ on input $p_2,p_3$ to depend purely on syntactical details of $p_3$.

\section{Further results on more general simulationist programs}\label{app:gen_simul_further}
\begin{restatable}{prop}{outputIndepHalting}\label{prop:output_indep_halting}
    Correlated $\eps$Grounded$\pi$Bots halt output independently against each other. The same holds for uncorrelated $\eps$Grounded$\pi$Bots.
\end{restatable}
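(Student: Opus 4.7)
The plan is to reduce to the existing halting theorems for $\eps$Grounded$\pi$Bots (\Cref{thm:correl_halting}, and its uncorrelated analogue \Cref{thm:halting_uncorrel}), by showing that the recursion tree that governs halting is invariant under the substitution of one program.

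The key structural observation I would prove first is that, within an $\eps$Grounded$\pi_j$Bot $p_j$, the schedule of $\apply^*$ calls (respectively $\apply$ calls in the uncorrelated case) is a deterministic function of $p_j$'s own inputs. Specifically, the bound $T$ is determined by $p_j$'s random sequence $(r_m)$, and the inputs $(\p_{-k}, (r_m)_{m\geq T+1-t}, (x_m))$ of each call are in turn determined by $p_j$'s input program list and input random sequences, independently of the outputs of its callees; the final application of $\pi_j$ happens only after all calls resolve. Consequently, when we replace $p_i$ by $p_i'$ throughout the system, the recursion tree in the new system has the same shape as that in the original, with $p_i'$ standing in for $p_i$ at every node. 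In particular, when $p_i'$ is invoked as a callee its input program list is $\{p_1,\ldots,p_{i-1},p_{i+1},\ldots,p_n\}$ and its input random sequences are exactly those $p_i$ would have received at the corresponding node of the original tree.

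The proof then concludes quickly. Let $E$ be the event that the original system halts; $\prob{E}=1$ by \Cref{thm:correl_halting} (or \Cref{thm:halting_uncorrel}). On any $\omega \in E$ the original recursion tree is finite and every call terminates, so $p_i$ halts on each of the (finitely many) inputs at which it is invoked on $\omega$. The hypothesis on $p_i'$ then gives that $p_i'$ halts on those same inputs, and a straightforward induction on depth in the new tree shows that every call in the new system halts on $\omega$: calls to $p_k$ ($k\neq i$) halt because their own subcalls halt by the induction hypothesis and $\pi_k$ is computable, while calls to $p_i'$ halt by the hypothesis on $p_i'$. Hence the new system halts on $\omega$, and therefore almost surely.

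I do not expect a serious obstacle; the only nontrivial point is the structural claim that the recursion tree is invariant under the substitution, which is a direct reading of \Cref{algorithm:pi-bot-correl} (respectively \Cref{algorithm:pi-bot-uncorrel}) and the fact that $\apply^*$-call halting coincides with callee halting regardless of whether $(x_m)$ is accessed.
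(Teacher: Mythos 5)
Your proof is correct and uses essentially the same approach as the paper: truncate the call tree at $p_i'$-calls, observe that the truncated tree's shape and the inputs at each node are invariant under the substitution (since $\eps$Grounded$\pi$Bots schedule their $\apply^*$/$\apply$ calls purely from $(r_m)$, never from callee outputs), and deduce that $p_i'$ halts at each such leaf because $p_i$ did there in the original tree. One small imprecision worth flagging: it is not literally true that ``the recursion tree in the new system has the same shape as that in the original, with $p_i'$ standing in for $p_i$ at every node''---the subtree rooted at a $p_i'$-call can look arbitrary, since nothing is assumed about $p_i'$ beyond halting---but your argument never relies on this stronger claim: your induction treats $p_i'$-calls as black boxes that halt by hypothesis, which is exactly the truncation the paper's proof performs explicitly.
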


The below result is an equivalent formulation of \Cref{prop:simulationistNegativeResultGame}. One may think of $\delta_j^t$ below for each player $j$ as roughly the probability of player $j$ being on `time step' $t$ (equivalent to $\prob{T_j+1=t}$ in \Cref{prop:simulationistNegativeResultGame}), $c_j^t$ as the strategy player $j$ plays on `time step' $t$ at equilibrium, $d_i$ as a strategy that player $i$ defects to after `time step' $t_0$, and $s_j$ as the punishment strategy chosen by player $j$ given $t_0$ and $d_i$.
\begin{restatable}{thm}{simulationistNegativeResult}\label{thm:simulationist_negative_result}
Let $p_1,\ldots, p_n$ be simulationist programs for game $G$, in the uncorrelated setting. Assume $p_1, \ldots, p_n$ halt against each other output independently.
Let $c_i=\E{\apply(p_i,(p_{-i},(r_m)))}$ be the strategy of program $p_i$ against programs $p_{-i}$, and $v_i = u_i(c_{1:n})$ be the utility for player $i$.
Suppose that $(p_1,\ldots, p_n)$ is a program equilibrium. Then there exist sequences $(\delta^t_1)_t,\ldots,(\delta^t_n)_t\in [0,1]^\infty$ with, for all $j$, $\sum_{t=1}^\infty \delta_j^t = 1$, and $(c^t_1)_t,\ldots, (c^t_n)_t$ with $c^t_j\in \Delta(\mathcal{A}_j)$, such that $\sum_{t=1}^\infty \delta_j^tc_j^t = c_j$ for each $j$, with:
\begin{align*}
    &v_i \geq \max_{\substack{t_0\in \mathbb{N}\\
    d_i \in \Delta(\mathcal{A}_i)}}\min_{
    s_j\in \Delta(\mathcal{A}_j),\text{ each } j\neq i} u_i\left(\sum_{t=1}^{t_0-1}\delta_i^tc_i^t+d_i\sum_{t\geq t_0}\delta_i^t,\left(\sum_{t=1}^{t_0}\delta_j^t c_j^t + s_j\sum_{t> t_0}\delta_j^t\right)_{j\neq i}\right).
\end{align*}
(Where the above denotes player $j$, for $j\neq i$, playing according to $\sum_{t=1}^{t_0}\delta_j^tc_j^t + s_j\sum_{t>t_0}\delta_j^t$.)

\end{restatable}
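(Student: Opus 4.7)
The plan is to associate to each simulationist program $p_i$ a random variable $T_i$ capturing the depth of its nested simulation call tree, then to probe the equilibrium with deviations of the form ``behave like $p_i$ until the call tree gets too deep, then play $d_i$,'' parameterised by a cut-off $t_0$, exploiting that opponents cannot observe the deviation when their own call tree is shallow.

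First, I would define $T_i$ to be the maximum nesting level of $\apply$ calls in the execution of $p_i$ on input $(p_{-i},(r_m))$, with the convention $T_i=1$ when $p_i$ makes no $\apply$ calls. This is a deterministic function of $(r_m^i)$, and almost surely finite since the $p_{1:n}$ halt against each other. Setting $\delta_i^t:=\prob{T_i=t}$ and $c_i^t:=\E{\apply(p_i,(p_{-i},(r_m)))\mid T_i=t}\in\Delta(\mathcal{A}_i)$, the tower property gives $c_i=\sum_{t=1}^\infty\delta_i^tc_i^t$ and $\sum_t\delta_i^t=1$.

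Given $t_0\in\mathbb{N}$ and $d_i\in\Delta(\mathcal{A}_i)$, I would then build a deviation program $p_i'$ that, on input $(p_{-i},(r_m))$, ``dry-runs'' $p_i$ (interpreting each $\apply$ call) to compute both $T_i$ and $p_i$'s output, and returns an action sampled from $d_i$ (via an otherwise unused component of the tape) if $T_i\geq t_0$, and $p_i$'s output otherwise. Since $p_i'$ halts on exactly the inputs on which $p_i$ halts, the output-independent halting assumption gives that $(p_i',p_{-i})$ also halt almost surely, so the equilibrium hypothesis applies and bounds player $i$'s expected utility when submitting $p_i'$ by $v_i$.

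The key structural lemma is that, for any $j\neq i$, every occurrence of $p_i'$ as a simulation inside $p_j$'s call tree sits strictly below the root, so its sub-tree has nesting depth at most $T_j-1$; consequently, conditioning on $T_j\leq t_0$, none of these $p_i'$-occurrences triggers the $d_i$-branch and $p_j$'s entire trace (and hence its output) coincides with what it would be against $(p_i,p_{-i,-j})$. In particular, $\{T_j\leq t_0\}$ is the same event in the deviation and non-deviation scenarios (as a set of tapes $(r_m^j)$), and on this event $p_j$'s output has distribution $c_j^{T_j}$; defining $s_j\in\Delta(\mathcal{A}_j)$ as the marginal of $p_j$'s action conditional on $T_j>t_0$ in the deviation scenario, and noting that independence of the tapes $(r_m^k)$ across $k$ makes the $n$ outputs mutually independent, player $i$'s expected utility in the deviation scenario equals
\[
u_i\Bigl(\textstyle\sum_{t=1}^{t_0-1}\delta_i^tc_i^t+d_i\sum_{t\geq t_0}\delta_i^t,\bigl(\sum_{t=1}^{t_0}\delta_j^tc_j^t+s_j\sum_{t>t_0}\delta_j^t\bigr)_{j\neq i}\Bigr).
\]
Equilibrium places $v_i$ above this quantity, and bounding by the $\min$ over $s_j$ and the $\max$ over $(t_0,d_i)$ yields the theorem. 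The main obstacle will be making the structural lemma precise: one must formalise the call tree so that nesting depth is well-defined, and induct on nesting depth to show that under the simulationist restriction, whenever $T_j\leq t_0$, $p_j$'s recursive trace on $(p_i',p_{-i,-j})$ is bit-identical to its trace on $(p_i,p_{-i,-j})$.
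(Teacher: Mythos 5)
Your proposal matches the paper's proof in all essentials: the paper likewise defines $T_i$ as the maximum nested $\apply$-call depth of $p_i$ against $p_{-i}$, sets $\delta_i^t$ and $c_i^t$ by conditioning on $T_i$, constructs the deviation $p_i'$ that mimics $p_i$ until $T_i$ reaches $t_0$ and then plays $d_i$, uses output-independent halting to apply the equilibrium condition, and proves the same structural lemma (by a well-founded induction on call depth, using the simulationist restriction to show a $p_j$ with $T_j\le t_0$ cannot distinguish $p_i$ from $p_i'$ anywhere in its call tree). The only cosmetic differences are an indexing convention for $T_i$ and that the paper first routes through the game reformulation of \Cref{prop:simulationistNegativeResultGame} before instantiating it, which your direct argument collapses.
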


In the above, we cannot make the simplifying assumption that all sequences $(\delta_i^t)$ are geometric -- see \Cref{app:cant_assume_geo}.

\begin{defn}
    Say that a strategy profile $(c_1,c_2)$ is \emph{strictly Pareto optimal in the limit of small deviations} if it is strictly Pareto optimal, and for all $(s_1,s_2)\neq (c_1,c_2)$, we have $u_i(s_1,c_2)+u_i(c_1,s_2)<u_i(c_1,c_2)$, for some $i$.
\end{defn}

\Cref{thm:simulationist_negative_result} doesn't rule out strategy profiles which are strictly Pareto optimal in the limit of small deviations and strictly individually rational for two players:

\begin{restatable}{thm}{noRuleOutPareto}
    All strategy profiles that are strictly Pareto optimal in the limit of small deviations and strictly individually rational in two player games satisfy the conditions of \Cref{thm:simulationist_negative_result}/\Cref{prop:simulationistNegativeResultGame}.
\end{restatable}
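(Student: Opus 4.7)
The plan is to construct the required sequences by setting $c_i^t = c_i$ for all $t$ (so the averaging constraint $\sum_t \delta_i^t c_i^t = c_i$ is trivially satisfied) and choosing $\delta_i^t$ inductively via the tail probabilities $a_k \defeq \sum_{t \geq k}\delta_1^t$ and $b_k \defeq \sum_{t \geq k}\delta_2^t$. The first step is to use bilinearity of expected utility to rewrite the max-min inequality of \Cref{thm:simulationist_negative_result} at $t_0 = k$ (after minimizing over $s_j$ and maximizing over $d_i$) as a pair of constraints of the form $a_{k+1} \geq G_1(b_k)$ and $b_{k+1} \geq G_2(a_k)$, for continuous, increasing functions $G_1, G_2$ (derived from the game, with $G_i(0) = 0$ and $G_i(1) \leq 1$ by strict individual rationality). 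Starting from $a_1 = b_1 = 1$ and setting $a_{k+1} = G_1(b_k)$, $b_{k+1} = G_2(a_k)$ then yields monotonically non-increasing sequences.

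The next step is to show $a_k, b_k \to 0$, which is what gives $\sum_t \delta_i^t = 1$. Since the composed iteration $a_{k+2} = G_1(G_2(a_k))$ has fixed point $0$, it suffices to show the contraction $G_1'(0) \cdot G_2'(0) < 1$. Linearizing the bilinear expressions gives $G_1'(0) = \alpha_2/\beta_2$ and $G_2'(0) = \alpha_1/\beta_1$, where $\alpha_i \defeq \max_{d_i \in \Delta(\mathcal{A}_i)} u_i(d_i, c_j) - v_i \geq 0$ is player $i$'s best-response gain against $c_j$, and $\beta_i \defeq v_i - \min_{s_j \in \Delta(\mathcal{A}_j)} u_i(c_i, s_j) > 0$ is the maximum punishment available against $c_i$ (positive by strict individual rationality). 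The convergence condition thus becomes $\alpha_1 \alpha_2 < \beta_1 \beta_2$.

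The main obstacle is deriving this product inequality from strict Pareto optimality in the limit of small deviations. I would apply strict POL to the two-parameter family $\bigl((1-\mu) c_1 + \mu d_1^*,\, (1-\nu) c_2 + \nu d_2^*\bigr)$ for $\mu, \nu > 0$, where $d_i^*$ attains $\alpha_i$. For strict POL to hold at every such point, at every ratio $\mu/\nu$ one of the two players' linearized payoff changes must be strictly negative; writing this out shows that the two ratio intervals on which POL fails for each player separately must be disjoint, forcing $\alpha_1 \alpha_2 < \gamma_1 \gamma_2$, where $\gamma_i \defeq v_i - u_i(c_i, d_j^*)$. Since the minimax response against $c_i$ is at least as adversarial for $i$ as the particular strategy $d_j^*$, we have $\gamma_i \leq \beta_i$, yielding $\alpha_1 \alpha_2 < \beta_1 \beta_2$ as required. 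Edge cases with some $\alpha_i = 0$ (so $c_i$ is already a best response) reduce to either a trivial Nash construction or the $(S,F)$-style construction of \Cref{eg:trust_pure}, and the whole argument carries over unchanged to mixed $c_i$.
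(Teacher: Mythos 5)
Your overall strategy matches the paper's: define detection-threshold functions $G_i$ (the paper's $q_i$), show the product of their derivatives at zero is $< 1$ by invoking strict Pareto optimality in the limit of small deviations, and build the time-step distributions from these. The ratio-interval disjointness argument you use to get $\alpha_1\alpha_2 < \gamma_1\gamma_2$ is the same computation the paper does by plugging a single convex combination $s_1 = \frac{1}{1+\rho}d_1 + \frac{\rho}{1+\rho}c_1$, $s_2 = \frac{\rho}{1+\rho}d_2 + \frac{1}{1+\rho}c_2$ into the strict-POL definition; both reduce to the observation that the two failure cones in $(\mu,\nu)$-space cannot overlap. Your inequality $\gamma_i \leq \beta_i$ is also the same slackening the paper uses.

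However, there is a genuine gap in your convergence step. You write \enquote{it suffices to show the contraction $G_1'(0)\cdot G_2'(0) < 1$}, but a derivative bound at $0$ is a local statement and does not force the iteration $a_{k+2} = G_1(G_2(a_k))$ started at $a_1 = 1$ to reach zero: the composed map could have a fixed point in $(0,1]$ away from the origin (e.g.\ it could agree with the identity on some subinterval near $1$), in which case your monotone sequence stalls at that fixed point and $\sum_t \delta_i^t < 1$. What is actually needed is the global statement $G_1(G_2(p)) < p$ for all $p \in (0,1]$. The paper proves exactly this (that $q_1(q_2(p)) < p$ for every $p$) by appealing to plain strict Pareto optimality of $(c_1,c_2)$ --- not just the linearized \enquote{limit of small deviations} version --- applied to the full (non-infinitesimal) deviation pair. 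It then combines this pointwise strict inequality with continuity and compactness of $[0,1]$ to obtain a uniform contraction factor $1-x < 1$, and finally constructs $T_1$ as geometric with parameter tied to $x$ rather than running your fixed-point iteration directly. Your proof sketch never uses the non-infinitesimal Pareto hypothesis, so as written the construction can fail; you would need to add the pointwise argument and a compactness step (or adopt the paper's geometric-then-derived construction) to close the gap.
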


It remains an open question whether there always exist simulationist programs (e.g., perhaps something like \Cref{app:gen_simul_idea}, \Cref{algorithm:gen_simul}) attaining such strategy profiles at equilibrium.

\section{An idea for more general simulationist programs}\label{app:gen_simul_idea}

Here, we sketch an idea for more general simulationist programs, using the following ideas:
\begin{enumerate}
    \item More general distributions over time step. Previously, we used a geometric distribution over the virtual `time step' on which all players find themselves. We can generalise this by instead having a different random variable $T_j((r_m))$ for each player $j$, as long as program $i$, given $T_i= t$, only ever calls program $j$ with $T_j < t$.
    \item Multiple independent samples at a given time step. Taking purely independent samples causes problems with halting. However, we may take independent samples of player $j$'s action from the distribution $(r_m)\mid T_j = t$, while still guaranteeing that the call tree has maximum depth at most $t$ (assuming the above bullet point holds).
\end{enumerate}

Before we construct this more general class of programs, we will first need to define some things:

First, we will define random variables $T_j$ as follows: For each $j$, let $(\delta_j^t)_{t\geq 0}$ be some sequence in $[0,1]^\infty$ with $\sum_{t}\delta_j^t = 1$. Then we construct $T_j$ such that $\prob{T_j = t} = \delta_j^t$ via $T_j((r_m)_{m\geq 0}) = \max\left\{t: \sum_{t'\leq t}\delta_j^{t'} \leq r_0\right\}$. Now, define functions to sample from $(r_m)\mid T_j = t$ as follows:

\begin{defn}
    Given a sequence $(r_m)$, define $f_{t,j}: [0,1]^\infty \rightarrow [0,1]^\infty$ via $f_{t,j}((r_m))_k = r_k$ for $k\geq 0$ and $f_{t,j}((r_m))_0 = \sum_{t'<t} \delta_j^{t'} + r_0\delta_j^t$.
\end{defn}

Note that $T_j(f_{t,j}(r_m)) = t$ for all $(r_m)$, and if $((r_m))$ follows a $U[0,1]$ i.i.d.\ distribution, $f_{t,j}((r_m))$ is distributed as $(r_m)\mid T_j = t$. Now, we construct a way to obtain independent samples from $((r_m))\mid T_j = t$.

\begin{defn}
    Given a sequence $(r_m)$, for each $k$ and each $i=1,\ldots k$, let $q_{i,k}((r_m)) = (r_{lk+i+1})_{l\geq 0}$, so that $q_{1,k},\ldots q_{k,k}$ partition the sequence $(r_m)_{m\geq 1}$. Then, let $F_{t,j,i,k}((r_m)) = f_{t,j}(q_{i,k}((r_m))$.
\end{defn}

Essentially, $F_{t,j,i,k}$ is the $i$th of $k$ independent samples of sequences $(r_m)$ with $T_j = t$. Note that $F_{t,j,i,k}(F_{t',j',i',k'}((r_m))$ does not depend on $j'$ (or $t'$), because $f_{t',j'}$ only affects the first element of the sequence, which is subsequently truncated by $q_{i,k}$. We are now ready to construct our programs.
\begin{algorithm}[ht]
\caption{More general simulationist program $p^*_i$}
\label{algorithm:gen_simul}
\begin{algorithmic}
    \STATE {\bfseries Input:} All programs' source codes source codes $p_{-i}$,
    \STATE {\phantom{{\bfseries Input:}}} sequence $(r_m)_{m=0}^\infty\in[0,1]^{\omega}$ sampled uniformly
    \STATE $T \leftarrow T_i((r_m))$
    \FOR{$t=T$ to $t_0$ decreasing}
        \STATE{$a_j^{t,1}\leftarrow \apply(p_j,(p_{-j},F_{t,j,1,1}((r_m))))$}
        \STATE{$c_j^{t} \leftarrow \consistent(a_j^{t,1}, p_j^*, F_{t,j,1,1}((r_m)))$}
        \STATE{$\hat{a}_j^t \leftarrow a^{t,1}_j$}
    \ENDFOR
    \FOR{$t= t_0 -1$ to $1$ decreasing}
        \FOR{$l = 1$ to $k$}
            \STATE{$a_j^{t,l}\leftarrow \apply(p_j,(p_{-j},F_{t,j,l,k}((r_m))))$}
            \STATE{$c_j^{t,l} \leftarrow \consistent(a_j^t, p_j^*, F_{t,j,l,k}((r_m)))$}
        \ENDFOR
        \STATE{$c_j^t = \min_l c_j^{t,l}$}
        \STATE{$\hat{a}_j^t \leftarrow \frac{1}{k}\sum_{l=1}^ka^{t,l}_j$}
    \ENDFOR
    
    \RETURN{$\pi_i(\bm{\hat{a}}^{1:T},\bm{c}^{1:T})$}
\end{algorithmic}
\end{algorithm}

In the above, the function $\consistent(a,p^*_j,(x_m))$ works as follows: the program computes all $\apply$ calls that would be made directly by the simulationist program $p_j^*$ on the input sequence $(x_m)$, runs these with any instances of $p_j^*$ replaced by $p_j$, and then checks if the input action $a$ is the same as the output of $p_j^*$ given the results of the $\apply$ calls (and outputs 1 if so and 0 otherwise). The $p_j^*$ we would use would then be another instance of the above program, with the same choices of $T_i$s and the same simulation scheme.

The idea is to take $k$ independent simulations of each program up until time step $t_0$ (or $T$, whichever is sooner), and then only 1 simulation thereafter. To avoid punishing programs for deviating from the equilibrium to punish another program, we check whether program's outputs are consistent with what they should be doing. This scheme then allows the programs to get a better approximation of the action distribution for earlier time steps. The hope is that later deviations can just be punished with whatever action is worst for the deviating player when everyone is playing according to the equilibrium, since such deviations will occur with small probability.

We hypothesize, but do not prove, the following result:

\begin{conj}
    Suppose that an equilibrium with payoffs $v_{1:n}$ may be sustained in the following game: Each player $i$ observes $T_i$, sampled independently from other players, and sees $k$ independent samples of each other player $j$'s policy conditional on $T_j = t$, for all $t<\min\{t_0,T_i\}$, and 1 sample of each other player $j$'s policy conditional on $T_j = t$, for all $t_0\leq t< T_i$, along with observations associated with each sample (i.e., a sample of the policies each sampled player would observe), and then chooses a strategy.
\end{conj}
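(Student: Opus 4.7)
The plan is to construct simulationist programs $p_1^*, \ldots, p_n^*$ according to \Cref{algorithm:gen_simul}, setting the time-step distributions $(\delta_j^t)_t$ to match the distributions of $T_j$ in the hypothesised observation game, and defining $\pi_i(\bm{\hat{a}}^{1:T},\bm{c}^{1:T})$ to play the equilibrium strategy $\sigma_i$ of the observation game on the observed history, while applying a minimax punishment against any player $j$ for which some consistency flag $c_j^t$ equals 0. The random variables $T_j$ in the algorithm arise automatically from the $(\delta_j^t)_t$ via the construction immediately preceding the algorithm, and the nested observations available to player $i$ in the hypothesised game correspond naturally to the recursive structure of the $\apply$ calls.

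First I would verify halting and on-path behaviour. Since $p_i^*$ only makes $\apply$ calls to $p_j$ with inputs $F_{t,j,\cdot,\cdot}((r_m))$ satisfying $T_j = t < T_i$, the induction tree of $\apply$ calls has depth at most $T_i$, and halting follows by induction on time steps exactly as in \Cref{thm:correl_halting}. When all players submit $p_i^*$, each sampled action $a_j^{t,l}$ is an independent draw from $\pi_j$ conditional on $T_j = t$ and each $c_j^{t,l}=1$, so the joint distribution of $(\bm{\hat{a}}^{1:T}, \bm{c}^{1:T})$ agrees with the observation given to player $i$ in the hypothesised game. Hence on-path payoffs equal $v_{1:n}$.

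The next step is to rule out profitable deviations. Any simulationist deviation $p_i'$ that halts output-independently against $p_{-i}^*$ induces, for each opponent $j$ and each $t < T_j$, a conditional distribution over actions (plus a consistency flag), and these are precisely the observables available to opponents in the hypothesised game. Because $\pi_j$ is built to minimax-punish whenever any $c_i^t = 0$, consistency-violating deviations are dominated provided $v_i$ has slack above individual rationality. Consistency-preserving deviations lift to strategies in the hypothesised game, so by the equilibrium assumption there they cannot strictly improve player $i$'s payoff.

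The main obstacle is making the last step rigorous when the deviator $p_i'$ is \emph{not} simulationist or does not halt on all inputs. One must show that any such $p_i'$ either triggers the consistency check with positive probability (and is thereby punished) or can be coupled with a simulationist observation-game strategy producing the same distribution over outcomes. A secondary difficulty concerns the tail $t_0 \leq t < T_i$, where only a single sample per opponent is taken: deviations there are only probabilistically detected, so $k$, $t_0$ and the tail mass $\sum_{t \geq t_0}\delta_i^t$ must be chosen jointly so that the product of detection probability and punishment gap dominates any single-shot deviation gain, while still matching an arbitrary observation-game equilibrium whose $T_i$ distributions may not decay quickly. Threading these parameter choices --- together with the nested dependence of consistency checks across simulation branches --- is presumably why this result remains a conjecture rather than a theorem.
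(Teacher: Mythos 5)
The statement you were asked to prove is explicitly labeled a \emph{conjecture} in the paper, immediately preceded by the sentence ``We hypothesize, but do not prove, the following result.'' There is no proof in the paper to compare against. (Indeed, as written in the source, the conjecture is even syntactically incomplete --- it ends after the ``Suppose\ldots'' clause without an explicit ``then'' conclusion; you have reasonably supplied the intended consequent that a simulationist program equilibrium with payoffs $v_{1:n}$ then exists.)

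Your sketch follows exactly the approach the paper gestures at in the surrounding text of \Cref{app:gen_simul_idea}, namely instantiating \Cref{algorithm:gen_simul} with the observation game's time-step distributions and adding consistency checks plus minimax punishment. You are also candid that your argument stops short of a proof, and the two obstacles you name --- handling deviators that are not simulationist or do not halt everywhere, and jointly tuning $k$, $t_0$, and the tail mass --- are genuine. Let me flag a few further gaps beyond those you acknowledge. First, your claim that any consistency-preserving simulationist deviation ``lifts'' to an observation-game strategy is doing the heaviest lifting and is not obviously true: an arbitrary simulationist deviator $p_i'$ need not be structured around the prescribed $(\delta_i^t)_t$ at all, so one would need an analogue of the argument in the proof of \Cref{prop:simulationistNegativeResultGame}, where $T_i$ is \emph{re-defined} from $p_i'$'s actual maximum simulation depth, and then one must verify that the induced conditional action distributions are genuinely realizable as observation-game strategies (the forward direction is harder than the negative direction proved there). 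Second, the on-path correspondence --- that the joint distribution of $(\bm{\hat{a}}^{1:T},\bm{c}^{1:T})$ together with all nested observations agrees with what player $i$ sees in the hypothesised game --- requires an induction over the recursive simulation tree, not just a statement that each sample is marginally correct; the paper's analogous result for the correlated case (\Cref{lemma:screened_hist_dist}) needed exactly such an induction. Third, the halting argument is not a direct invocation of \Cref{thm:correl_halting}: \Cref{algorithm:gen_simul} makes $k$ branching calls per opponent per early time step with distinct truncations $F_{t,j,l,k}$, so one must re-verify conditions \ref{cond:truncated}--\ref{cond:finite} of \Cref{lemma:main_halting_lemma} (or a suitable generalization) for this richer call structure. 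Finally, your appeal to ``slack above individual rationality'' imports a hypothesis the conjecture does not state; if the observation-game equilibrium is only weakly individually rational, the punishment gap may vanish. In short: your plan is the natural one and you have located most of the hard parts, but it does not close the conjecture, which remains open in the paper as well.
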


A further open question is then how large the gap is between this result and \Cref{prop:simulationistNegativeResultGame}.
\section{Cannot simplify sequences in \Cref{thm:simulationist_negative_result}}\label{app:cant_assume_geo}

In \Cref{thm:simulationist_negative_result}, we cannot make the simplifying assumption that the sequences $(\delta_j^t)$ are geometric (i.e., $\delta_j^t = \eps_j (1-\eps_j)^t$ for some $\eps_j \in (0,1]$), as the below example shows:

\begin{eg}
    Consider the following program game between Player 1 `the Government' and Players $2,\ldots 100$, `the Citizens':
    \begin{itemize}
        \item The Government (Player 1) chooses an action from the set \{Liberal democracy ($L$), Dictatorship ($D$)\}.
        \item The Citizens (Players 2 to 100), whom the game is symmetric between, each choose an action from the set \{Pay taxes ($P$), Evade taxes ($E$)\}.
        \item Utilities for the Government depend on the number of players that play $E$, as follows:
        \begin{itemize}
            \item If no players play $E$, the Government receives utility $u_1(L, 0) = 0$ if it plays $L$, or $u_1(D,0) = 1$, if it plays $D$.
            \item If $1\leq k < 99$ players play $E$, the Government receives utility $u_1(L,k) = -10$, $u_1(D,k) = 1$. (We can imagine that Citizens not paying their taxes costs the Government money, but this cost is negated if the government is a dictatorship, since their neighbours snoop on them and report them and then they are fined a lot of money.)
            \item If all 99 Citizens play $E$, the Government has no money and is overthrown: $u_1(L,99) = u_1(D, 99) = -100$.
            \item In words, playing $D$ gains the Government 1 utility, unless all Citizens evade taxes, which costs the government 100 regardless of its action. Some but not all citizens evading taxes costs the government 10 if it plays $L$, and nothing if it plays $D$.
        \end{itemize}
        \item Utilities for each Citizen is a function only of its own action, and that of the government, as follows:
        \begin{itemize}
            \item If Player $j$ pays taxes, and the government is a democracy, the player receives utility 0: $u_j(L,P) = 0$. The player then gains utility 1 by evading taxes: $u_j(L,E) = 1$.
            \item If the government is a dictatorship, and Player $j$ pays taxes, Player $j$ loses $0.1$ utility: $u_j(D,P) = -0.1$. If the government is a dictatorship, and Player $j$ evades taxes, he loses $10$ utility: $u_j(D,E) = -10$.
        \end{itemize}
    \end{itemize}
    Can the $(L,P,\ldots, P)$ action profile be attained in a simulationist program equilibrium? We will show that \Cref{thm:simulationist_negative_result}/\Cref{prop:simulationistNegativeResultGame} shows that this is impossible with geometric $(\delta_j^t)_t)$, but does not rule out the possibility of attaining it with more general sequences.

    First, we show that there is no $(L,P,\ldots,P)$ equilibrium with geometric sequences $(\delta_j^t)_t$. Suppose that for all $j = 1,\ldots, n$ and $t = 0,1,2,\ldots$, we have $\delta_j^t = \eps_j(1-\eps_j)^{t}$, for some $\eps_j \in (0,1]$. Let $\Lambda_j = 1-\eps_j$, so that $\Lambda_j^t = \sum_{t'\geq t} \delta_j^t$. We may view $\Lambda_j^t$ as the probability of player $j$ being on time step at least $t$, and `seeing' the distribution of what other players would do at earlier time steps.

    Now, we will argue that in order for this to be an equilibrium, we need $\Lambda_1^{t+1} > \Lambda_j^t$ for $j\neq 1$ and $t$ sufficiently large (and hence $\Lambda_1 > \max_{j\neq 1} \Lambda_j$), but $\Lambda_1^t < \max_{j\neq 1} \Lambda_j^{t+1}$ for smaller values of $t$ (and hence $\Lambda_1 < \max_{j\neq 1} \Lambda_j$), a contradiction.

    For the first part, consider Citizen $j$ deviating to play $E$ at time steps $t\geq t_0$. She then plays $E$ with overall probability $p = \Lambda_j^{t_0}$, and is detected by the Government with overall probability $q = \Lambda_1^{t_0+1}$. She then gains expected utility (compared to just playing $P$) at least:
    \begin{align*}
        p(1-q) - 100pq  -0.1(1-p)q &= p -100.9pq -0.1q\\
        &= \Lambda_j^{t_0}-100.9\Lambda_j^{t_0}\Lambda_1^{t_0+1} - 0.1\Lambda_1^{t_0+1}\\
        &\sim \Lambda_j^{t_0}-0.1\Lambda_1^{t_0+1} \qquad \text{ as }t_0 \rightarrow \infty.
    \end{align*}
    Thus, in order for this to be at most 0 for all $t_0$, we will need $10\Lambda_j^{t_0}<\Lambda_1^{t_0+1}$ in the limit $t_0\rightarrow \infty$, which requires $\Lambda_1 > \Lambda_j$. This holds for all $j>1$.

    For the second part, consider the Government deviating to play $D$ at time steps $t\geq t_0$, with overall probability $p=\Lambda_1^{t_0}$, and being detected by player $j$ with probability $q_j = \Lambda_j^{t_0+1}$. Let $q = \max_j q_j = \max_j \Lambda_j^{t_0+1} < \Lambda_1^{t_0+1}<\Lambda_j^{t_0} = p$, and note that it is always worse for the government for more Citizens to play $E$, and so the government gains, by deviating, utility at least: 
    \begin{align*}
        &p(1-q^{99})-10(1-p)(1-q^{99}-(1-q)^{99}) - 100q^{99}\\
        &\geq p(1-q^{99})-10(1-p) - 100q^{99}\\
        &\geq p - 10(1-p) -100q^{99}
    \end{align*}
    Now, let $t_0$ be maximal such that $p\geq 0.93$. Then we must have that $q<\lambda_1^{t_0+1}<0.93$, and so the gain from deviating is at least $0.93 - 10\times 0.07 - 101 \times 0.93^{99} \approx 0.23 > 0$. Hence, there is no simulationist equilibrium with these payoffs with geometric sequences for the $(\delta_j^t)$s.

    However, \Cref{thm:simulationist_negative_result} does not rule out the possibility of an $(L,P,\ldots,P)$ equilibrium in general. To see this, we will give sequences $(\delta_j^t)$ for which, for all $i$, there is no choice of $d_i$ and $t_0$ such that player $i$ can profitably deviate to playing $d_i$ starting on time-step $t_0$.

    Let 
    \begin{equation*}
        \delta_1^t = \begin{cases*}
            \frac{4}{5} & if $t=0$\\
            \frac{1}{5}\times 0.01\times 0.99^{t-1} & for $t\geq 1$
        \end{cases*}
    \end{equation*}
    and for $j \neq 1$:
    \begin{equation*}
        \delta_j^t = \begin{cases*}
            0 & if $t=0$\\
            \frac{99}{100} & if $t=1$\\
            \frac{1}{100}\times 0.01\times 0.99^{t-2} & for $t\geq 2$
        \end{cases*}
    \end{equation*}
    First, we argue that player 1 cannot profitably deviate. Suppose that Player 1 deviates to playing $D$ at time step $t_0$ (noting that there is no benefit to instead playing a mixed strategy, since the detection probability then remains the same, and $D$ weakly dominates $L$). We now have deviation probability $p = \sum_{t\geq t_0}\delta_1^t$, and probability $q= \sum_{t>t_0} \delta_2^t$ of each of the other players detecting the deviation. The gain from deviating is then $p(1-q^{99})-10(1-p)(1-q^{99}-(1-q)^{99})-100q^{99}$.

    Let's consider the cases. If $t_0 = 0$, we have $p=1$, $q=1$, and the gain from deviating is $-100$. If instead $t_0\geq 1$, we have $p = \frac{1}{5}\sum_{t\geq t_0}0.01\times 0.99^{t-1} = \frac{1}{5}\times 0.99^{t_0-1}$ and $q = \sum_{t\geq t_0+1}\frac{1}{100}\times0.01\times0.99^{t-2} = \frac{1}{100}\times 0.99^{t_0-1}$.

    Thus, $p = 20q$ and the gain from deviating is at most
    \begin{align*}
        &p(1-q^{99})-10(1-p)(1-q^{99}-(1-q)^{99})-100q^{99}\\
        &\leq p -10(1-p)(1-(1-q)^{99}) +10(1-p)q^{99}-100q^{99}\\
        &\leq p -10(1-p)(1-(1-q)^{99}) -90q^{99}\\
        &\leq p -10(1-p)(1-(1-q)^{99})\\
        &= 20q - 10(1-20q)(1-(1-q)^{99}).
    \end{align*}
    Now, we have $(1-q)^{-99} \geq 1+99q$, and so $(1-q)^{99}\leq \frac{1}{1+99q}$. Thus, the gain is at most
    \begin{equation*}
        20q - 10(1-20q)\frac{99q}{1+99q}
        = 10q\left(2-\frac{99(1-20q)}{1+99q}\right)
    \end{equation*}.
    Since $q\leq \frac{1}{100}$, this is in turn at most $10q(2-99\times\frac{2}{5})<0$.

    Can the citizens profitably deviate? Suppose Citizen $j$ begins deviating at time step $t_0$.

    With deviation probability $p$, and detection probability $q$ the gain from deviating is then at most 
    \begin{equation*}
        p(1-q)-100pq - 0.1(1-p)q = p -101pq -0.1q +0.1pq \leq p -0.1q -100pq.
    \end{equation*}
    If $t_0 \leq 1$, we then have deviation probability $p \leq 1$, and detection probability $q \geq 0.99 \times \frac{1}{5}$. Thus, the gain from deviating is at most $p(1-100q) \leq p(1-\frac{99}{5})<0$.

    If instead $t_0\geq 2$, we have $p \leq \frac{1}{100}\times 0.99^{t_0-2}$ and $q= \frac{1}{5}\times 0.99^{t_0}$. Thus, the gain from deviating is at most $p-0.1q$, which is in turn at most $p-0.99^2\times 20\times 0.1 p < 0$.

\end{eg}

\section{Proofs}\label{app:proofs}

\subsection{Proofs for Section \ref{sec:correl}}\label{app:correl_proofs}
In order to prove our halting results, we will first give more general conditions under which halting is guaranteed.

\begin{lemma}\label{lemma:decreasingT}
Let $p_1,\ldots p_n$ be programs for an $n$-player correlated program game satisfying the following conditions:
 \begin{enumerate}[label = \textbf{(C\arabic*)}]
    \item \textbf{Probability zero events occur in apply calls with probability zero:} For all $i$, for any set $A$ with $\prob{((r_m),(x_m))\in A} = 0$, $p_i$ almost surely makes no standard apply (or $\apply^*$) calls with input sequences in $A$.\label{cond:zero_measure_apply_calls}
    \item \textbf{Halts when standard apply calls halt:} For all $i$, for any $p_{-i}$, given that all of its standard apply/$\apply^*$ calls halt, $p_i$ halts almost surely. (More precisely, the event that all $\apply$/$\apply^*$ calls halt and $p_i$ does not halt has probability zero.) \label{cond:halting}
    \item \textbf{Only finitely many direct apply calls:} For all $i$, the number of standard apply/$\apply^*$ calls made directly by $p_i$ is almost surely finite. (By `directly', we exclude here $\apply$/$\apply^*$ calls made within other $\apply$/$\apply^*$ calls.)\label{cond:finite}
    \end{enumerate}

Finally, suppose that there exist random variables $T(i,(r_m))$ taking values in $\mathbb{N}\cup \{\infty\}$, for each $i= 1, \ldots n$, almost surely finite, such that, for all $i$, given input $p_{-i}$ and sequence $(r'_m)$, all standard apply/$\apply^*$ calls made by $p_i$ of the form $\apply(p_j,(p_{-j},(r''_m), \cdots)$ have $T(j,(r''_m))<T(i,(r'_m))$ (almost surely).

Then for all $i$, $p_i$ almost surely halts on input $(p_{-i}, (r_m),(x_m))$.
\end{lemma}

\begin{proof}[Proof of \Cref{lemma:decreasingT}]
We proceed by essentially showing halting conditional on $T(i,(r_m)) = t$, for each positive probability such $t$. Since $T(i, (r_m))$ is almost surely finite for all $i$, it suffices to show that for any finite $t$, we have that the probability of the event $\{((r_m),(x_m)):T(i,(r_m))=t$ and $p_i$ fails to halt on input $(p_{-i},(r_m),(x_m))\}$ is 0, for all $i$.
We proceed by induction on $t$.

\bm{$t=0$}: If $T(i,(r_m))=0$, we have that, for all $i$, $p_i$ almost surely makes no standard $\apply$/$\apply^*$ calls (since those would almost surely have to have $T<0$, which is impossible). Hence, by \ref{cond:halting}, we have that the event \{$p_i$ does not halt and $T(i,(r_m))=0$\} occurs with probability 0. Either way, we are done. 

\bm{$t>0$}: Assume that for each $k=0,\ldots,t-1$, and all $i$, we have that either $\prob{T(i,(r_m))=k}=0$ or $p_i$ almost surely halts conditional on $T(i,(r_m))=k$. Consider any $i$. If $\prob{T(i,(r_m))=t}=0$, we are done, so assume $\prob{T(i,(r_m))=t}>0$. Now, we have that all standard $\apply$ (or $\apply^*$) calls $\apply(p_j,(p_{-j},(r'_m),(x_m')))$ made by $p_i$ almost surely have $T(j,(r'_m))\leq t-1$. Now, by the induction assumption, for all $j$, the event $A = \{((r_m'),(x_m')):T(j,(r'_m)) \leq t-1 \text{ and } p_j \text{ fails to halt on input } (p_{-j},(r'_m),(x'_m))\}$ has probability 0. Hence, by \ref{cond:zero_measure_apply_calls} we must have that $p_i$ almost surely makes no standard apply calls to $p_j$ with input sequences in $A$. The same holds for all $j$. Thus, $p_i$ almost surely makes no standard $\apply$/$\apply^*$ calls that do not halt, and so by conditions \ref{cond:halting} and \ref{cond:finite}, $p_i$ almost surely halts. By induction, we are then done.
\end{proof}

\begin{restatable}{lemma}{applystarscreening}\label{lemma:apply_star_screening}
    For any programs $p_1,\ldots,p_n$, if $\apply^*(p_i,(p_{-i},(r_m),(x_m)))$ halts, its output does not depend on the sequence $(x_m)$.
\end{restatable}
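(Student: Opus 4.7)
The plan is to induct on the number of elementary computation steps required for the (halting) evaluation of $\apply^*(p_i,(p_{-i},(r_m),(x_m)))$. Fix $p_i$, $p_{-i}$, and $(r_m)$, and consider two arbitrary private random sequences $(x_m)$ and $(x_m')$; the goal is to show that if the $\apply^*$ call halts on one of them, it halts on the other with the same output. I will actually prove the slightly stronger statement: for each $t \ge 0$, if the call halts in at most $t$ steps on input $(x_m)$, then it halts in at most $t$ steps on input $(x_m')$ with the same output.

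The base case $t = 0$ is trivial (nothing is output). For the inductive step, trace the computation on $(x_m)$ step by step up until either (a) it halts, or (b) an instruction is executed that directly accesses some location of $(x_m)$ outside any $\apply^*$ call. By the definition of $\apply^*$, up to that point the control flow and state of the computation depend only on (i) the values of $(r_m)$ read so far, and (ii) the return values of $\apply^*$ subcalls made so far. Each such subcall takes strictly fewer than $t$ steps, so by the inductive hypothesis its return value is the same function of its arguments on $(x_m')$ as on $(x_m)$, independent of the private sequence. Consequently, the execution on $(x_m')$ proceeds identically up to the very same step, reaching either case (a) or case (b) at the same moment.

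In case (a), the two executions halt with the same output, because that output was computed purely from $(r_m)$ and from $\apply^*$ subcall results that agree by the inductive hypothesis. In case (b), the defining clause of $\apply^*$ immediately forces the output on $(x_m)$ to be the special symbol $R_i$; but symmetrically, since the execution on $(x_m')$ reaches the very same access instruction, its output is also $R_i$. Either way the outputs coincide, completing the induction.

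The main subtlety, and essentially the only nontrivial step, is justifying that the decision of \emph{whether and when} the program accesses $(x_m)$ outside of an $\apply^*$ call is itself independent of $(x_m)$. This is precisely what the inductive hypothesis delivers: before any such direct access occurs, all branching depends solely on $(r_m)$ and on $\apply^*$ return values, and those return values (by induction on shorter halting subcomputations) are themselves functions only of the non-private inputs. Once this observation is in place, the two-case analysis above closes out the proof cleanly.
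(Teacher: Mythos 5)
Your proof is correct, and it takes a somewhat different route from the paper's. The paper inducts on the depth of the tree of $\apply^*$ calls, working from the deepest (leaf) calls upward: a leaf call's output cannot depend on $(x_m)$ without accessing it (which forces output $R_j$), and a non-leaf call's output cannot depend on $(x_m)$ once its subcalls' outputs do not, again because the only remaining way to depend on $(x_m)$ is a direct access. Your proof instead inducts on the number of elementary computation steps, tracing the two executions in lockstep. One advantage of your operational formulation is that it makes explicit a subtlety the paper's proof glosses over: that the \emph{control flow} itself — including which $\apply^*$ subcalls are issued, with what arguments, and when a direct access (if any) first occurs — is identical across private sequences, which is needed even to speak of ``the'' call tree. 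The paper's tree induction implicitly relies on this but does not state it. One small wrinkle in your write-up: you fix $p_i$, $p_{-i}$, $(r_m)$ at the outset and phrase the inductive statement only for that one call, but in the inductive step you invoke the hypothesis on subcalls, which involve different programs and (possibly truncated) random sequences. The induction should therefore be stated with the quantification over all programs and non-private inputs inside the ``for each $t$'' — i.e., ``for all programs, all $(r_m)$, and all pairs of private sequences, if the call halts within $t$ steps on one, it halts within $t$ steps on the other with the same output.'' With that adjustment the argument is airtight.
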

\begin{proof}
    For the $\apply^*$ call to halt, it must make finitely many $\apply^*$ calls. Consider then the $\apply^*$ calls of maximum depth in the tree of $\apply^*$ calls -- i.e., those that do not make further $\apply^*$ calls (they may make further $\apply$ calls, of course). The output of these cannot depend on the $(x_m)$, since that would require the $(x_m)$ to be accessed, which would result in the output being $R_j$ for some $j$, regardless of the $(x_m)$. Meanwhile, if the output of all $\apply^*$ calls directly made by an $\apply^*$ call do not depend on the $(x_m)$, then that $\apply^*$ call cannot depend on the $(x_m)$ either -- that would require accessing the $(x_m)$ outside of an $\apply^*$ call, and so the output would again be $R_j$ for some $j$, regardless of the $(x_m)$. By induction on the tree of $\apply^*$ calls, we are done.
\end{proof}
\correlhalting*
\begin{proof}\label{pf:correlhalting}
We proceed by verifying the conditions of \Cref{lemma:decreasingT}. Let $\eps = \min_i \eps_i$ and $T(i,(r_m)) = \min\{t:r_t<\eps\}$ for all $i$. That is, the time step $T$ as determined by the program with the smallest $\eps$.

\ref{cond:zero_measure_apply_calls}: Let $A$ be an event with probability 0. We have that $p_i$'s $\apply$ calls all have input sequences of the form $(r_m)_{m\geq t}, (x_m)_{m\geq 0}$ for some $t\geq 1$. Thus, the probability that an $\apply$ call has input sequences in $A$ is at most the probability that some such sequence is in $A$. This is in turn bounded above by $\sum_{t=0}^\infty \prob{((r_m)_{m\geq t}, (x_m)_{m\geq 0}) \in A} = \sum_{t=0}^\infty \prob{((r_m)_{m\geq 0}, (x_m)_{m\geq 0}) \in A} = 0$, when the penultimate inequality follows from $(r_m)_{m\geq t}$ having the same distribution for all $t$.

\ref{cond:halting}: Since $\pi_i$ is by assumption computable, $p_i$ halts whenever its standard $\apply^*$/$\apply$ calls do (noting that all $p_i$'s $\apply^*$/$\apply$ calls are standard).

\ref{cond:finite} $p_i$ makes at most $nT_i$ $\apply^*/\apply$ calls, where $T_i$ is the time step as determined by program $i$, i.e. $\min\{m:r_m<\eps_i\}$. Since $\eps_i>0$, we have $T_i$ follows a $\Geo(\eps_i)$ distribution, so $nT_i$ almost surely finite.

We then verify the conditions concerning $T= T(i,(r_m))$. Since $\eps>0$, we have $T$ is a geometric random variable and so is almost surely finite. It remains to check that $T$ is strictly decreasing under $\apply^*$ calls. We have that program $p_i$ makes only $\apply^*$ calls with input sequences $(r_m)_{m\geq 1}, \ldots (r_m)_{m\geq T_i}$, where $T_i$ is as above. Now, $T\geq T_i$ by our choice of $\eps$. Hence, for $1\leq t \leq T_i$, we have $T((r_m)_{m\geq t}) = T((r_m)_{m\geq 0}) - t \leq T((r_m)_{m\geq 0}) - 1$, so we are done.
\end{proof}

\correlHaltingExt*
\begin{proof}\label{pf:correlHaltingExt}
    We again verify the conditions of \Cref{lemma:decreasingT}. By the proof of \Cref{thm:correl_halting}, we already know that programs $p_{-i}$ satisfy conditions \ref{cond:zero_measure_apply_calls}, \ref{cond:halting} and \ref{cond:finite}, whereas $p_i$ satisfies these by assumption. We now proceed by constructing $T$ which is almost surely finite and decreasing under the standard $\apply$/$\apply^*$ calls made by the programs. First, let $\eps = \min_{j\neq i}\eps_j$ and  $D((r_m)) = \min\{m:r_m,\ldots r_{m+c}<\eps\}$. First, we will show that $D$ is almost surely finite: For $k = 0, 1, 2, \ldots$, let $A_k$ be the event that $\{r_{(c+1)k},\ldots r_{(c+1)k+c}<\eps\}$. We have that events $A_0,A_1,\ldots$ are independent, and all have probability $\eps^{c+1}>0$. Thus, the probability that at least one of $A_0,\ldots, A_k$ occurs is equal to $1-(1-\eps^{c+1})^k\rightarrow 1$ as $k\rightarrow \infty$. Meanwhile, if at least one of $A_0,\ldots, A_k$ occurs, we have $D((r_m))\leq (c+1)k$. Thus, $D((r_m))$ is almost surely finite. Then, define $T(j,(r_m))$ as follows:

    \begin{equation*}
    T(j,(r_m)) \defeq \begin{cases*}
        0 &if $j\neq i$  and $r_0<\eps$\\
        1 &if $j=i$ and $r_0,\ldots, r_c<\eps$\\
        2D((r_m)) &if $j\neq i$ and $r_0\geq \eps$\\
        1+2D((r_m)) &if $j=i$ and one of $r_0,\ldots, r_c$ is at least $\eps$\\
    \end{cases*}
    \end{equation*}
Since $D((r_m))$ is almost surely finite, $T(j,(r_m))$ is almost surely finite for all $j$. It remains to show that $T$ is strictly decreasing under standard $\apply$/$\apply^*$ calls. I.e., that for all $j$ and all input sequences $(r_m')$, if $p_j$, given input programs $p_{-j}$ and shared sequence $(r_m')$ makes an $\apply^*$ (or $\apply$) call of the form $\apply^*(p_k,(p_{-k},(r''_m),\cdots))$, we must have $T(k,(r''_m))<T(j,(r_m'))$. We consider the four cases in turn:

\begin{enumerate}
    \item If $j\neq i$, and $r_0<\eps$, we have that $p_j$ makes no standard $\apply$/$\apply^*$ calls. Hence, the condition is trivially satisfied.
    \item If $j=i$, and $r_0,\ldots,r_c < \eps$, we have that $p_i$'s standard $\apply$/$\apply^*$ calls all have input of the form $(p_k,(p_{-k},(r'_m),(x'_m)))$, where $(r'_m) = (r_m)_{m\geq t}$ for some $0\leq t \leq c$ and $k\neq i$. Then $r'_0 = r_t <\eps$, and so $T(k,(r'_m))=0<1 = T(j,(r_m))$.
    \item If $j\neq i$ and $r_0\geq \eps$, we have $T(j,(r_m)) = 2D((r_m))\geq 2$. Let $T_j = \min\{t:r_t<\eps_i\}$. Then all standard $\apply$/$\apply^*$ calls have input sequence of the form $(r_m)_{m\geq t}$ for $1\leq t \leq T_j$. Meanwhile, we must have $D((r_m)) \geq \min\{t:r_t<\eps\} = \max_{j\neq i} T_j$. Hence, for $t \in \{1,\ldots,k\}$ we have $D((r_m)_{m\geq t}) =D((r_m)_{m\geq 0}) - t \leq D((r_m)_{m\geq 0}) - 1$. So if $k\neq i$, $\apply^*$ calls to $p_k$ either have $T=0$ or $T = 2D((r_m)_{m\geq t}) \leq 2D((r_m)) - 2 < T(j,(r_m))$. Finally, if $k = i$, we either have $T(i,(r_m)_{m\geq t}) = 1 <2$ or $T(i,(r_m)_{m\geq t}) = 1 + 2D((r_m)_{m\geq t}) \leq 2D((r_m))-1 < T(j,(r_m))$.
    \item Finally, if $j=i$ and one of $r_0,\ldots r_c$ is at least $\eps$, we have that $p_i$ makes $\apply^*$/$\apply$ calls with input of the form $(p_k,(p_{-k},(r_m)_{m\geq t}))$ where $0\leq 0 \leq c$. If $t \leq D((r_m))$, we then have $D((r_m)_{m\geq t}) = D((r_m))-t \leq D((r_m))$ and so $T(k,(r_m)_{m\geq t}) \leq 2D((r_m))<T(j,(r_m))$. Otherwise if $D((r_m)) = d < t$, we have $r_d,\ldots r_{d+c} < \eps$ and $d+c\geq c \geq t$ and so $r_t<\eps$ and $T(k,(r_m)_{m\geq t}) = 0$.
\end{enumerate}
\end{proof}

\expectedRuntimeCorrel*
\begin{proof}\label{pf:expectedRuntimeCorrel}
    For the first part, let $\eps_j\leq \eps_i<1/2$. Let $f_i(t)$ be the expected total number of $\apply^*$ calls (including those made within $\apply^*$ calls) made by $p_i$ conditional on $T_i=t$, where $T_i = \min\{t:r_t<\eps_i\}$. Then we have $f_i(1) = n\geq 2$, and for $t>1$, $f_i(t) \geq f_i(t-1)+f_j(t-1)$, noting that that $p_i$ makes $\apply^*$ calls including $\apply^*(p_i,(r_m)_{m\geq 1})$ and $\apply^*(p_j,(r_m)_{m\geq 1})$. The former has $T_i=t-1$, and the latter has $T_j\geq t-1$, since $\eps_i\leq \eps_j$. Meanwhile, on any given input, the $\apply$ calls made by $p_j$ are a superset of those made by $p_i$, and so we have, for all $t$, $f_j(t)\geq f_i(t)$. Hence, $f_i(t)\geq 2f_i(t-1)$. By induction, we obtain $f_i(t) \geq 2^t$ for $t\geq 1$. But $\E{2^T} = \sum_{t=1}^\infty \eps_i(1-\eps_i)^t 2^t=\infty$. Thus, $p_i$ has infinite expected runtime. Now, all other programs $p_k$ call $p_i$ with probability $1-\eps_k$, with input sequence $(r_m)_{m\geq 1}$, which follows an i.i.d.\ $U[0,1]$ distribution (conditional on $r_0>\eps_k$). Hence, all programs $p_k$ with $\eps_k < 1$ have infinite expected runtime.
    
    For the second part, let $\eps= \min_i\eps_i$. Let $T((r_m)) = \min\{m\geq 0: r_m<\eps\}$, and $T_i((r_m)) = \min\{m\geq 0: r_m<\eps_i\}$. Note that $T_i \leq T$ for all $i$ and $(r_m)$. Then, all programs halt immediately (i.e., without simulating the other programs) on input $(r_m)_{m\geq T}$ (regardless of $(x_m)$), and given input $(r_m)_{m\geq t}$ with $t\leq T$, no calls with first input sequence of the form $(r_m)_{m\geq t'}$ with $t'>T$ are made. Thus, overall, no $\apply^*$ calls with first input sequence of the form $(r_m)_{m\geq t}$ with $t>T$ are made. Hence, apply calls all have first input sequence of the form $(r_m)_{m\geq t}$ for $t\in \{1,\ldots, T\}$ and so at most $nT$ distinct $\apply^*$ calls are ever made.

    Now, assume that for all $i$, $\pi_i(\bm{a}^{1:t})$ has complexity $\leq \lambda (t+1)^k$, for some fixed $k$ and $\lambda$. Then the deepest possible $\apply^*$ calls, those with input sequence $(r_m)_{m\geq T}$, each have runtime at most $\lambda$. Given that we have already computed $\bm{a}^{1:t-1}$, computing $a_j^t =\apply^*(p_j,(p_{-j},(r_m)_{m\geq T+1-t}))$ then adds additional runtime at most $\lambda t^k$. Thus, total runtime is at most $n\sum_{t=0}^T\lambda (t+1)^k<n\lambda \sum_{t=0}^T (T+1)^k = n\lambda (T+1)^{k+1}$. Now, $T+1$ is geometric with $\prob{T+1 = t} = \eps(1-\eps)^{t-1}$ and so $\E{(T+1)^{k+1}} = \sum_{t=1}^\infty \eps (1-\eps)^{t-1} t^{k+1}$. Meanwhile, by considering the power series for $(1-\eps)^{-t/2}= \exp(\frac{-\log(1-\eps)t}{2})$, we have
    \begin{equation*}
        (1-\eps)^{-t/2}  \geq\frac{t^{k+1} (-\log(1-\eps))^{k+1}}{2^{k+1}(k+1)!}
    \end{equation*}
    and so
    \begin{align*}
        \E{(T+1)^{k+1}} = &\sum_{t=1}^\infty \eps (1-\eps)^{t-1} t^{k+1}\\
        &\leq \sum_{t=1}^\infty \eps (1-\eps)^{t/2-1} t^{k+1}\frac{2^{k+1}(k+1)!}{t^{k+1}(-\log(1-\eps))^{k+1}}\\
        &=\frac{\eps 2^{k+1}(k+1)!}{(1-\eps)^{1/2}(-\log(1-\eps))^{k+1}}\sum_{t=0}^\infty(1-\eps)^{t/2}\\
        &= \frac{\eps 2^{k+1}(k+1)!}{(1-\eps)^{1/2}(-\log(1-\eps))^{k+1}(1-(1-\eps)^{1/2})}\\
        &\sim \frac{\eps 2^{k+1}(k+1)!}{(1-\eps)^{1/2}\eps^{k+1}(1-(1-\eps)^{1/2})}\\
        &\sim \frac{\eps 2^{k+1}(k+1)!}{\eps^{k+1}\eps/2}\\
        &= O(\eps^{-(k+1)})
    \end{align*}
    as $\eps\rightarrow 0$. Hence, the expected runtime is finite, and polynomial in $\eps^{-1}$.
 \end{proof}
 \begin{lemma}[\citet{actionSequence}, Lemma 2]\label{lemma:action_sequences}
     Suppose that payoffs $v_{1:n}$ for game $G$ are feasible with correlation and strictly individually rational. Then for all $\delta> 0$, for $\eps>0$ (depending on $\delta$) sufficiently small, there is a deterministic sequence of pure strategies $(\alpha_{1:n}^t)_t$ with discounted average payoffs $v_{1:n} = \sum_{t=1}^\infty \eps (1-\eps)^{t-1} u_{1:n}(\alpha_{1:n}^t)$, such that for all $t$, we have that the continuation payoffs $\sum_{k=t}^\infty \eps (1-\eps)^{k-t} u_{1:n}(\alpha_{1:n}^k)$ are within $\delta$ of $v_{1:n}$.
 \end{lemma}
 \repFolkThm*
 \begin{proof}\label{pf:repFolkThm}
    Let $v_{1:n}$ be payoffs that are feasible with correlation and strictly individually rational. Let $v_i^*$ be the minimax payoff of player $i$, for each $i$, and let $\delta>0$ be such that $v_i> v_i^*+2\delta$ for each $i$. By \Cref{lemma:action_sequences}, we may choose a sequence of action profiles $(\alpha_{1:n}^t)_{t=1}^\infty$ such that $v_{1:n} = \sum_{t=1}^\infty \eps (1-\eps)^{t-1} u_{1:n}(\alpha_{1:n}^t)$ and for all $t$ and $i$, we have $\sum_{k=t}^\infty \eps (1-\eps)^{k-t} u_i(\alpha_{1:n}^k) > v_i-\delta > v_i^* + \delta$.

    Then, construct policy profiles $\pi_{1:n}, \pi_{1:n}^*$ for the repeated game with correlation and screening inductively as follows:
    Let $\pi_i(\omega;q) = \pi_i^*(\omega;q)= \alpha_i^1$ for all $i$ (where $\omega$ is the empty history). Then, for $t\geq 1$, given a screened history $\bm{a}^{1:t}$, let $\pi_i(\bm{a}^{1:t};q) = \pi_i^*(\bm{a}^{1:t};q)= \alpha_i^{t+1}$ if $\bm{a}^{1:t}=\bm{\alpha}^{1:t}$. Otherwise, if $j$ is the first player in the history to have deviated from this sequence, let $\pi_i(\bm{a}^{1:t};q) = s_{i,j}$, where $s_{-j,j}$ is an independent minimax strategy profile against player $j$ (i.e., $\max_{a_j} u_i(s_{-j},a_j)$ is minimised at $s_{-j} = s_{-j,j}$)). In this case, let $\pi_i^*(\bm{a}^{1:t};q) = R_i$.

    Note that policies $\pi_{1:n}^*$ and $\pi_{1:n}$ are then compatible screened and unscreened policies for the players. If all players play according to $\pi_{1:n}$, we can see that players will play according to $\alpha_{1:n}^t$ on turn $t$, and hence obtain overall expected utilities $\sum_{t=1}^\infty (1-\eps)^{t-1}u_{1:n}(\alpha_{1:n}^t) = v_{1:n}/\eps$. Meanwhile, if player $j$ deviates on turn $t$, she will obtain expected utility at most 
    \begin{align*}
        &\sum_{k < t} u_j(\alpha_{1:n}^k) (1-\eps)^{k-1} + (1-\eps)^{t-1} \max_{a_j}\E{u_j(a_j,\alpha_{-j}^t} + \frac{(1-\eps)^t}{\eps}v_j^*\\
        &\leq v_j/\eps + (1-\eps)^{t-1}\max_{a_{1:n}}u_j(a_{1:n}) + \frac{(1-\eps)^t}{\eps}v_j^* - \sum_{k>t}(1-\eps)^{k-1}u_j(\alpha_{1:n}^k)\\
        &\leq v_j/\eps + (1-\eps)^{t-1}\max_{a_{1:n}}u_j(a_{1:n}) + \frac{(1-\eps)^t}{\eps}v_j^* - \frac{(1-\eps)^t}{\eps}\sum_{k>t}(v_j^* + \delta)\\
        &\leq v_j/\eps - \frac{(1-\eps)^t}{\eps}\left(\delta -\frac{\eps}{1-\eps}(\max_{a_{1:n}}u_j(a_{1:n})-v_j)\right).
    \end{align*}
    where $v_j^*$ is player $j$'s minimax utility. For $\eps$ sufficiently small, this is then less than $v_j$ for all $t$ and all $j$. We are then done.

    Note that we were able to define $\pi_{1:n}^*$ to have screening whenever players randomise independently of the $q$s.

    Moreover, note that although we used grim trigger strategies in this proof, we could instead do something like punish deviating players until they are worse off than if they hadn't deviated, by making maximally pessimistic assumptions about the strategies played by deviating players when there is screening.

    Finally, we could have instead used the shared random variable to randomise instead of a deterministic sequence, if the players had access to past shared random variables in order to determine if the other players deviated. Programs will in fact have access to their analogue of these, and so such a policy would be viable in the program game, although we do not show this here.
 \end{proof}
 \eGpBPolicyCorrel*
 \begin{proof}\label{pf:eGpBPolicyCorrel}
    Follows straightforwardly from the definitions.
\end{proof}
 \correlUtils*
 \begin{proof}\label{pf:correlUtils}
     Let $\bm{a}^{1:t}$ and $T$ be as in \Cref{algorithm:pi-bot-correl}. By \Cref{lemma:screened_hist_dist}, we have that $\bm{a}^{1:t}$ has, for all $t$, given $T\geq t$, the same distribution as the length $t$ screened history of the repeated game $G_\eps$ in which players follow their associated policies. Moreover, $\bm{a}^{1:t}$ does not depend on the $(x_m)$, and so is the same for all programs.

     We will show that conditional on $T=t$, and on $\bm{a}^{1:t}=\bm{\alpha}^{1:t}$ (where this occurs with non-zero probability), the output of the $n$ programs has the same joint distribution as the $\tau_j(\bm{\alpha}^{1:t};q)$, where $q\sim U[0,1]$. Using the above, we will then have that conditional on $T=t$, the output of the programs follows the same joint distribution as the actions of the associated policies on the $t+1$th turn of $G_\eps$. Since the distribution of $T+1$ is the same as the distribution of the number of turns in $G_\eps$, we may then deduce that the expected utility for the program game is equal to the expected utility for the last turn of the repeated game.

     Now, conditional on $T=t>0$, $\bm{a}^{1:t}=\bm{\alpha}^{1:t}$, and $r_{0}=r_{0}'$, the output of program $j$ is distributed as
     \begin{equation*}
         \E{\apply(p_j,(p_{-j},(r_m),(s_m)))\mid T=t, \bm{a}^{1:t}=\bm{\alpha}^{1:t}, r_{0} = r_{0}'} = \tau_j(\bm{\alpha}^{1:t};(r_0'-\eps)/(1-\eps))
     \end{equation*}
     which is the desired distribution, since $(r_0-\eps)/(1-\eps) \sim U[0,1]$, conditional on $T=t$. For $t=0$, the distribution is instead $\tau(\omega;r_0/\eps)$, which is again the desired distribution. It remains then to check that, conditional on $T$, $\bm{a}^{1:T}$, and $r_{0}$, the outputs of the programs are independent. Note that for $j\neq i$, the output of program $j$ is given by $\pi_j(\bm{a}^{1:T};r_{0},x_0^j)$. Thus, conditional on $\bm{a}^{1:T}$ and $r_{0}$, this depends only on $x_0^j$, which is independent between programs. Since this holds for all but at most one program, the outputs of programs are independent given $r_{0}$, $T$, $\bm{a}^{1:T}$. We therefore conclude that the outputs of programs are distributed as the actions on the last turn of $G_\eps$ between players with policies $\tau_j$ and screened policies $\tau_j^*$.

     Since the expected utility on the final turn of a repeated game is $\eps$ times the total expected utility, we are then done.
\end{proof}
\begin{lemma}\label{lemma:screened_hist_dist}
    Consider the setting of \Cref{thm:correl_utils}. Let $\bm{a}^{1:t}$ and $T$ be as in \Cref{algorithm:pi-bot-correl}. Then conditional on $T\geq t\geq 1$ we have $\bm{a}^{1:t}$ follows the same distribution as the screened history in the associated repeated game $G_\eps$ with associated policies, independent of $T$, and of the $(x_m)$.
\end{lemma}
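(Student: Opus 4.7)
The plan is to proceed by induction on $t$. The base case $t=0$ is immediate, since both $\bm{a}^{1:0}$ and the length-zero screened history of $G_\eps$ are empty. For the inductive step, I would first isolate the relevant randomness. Conditional on $T\ge t$, memorylessness of the geometric $T$ gives that $r_{T+1-t},\dots,r_{T-1}$ are i.i.d.\ $U(\eps,1)$, $r_T\sim U(0,\eps)$, and all are jointly independent and independent of $T$ itself. Setting $q_1\defeq r_T/\eps$ and $q_k\defeq (r_{T+1-k}-\eps)/(1-\eps)$ for $2\le k\le t$ then produces i.i.d.\ $U(0,1)$ variables that are independent of $T$; these will play the role of the per-round correlating signals in $G_\eps$.

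The crux is the recursive identity relating the inner and outer action arrays. When the outer program computes $a_j^k=\apply^*(p_j,(p_{-j},(r_m)_{m\ge T+1-k},(x_m)))$, the simulated $p_j$ sees its own index $T'=k-1$ (since $r_{T+1-k},\dots,r_{T-1}\ge\eps$ and $r_T<\eps$ in the truncated sequence), and at each of its own internal time steps $k'<k$ it issues $\apply^*(p_l,(p_{-l},(r_m)_{m\ge (T+1-k)+(T'+1-k')},\dots))$. The index identity $(T+1-k)+(T'+1-k')=T+1-k'$ with $T'=k-1$ shows this truncation exactly matches the outer call that produced $a_l^{k'}$, so by induction on nesting depth the inner and outer $\bm{a}^{1:k-1}$ coincide. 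For $j\ne i$, the simulated $\eps$Grounded$\pi_j$Bot then returns $\pi_j(\bm{a}^{1:k-1};r_{T+1-k},x_0^j)$, screened to $R_j$ exactly when $x_0^j$ is accessed --- precisely the distribution of $\tau_j^*(\bm{a}^{1:k-1};q_k)$ given by \Cref{lemma:eGpB_policy_correl}. For $j=i$ the same identification holds immediately from the definition of $\tau_i^*$ as the conditional distribution of $\apply^*(p_i,\dots)$ given $r_0=\eps+(1-\eps)q_k$ (or $r_0=\eps q_1$ when $k=1$), $T=k-1$, and history $\bm{a}^{1:k-1}$.

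To assemble the joint distribution, I would invoke \Cref{lemma:apply_star_screening}: each $a_j^k$ is a deterministic function of $(r_m)$ alone, yielding independence from $(x_m)$ for free. For every $j\ne i$ the value $a_j^k$ is in fact deterministic given $(\bm{a}^{1:k-1},q_k)$, so conditional on history and $q_k$ the tuple $(a_j^k)_{j=1}^n$ in the program game agrees in joint distribution with an independent sampling from the marginals $\tau_j^*(\bm{a}^{1:k-1};q_k)$ in $G_\eps$ (all but at most one marginal being a point mass). Combining this with the i.i.d.\ $U(0,1)$ property of the $q_k$ and the inductive hypothesis for $\bm{a}^{1:t-1}$ closes the induction, with the resulting distribution independent of both $T$ and $(x_m)$. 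The main obstacle I anticipate is making the recursive alignment of nested $\apply^*$ calls fully rigorous, since one must verify that the index arithmetic propagates correctly through arbitrary nesting depth; once this is pinned down, everything else reduces to the definitions and the two cited lemmas.
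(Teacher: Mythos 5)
Your proof is correct and takes essentially the same approach as the paper: isolate the conditional distribution of the relevant $r$'s given $T\geq t$, identify the per-round correlating signals, match outputs to $\tau_j^*$ via \Cref{lemma:eGpB_policy_correl} and \Cref{lemma:apply_star_screening}, and get joint distributions from the fact that the $\eps$Grounded$\pi$Bot components are deterministic given $(\bm{a}^{1:k-1},q_k)$. The one difference is presentational: you make the alignment of nested $\apply^*$ calls explicit via the index identity $(T+1-k)+(T'+1-k')=T+1-k'$, while the paper handles this implicitly by a shift-stationarity conditioning step (equating the distribution of $\apply^*(p_j,(p_{-j},(r_m)_{m\geq T+1-t},(x_m)))$ given $r_{T+1-t}=r_0'$ and $\bm{a}^{1:t-1}$ with that of $\apply^*(p_j,(p_{-j},(r_m),(x_m)))$ given $r_0=r_0'$, $\bm{a}^{1:t-1}$, $T=t-1$). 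Your explicit verification is a reasonable way to pin down the step you flagged as the main obstacle. Minor notational slip: you write $x_0^j$ for the private randomness seen by the simulated $p_j$, but all of $p_i$'s $\apply^*$ calls pass the same $(x_m)$ sequence; this is inconsequential since the $\apply^*$ screening makes the result independent of $(x_m)$ anyway.
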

\begin{proof}
First note that since $\bm{a}^{1:t}$ is the output of $\apply^*$ calls, \Cref{lemma:apply_star_screening} gives us that it must be independent of the $(x_m)$.

Now, note that, given $T\geq t\geq 1$, $\bm{a}^{1:t}$ depends only on $(r_m)_{m\geq_{T+1-t}}$. Moreover, $r_{T+1-t:T-1}\simiid U[\eps,1]$ ($t-1$ random variables),  $r_T\sim U[0,\eps)$ and $(r_m)_{m\geq T+1}\simiid U[0,1]$, all independently, conditional on $T\geq t$, independent of $T$. Hence, the distribution of $\bm{a}^{1:t}$ is independent of $T$, conditional on $T\geq t$. Moreover, $\bm{a}^{1:t}$ has the same distribution as if we had $T= t-1$ and $a^{t'}_j = \apply^*(p_j,(p_{-j},(r_m)_{m\geq t-t'}, (x_m)))$.

First, consider $t=1$. Conditional on $r_T=r\in[0,\eps)$, we have  $a_j^1 \sim  \E{\apply^*(p_j,(p_{-j},(r_m)_{m\geq T}, (x_m)_{m\geq 0}))\mid r_T=r}$, which follows the same distribution as $\E{\apply^*(p_j,(p_{-j},(r_m)_{m\geq 0}, (x_m)_{m\geq 0}))\mid r_0=r}$, i.e. as $\tau_j^*(\omega;r/\eps)$. Since $r/\eps$ follows a $U[0,1)$ distribution, it remains to show that, given $r_T$, the $a^1_j$ are independent of each other. For this, we need only note that for $j\neq i$, $p_j$ is an $\eps$Grounded$\pi$Bot and so by \Cref{lemma:eGpB_policy_correl}, $\tau_j^*(\omega;r/\eps)$ is deterministic given $r_0 = r$ (and hence so is $a^1_j$ given $r_T$). Hence, the $a^1_j$ are trivially independent of each other, given $r_T$.

It now suffices to show that, for $t>1$, given $\bm{a}^{1:t-1}$, the $a_j^{t}$ have the same joint distribution as the $\tau_j^*(\bm{a}^{1:t-1};q)$, where $q \simiid U[0,1]$. Conditional on $r_{T+1-t} = r_0'$, and on $\bm{a}^{1:t-1} = \bm{\alpha}^{1:t-1}$ (where we may restrict to $\bm{\alpha}^{1:t-1}$ occurring with nonzero probability), we have:
\begin{align*}
    a_j^t &\sim  \E{\apply^*(p_j,(p_{-j},(r_m)_{m\geq T+1-t}, (x_m)_{m\geq 0}))\mid r_{T+1-t} = r_{0}', \bm{a}^{1:t-1} = \bm{\alpha}^{1:t-1}}\\
    &= \E{\apply^*(p_j,(p_{-j},(r_m)_{m\geq 0}, (x_m)_{m\geq 0}))\mid r_{0} = r_{0}', \bm{a}^{1:t-1} = \bm{\alpha}^{1:t-1},T=t-1}\\
    &= \tau^*_j(\bm{a}^{1:t};(r_0'-\eps)/(1-\eps)).
\end{align*}
We have that $(r_{T+1-t}-\eps)/(1-\eps) \sim U[0,1]$, so we again need only check independence given $r_{T+1-t}$ and $\bm{a}^{1:t-1}$. We have that for $j\neq i$, $\tau^*_j(\bm{a}^{1:t};(r_0'-\eps)/(1-\eps))$ is again deterministic (given $r_{T+1-t}=r_0'$ and $\bm{a}^{1:t-1}$), by \Cref{lemma:eGpB_policy_correl}. Hence, by similar argument to before, we are done.
\end{proof}
\correlFolkThm*
\begin{proof}\label{pf:correlFolkThm}
    To see that feasibility and strict individual rationality are sufficient, we simply apply \Cref{thm:repFolkThm,thm:correl_utils}. Note that correlated $\eps$Grounded$\pi$Bots may implement any policy in a correlated repeated game with screening for which the screened policy is screened whenever the unscreened policy randomises independently (i.e., is random for fixed $q$). Since our proof of \Cref{thm:repFolkThm} used screened and unscreened policies satisfying this constraint, and applying \Cref{thm:correl_utils}, we are done.

    Note that feasibility and individual rationality are also necessary:
    We can see immediately that feasibility is a necessary condition. Moreover, individual rationality is necessary, since otherwise some player $i$ could deviate to a program that simply plays the minimax policy for player $i$ (and all programs would then still halt).
\end{proof}
\subsection{Proofs for Section \ref{sec:uncorrel}}
\haltingUncorrel*
 \begin{proof}\label{pf:haltingUncorrel}
 Note that this is equivalent to a special case of \Cref{thm:correl_halting}, in which all programs pass on the $(x_m)$ unmodified, say, and no program ever otherwise uses the $(x_m)$. Thus, this result follows as an immediate corollary.
 \end{proof}
 \expectedRuntime*
\begin{proof}\label{pf:expectedRuntime}
The proof is almost identical to that of \Cref{thm:expected_runtime_correl}, with $\apply^*$ calls replaced by $\apply$ calls.
\end{proof}
 \uncorrelHaltingExt*
  \begin{proof}\label{pf:uncorrelHaltingExt}
 Note that this is equivalent to a special case of \Cref{thm:correl_halting_ext}, in which all programs pass on the $(x_m)$ unmodified, say, and no program ever otherwise uses the $(x_m)$. Thus, this result follows as an immediate corollary.
 \end{proof}

  \begin{cor}\label{cor:action_sequences}
     For any strategy $s_i$ for player $i$ in game $G$, for all $\delta>0$, for $\eps>0$ (depending on $\delta$) sufficiently small, there is a deterministic sequence of actions $(\alpha_i^t)$ such that $s_i = \sum_{t=1}^\infty \eps (1-\eps)^{t-1}\alpha_i^t$ and for all $t$, we have $\norm{\sum_{k=t}^\infty \eps(1-\eps)^{k-t}\alpha_i^t - s_i}_\infty<\delta.$
 \end{cor}
 \begin{proof}
     If each action has positive support in $s_i$, this follows by applying \Cref{lemma:action_sequences} to a game in which we add players for each possible action of player $i$, and setting the utility of each such player to be 1 if the action corresponding to that player is played and zero otherwise. If some actions do not have positive support, we may simply consider the game where those actions are removed from the action space of player $i$, and then do the same.
 \end{proof}
\uncorrelEquilibria*
\begin{proof}\label{pf:uncorrelEquilibria}
    Let payoffs $v_{1:n}$ satisfy the above conditions with respect to $s_{1:n}$ and $\lambda$. Apply \Cref{cor:action_sequences} to choose $\eps> 0$ small enough and sequences $(\alpha_i^t)$ for each $i$ such that $s_i = \sum_{t=1}^\infty \eps (1-\eps)^{t-1}\alpha_i^t$ and for all $t$, we have $\norm{\sum_{k=t}^\infty \eps(1-\eps)^{k-t}\alpha_i^t - s_i}_\infty<\delta$, for some $\delta>0$ to be determined. Set $\pi_j(\bm{a}^{1:t})$ to be equal to $\alpha_j^{t+1}$ provided $a^{1:t}_k = \alpha^k_{1:t}$ for all $k$. Then, when for $t'\leq t$, $\bm{a}^{1:t'-1} = \bm{\alpha}^{1:t'-1}$, and $a^t_{-i} = \alpha_{-i}^{t'}$ but $a^{t'}_i\neq \alpha^{t'}_i$, set $\pi_j(\bm{a}^{1:t}) = s_{j,i}(a_i^t)$, for some strategy $s_{j,i}$ (to be determined) for player $j$. Otherwise, set $\pi_j$ arbitrarily. We will show that for each $i$, player $i$ obtains utility at most $\max_{q, s_i'}\allowbreak u_i\left((1-q)s_{i}+qs_i',\allowbreak((1-q)s_j+qS_j'\bullet s_i')_{j\neq i}\right) + O(\delta q)$.

    Consider the game between programs $p_{-i}$ and $p_i'$. Let $q$ be the probability that $a_i^{1:T}$ is inconsistent with $\pi_i$, given $T>0$ (where $T$, $\bm{a}^{1:T}$ are as in the definition of the $\eps$Grounded$\pi$Bots). Note that the distribution of $a_i^T$ conditional on $T \geq 1$ is just the same as the marginal distribution of the action of $p_i'$.
    
    Now with probability $1-q$, we have $a_i^{1:T}$ is consistent with $\pi_i$, and hence $a_i^T = \alpha_i^T$. We will now show that conditional on player $i$ not deviating on the current or previous time steps, her action is distributed as $s_i + O(\delta q)$.

    Note that $\bm{a}^{1:t}$ is independent of $T$, conditional on $T\geq t$. Then, let $\delta_t$ be the probability of deviation by time $t$ (i.e., of $a_i^{1:t}$ not being consistent with $\pi_i$), given $T\geq t$. Also set $\delta_0 = 0$. We have that $q= \sum_{t=1}^\infty \prob{T=t\mid T\geq 1}\delta_t$. Meanwhile, 
    \begin{align*}
        \prob{T=t\mid \text{No deviation in } a_i^{1:T}, T\geq 1}&= \prob{\text{No deviation in }a_i^{1:T} \mid T = t}\frac{\prob{T=t\mid T\geq 1}}{\prob{\text{No deviation in }a_i^{1:T} \mid T \geq 1}}\\
        &= \frac{(1-\delta_t)\prob{T=t\mid T\geq 1}}{1-q}.
    \end{align*}
    Thus, given that no deviation occurs in $a_i^{1:T}$, the distribution of $a_i^T$ is:
    \begin{align*}
        \tilde{s}_i &=\sum_{t=1}^\infty \alpha_i^t \prob{T=t \mid  \text{No deviation in } a_i^{1:T}, T\geq 1}\\
        &= \frac{1}{1-q}\sum_{t=1}^\infty \prob{T=t\mid T\geq 1}(1-\delta_t)\alpha_i^t\\
        &= \frac{1}{1-q}\sum_{t=1}^\infty\prob{T=t\mid T\geq 1}\alpha_i^t - \frac{1}{1-q}\sum_{t=1}^\infty\prob{T=t\mid T\geq 1}\delta_t\alpha_i^t\\
        &= \frac{1}{1-q}\sum_{t=1}^\infty\prob{T=t\mid T\geq 1}\alpha^i_t - \frac{1}{1-q}\sum_{t=1}^\infty\prob{T=t\mid T\geq 1}\delta_t(\alpha_i^t-s_i) - \frac{q}{1-q}s_i\\
        &= \frac{s_i}{1-q} - \frac{q}{1-q}s_i - \frac{1}{1-q}\sum_{t=1}^\infty\prob{T=t\mid T\geq 1}\delta_t(\alpha_i^t-s_i)\\
        &= s_i - \frac{1}{1-q}\sum_{t=1}^\infty(\delta_t-\delta_{t-1})\sum_{k\geq t}\prob{T=k\mid T\geq 1}(\alpha^k_i-s_i)\\
        &= s_i - \frac{1}{1-q}\sum_{t=1}^\infty(\delta_t-\delta_{t-1})\prob{T\geq t\mid T\geq 1}(\E{\alpha_i^T\mid T\geq t}-s_i)\\
        &= s_i - \frac{1}{1-q}\sum_{t=1}^\infty(\delta_t-\delta_{t-1})\prob{T\geq t\mid T\geq 1}(\E{\alpha_i^T\mid T\geq t}-s_i).
    \end{align*}
    Now, we have that $(\alpha_i^t)$ by assumption has $\norm{\E{\alpha^i_T\mid T\geq t} - s_i}_\infty \leq \delta$. Thus,
    \begin{align*}
        \norm{\tilde{s}_i-s_i} &\leq \frac{1}{1-q}\sum_{t=1}^\infty(\delta_t-\delta_{t-1})\prob{T\geq t\mid T\geq 1}\delta\\
        &= \frac{1}{1-q}\sum_{t=1}^\infty \delta_t\prob{T=t\mid T\geq 1}\delta\\
        &= \frac{\delta q}{1-\delta}.
    \end{align*}
    Now, write $s_i'$ for the distribution of $a_i^T$ given that some deviation occurs in $a_i^{1:T}$. We have that the overall distribution of $a_i^T$ (conditional on $T\geq 1$) is $(1-q)\tilde{s}_i + q s_i'= (1-q)s_i + q s_i' + O(\delta q)$. This is then the distribution of the action of player $i$.
    
    Meanwhile, each other player $j$ plays $\pi_j()$ when $T=0$ and otherwise observes $\bm{a}^{1:T}$ and plays $\pi_j(\bm{a}^{1:T})$. Now, conditional on no deviation occurring, we have
        \begin{align*}
        \prob{T=t\mid \text{No deviation in } a_i^{1:T}}&= \prob{\text{No deviation in }a_i^{1:T} \mid T = t}\frac{\prob{T=t}}{\prob{\text{No deviation in }a_i^{1:T}}}\\
        &= \frac{(1-\delta_t)\prob{T=t}}{1 - (1-\eps)q}.
    \end{align*}
    and so condition on no deviation, the overall strategy of player $j$ is given by
    \begin{align*}
        \tilde{s}_j &= \frac{1}{1-(1-\eps)q}\sum_{t=0}^\infty\prob{T=t}(1-\delta_t)\alpha_j^{t+1}\\
        &= \frac{s_j}{1-(1-\eps)q} - \frac{1}{1-(1-\eps)q}\sum_{t=0}^\infty \prob{T=t}\delta_t \alpha_j^{t+1}\\
        &= \frac{s_j}{1-(1-\eps)q} -\frac{(1-\eps)qs_j}{1-(1-\eps)q} - \frac{1}{1-(1-\eps)q}\sum_{t=1}^\infty \prob{T=t}\delta_t (\alpha_j^{t+1}-s_j)\\
        &= s_j -\frac{1-\eps}{1-(1-\eps)q}\sum_{t=1}^\infty \prob{T=t\mid t\geq 1}\delta_t(\alpha_j^{t+1}-s_j)\\
        &= s_j -\frac{1-\eps}{1-(1-\eps)q}\sum_{t=1}^\infty (\delta_t-\delta_{t-1})\prob{T\geq t\mid t\geq 1}(\E{\alpha_j^{T+1}\mid T\geq t}-s_j).
    \end{align*}
    Now, we have $\E{\alpha_j^{T+1}\mid T\geq t} = \E{\alpha_j^{T+1}\mid T+1\geq t+1} = \E{\alpha_j^{T}\mid T\geq t+1}$, and so $\norm{\E{\alpha_j^{T+1}\mid T\geq t}-s_j}= \norm{\E{\alpha_j^{T}\mid T\geq t+1}-s_j} \leq \delta$. Hence,
    \begin{align*}
        \norm{\tilde{s}_j-s_j} &\leq \frac{1-\eps}{1-(1-\eps)q}\sum_{t=1}^\infty (\delta_t-\delta_{t-1})\prob{T \geq t\mid T\geq 1}\delta\\
        &= \frac{1-\eps}{1-(1-\eps)q}\sum_{t=1}^\infty\delta_t \prob{T=t\mid T\geq 1}\delta\\
        &= \frac{(1-\eps)q}{1-(1-\eps)q}\delta.
    \end{align*}
    Meanwhile, conditional on a deviation occuring in $a_i^{1:T}$, $a_i^T$ is distributed as $s_i'$ and so player $j$'s action is distributed as $s_{j,i}'\bullet s_i'$. This happens with overall probabilty $(1-\eps)q$. Hence, player $j$'s action is overall distributed as $(1-(1-\eps)q)\tilde{s}_j + (1-\eps)qs_{j,i}\bullet s_i' = (1-(1-\eps)q)s_j + (1-\eps)qs_{j,i}\bullet s_i' + O(\delta q) = (1-q)s_j + qs_{j,i}\bullet s_i' + O(\delta q)$.

    Putting this all together, player $i$ obtains utility $u_i((1-q)\tilde{s}_i+qs_i', ((1-(1-\eps)q)\tilde{s}_j + (1-\eps)qs_{j,i}\bullet s_i')_{j\neq i}) = u_i((1-q)s_i+qs_i', ((1-q)s_j + qs_{j,i}\bullet s_i')_{j\neq i}) $, since the utility function is multilinear and hence Lipschitz. This is then at most $\max_{q, s_i'}\allowbreak u_i\left((1-q)s_{i}+qs_i',\allowbreak((1-q)s_j+qs_{j,i}\bullet s_i')_{j\neq i}\right)$, up to $O(\delta q)$ terms.

    Now, by taking $\eps$ sufficiently small, we may ensure that the $O(\delta q)$ terms are at most $\frac{\lambda}{2}q$. If no player deviates, each player $i$ just receives expected utility $v_i$.

    Then, choose $s_{j,i}$, for each $j$ and $i$, so as to minimise $\max_{q, s_i'}u_i((1-q)s_{i}+qs_i',(1-q)s_{-i}+qs_{-i,i}\bullet s_i')$ (possible by continuity and compactness). We have that player $i$ by deviating can gain additional utility at most
    \begin{equation*}
        \max_{q, s_i'}u_i((1-q)s_{i}+qs_i',(1-q)s_{-i}+qs_{-i,i}\bullet s_i') - v_i + \frac{\lambda}{2}q \leq -\frac{\lambda}{2}q < 0.
    \end{equation*}
    Hence, this is an equilibrium.
\end{proof}

\uncorrelEquilibriaNecessary*
\begin{proof}\label{pf:uncorrelEquilibriaNecessary}
    That $v_{1:n}$ must be feasible without correlation is immediate. Suppose, for each $j$, $p_j$ is an $\eps$Grounded$\pi$Bot with discount factor $\eps_j$ and policy $\pi_j$, and that $p_1,\ldots p_n$ form a program equilibrium with strategies $s_{1:n}$ and payoffs $v_{1:n}$. Let $\eps = \min_j \eps_j$, and $T((r_m)) = \min\{t:r_t<\eps\}$. Then, construct program $p_i'$ as follows:
    \begin{itemize}
        \item Check if $r_{T+1}< q$. If so, use $r_{T+1}$ to randomise according to $s_i'$.
        \item Otherwise, output $\apply(p_i,(p_{-i},(r_m))$.
    \end{itemize}
    \Cref{thm:halting_uncorrel} gives us that $p_{-i}$ and $p_i'$ halt against each other almost surely (since $p_i'$ at worst just runs $p_i$, and the structure of the tree of simulations made by the $\eps$Grounded$\pi$Bots does not depend on the output of programs).

    Now, let $S_j'(a)$ be the overall strategy of $p_j$ if $p_i$ were instead simply replaced with a program that always output action $a$.

    Note that none of the $p_j$ ever makes an apply call (including within other $\apply$ calls) with input sequence $(r_m)_{m\geq k}$ with $k>T$. Then, conditional on $r_{T+1} \geq q$, we have that $p_i'$s output will always be compatible with $\pi_i$ (and $p_i$). So, since the distribution $r_{T+1}$ is independent of $T$ and of $r_{1:T}$, we have that conditional on $r_{T+1}\geq q$, the strategies of all programs is given by $s_{1:n}$. Meanwhile, conditional on $r_{T+1}< q$, given that $T\geq 1$, all $\apply$ calls referencing $p_i'$ made by the other programs will have output $a$, for some fixed action $a$, with probability proportional to $s_i(a)$. Thus, program $j$ for $j\neq i$ must then have overall strategy $S_j'(a)$.

    Thus, overall, $p_i'$ has strategy $(1-q)s_i+qs_i'$ and for each $j\neq i$, we have that $p_j$ has overall strategy $(1-q(1-\eps))s_j+ q(1-\eps)\sum_a s_i'(a)S_j'(a)$, which is in turn equal to $(1-q)s_j + q \tilde{S}_j\bullet s_i'$, where $\tilde{S}_j(a) \defeq \eps s_j + (1-\eps) S_j'(a)$. 

    Since this is an equilibrium, we therefore obtain that, for some $\tilde{S}_j:A_i \rightarrow\Delta(A_j)$, for all $s_i'$ and $q$,
    \begin{align*}
        v_i &\geq u_i\left((1-q)s_i+qs_i', \left((1-q(1-\eps))s_j+ q \tilde{S}_j\bullet s_i'\right)_{j\neq i}\right).
    \end{align*}
    Maximising over $q$ and $s_i'$ for the deviator, and then minimising over the $\tilde{S}_j$, the result immediately follows.
\end{proof}

\subsection{Proofs for Section \ref{sec:simulationist}}
\outputIndepHalting*
\begin{proof}\label{pf:outputIndepHalting}
    Suppose that $p_1,\ldots p_n$ are $\eps$Grounded$\pi$Bots and we replace $p_i'$ with a program that halts on a given input whenever $p_i$ halts. Consider the resulting call tree, truncating whenever we reach a node at which $p_i'$ is called. We have that the call tree has at most the depth of the original call tree, and that all of the nodes corresponding to calls to $p_i'$ halt, since the same call to $p_i$ occurs in the original call tree.
\end{proof}

\simulationistNegativeResultGame*
\begin{proof}\label{pf:simulationistNegativeResultGame}
We will proceed by constructing $T_1,\ldots T_n$ and a policy profile for the new game. We will show that this policy profile grants the same payoffs to players as in the equilibrium of the program game. Moreover, for each possible deviation of the given form in the new game, we will show that there exists a corresponding deviation of the program game in which all players receive the same utilities. From this, we deduce that no such deviation can be profitable in the new game.

We begin by defining $T_1, \ldots, T_n$ as functions of the random sequence $(r_m)$. For each $i$, let $T_i((r_m))$ be the maximum depth of nested $\apply$ calls made by $p_i$ on input $(p_{-i},(r_m))$. I.e., if $p_i$ makes no $\apply$ calls on this input, $T_i=0$, and otherwise $T_i$ is one more than the maximum depth of $\apply$ calls made by $p_i$'s own $\apply$ calls. In essence, $T_i+1$ can be thought of as the `time step' of player $i$ in the program game.

We now work towards defining the policies in the new game (on relevant strategy histories). We define these based on the action distributions of programs in the program game against a particular class of programs. For $t= 1, 2, \ldots$, and each $i$, let $c_i^t$ be the strategy for player $i$ corresponding to the distribution of $\apply(p_i,(p_{-i},(r_m)))$ conditional on $T_i+1 = t$. Then, given sequence $d_i$ and $t_0$, define program $p_i'(t_0,d_i)$ as follows: Given input $(p_{-i},(r_m))$, determine $T_i((r_m))$, by running $p_i$ on this input and counting the maximum depth of $\apply$ calls. If $T_i < t_0$, output $\apply(p_i,(p_{-i},(r_m)))$. Otherwise, use $r_0$ to randomise according to $d_i$ (in some arbitrary fixed way). Finally, if the input is $(q_{-i},(r_m))$ with $q_{-i}\neq p_{-i}$, simply output $\apply((p_i,(q_{-i},(r_m)))$.

Note that $p_i'$ halts on every input on which $p_i$ halts, and so by output independent halting we have that programs $p_{-i}$, $p_i'$ halt against each other (for any choice of $i$ and $d_i$). Now, for each $t_0$ and $d_i$ and associated $p_i'$, for each $j \neq i$, let $s_j(i,t_0,d_i) = \E{\apply(p_j,(p_{-ij},p_i', (r_m)))\mid T_j > t_0}$.

We are now ready to define policies $\tau_{1:n}$ for the players in the new game (as functions of the observed strategy history). First, set $\tau_i(\omega) = c_i^1$ and $\tau_i(c_{1:n}^{1:t}) = c_i^{t+1}$ for all $i$ and $t$. Then, if strategy history $\sigma_{1:n}^{1:t}$ is such that $\sigma_{1:n}^{1:t_0-1} = c_{1:n}^{1:t_0-1}$ for some $t_0$, and $\sigma_{-j}^{t_0} = c_{-j}^{t_0}$ and $\sigma_j^{t_0}\neq c_j^{t_0}$, let $\tau_i(\sigma_{1:n}^{1:t}) = s_i(j,t_0,\sigma_j^{t_0})$. Elsewhere, define $\tau_i$ arbitrarily.

Let's now show that $\tau_{1:n}$ satisfy the desired properties. First, we have that when all players play according to $\tau_{1:n}$, each player $i$ plays $c_i^t$ conditional on $T_i+1=t$. Thus, the overall expected utility obtained by player $i$ is $u_i(\E{c_1^{T_1+1}},\ldots, \E{c_n^{T_n+1}})$, which is in turn equal to $u_i(\E{\E{\apply(p_1(p_{-1},(r_m)))\mid T_1+1}},\ldots, \E{\E{\apply(p_n(p_{-n},(r_m)))\mid T_n+1}}) = u_i(\E{\apply(p_1(p_{-1},(r_m)))},\ldots, \E{\apply(p_n(p_{-n},(r_m)))}) = v_i$, the same as in the program game, as required. Now, suppose that player $i$ deviates to start playing $d_i$ for $T_i\geq t_0$. We then have that for each $j\neq i$, for $T_j = t\leq t_0+1$, player j plays $c_j^t$. Otherwise, player $j$ must play $s_j(i,t_0,d_i)$. 

Meanwhile, suppose that in the program game, player $i$ deviates to program $p_i'(t_0,d_i)$. Then, conditional on $T_j = t \leq t_0$, we know that $p_j$ on input $(p_{-j},(r_m))$ makes $\apply$ calls of maximum depth $t$. In that case, the call tree cannot contain anywhere any calls to $p_i$ with input sequence $(r_m')$ with $T_i((r_m')) \geq t$. Meanwhile, by the definition of simulationist, and the fact $p_1,\ldots,p_n$ are all simulationist, all calls in the tree must be to simulationist programs. We will now argue that for each call in the call tree, the output must be the same if any number of instances of $p_i$ are replaced by $p_i'$. Suppose this were not the case. Then let $\apply(p,(q_{1:n-1},(r_m')))$ be a maximal depth call for which this is not the case (where the calls made directly by the top-level program have depth 1, and so on). If $p$ is $p_i$, we have that $T_i((r_m')) < t \leq t_0$, and so if $q_{1:n-1}=p_{-i}$, we have that $p_i'$ would give the same output as $p_i$. Otherwise, $p_i'$ would run the same apply calls as $p_i$ given input $(q'_{1:n-1},(r_m'))$, where each $q_i'$ is either $q_i$, or if $q_i=p_i$, $q_i'$ might be $p_i'$. Since $p_i$ is simulationist, the first such call must be the same as if the input programs were $q_i$ up to substitutions (since $q_i'$ cannot be distinguished from $q_i$ except via simulations), and so by assumption must give the same output as it would in that case. Given that this holds for all calls made by $p_i$ on this input so far, it must then also hold for the next one. So by induction the same direct calls (up to substitutions) are made, and give the same output, so $p_i$ must give the same output with $q_i'$ as with $q_i$. Hence, $\apply(p_i',(q_{1:n-1}',(r_m'))) = \apply(p_i,(q_{1:n-1},(r_m')))$, a contradiction. Thus, we must have $p \neq p_i$. But then by the same induction argument as before, $p$ must make the same calls (up to substitutions), which must give the same outputs. Hence, each call in the call tree gives the same output when any number of instances of $p_i$ is replaced by $p_i'$. From this, we may deduce, by similar argument, that $p_j$ itself gives the same output when $p_i$ is replaced by $p_i'$, conditional on $T_j=t\leq t_0$.

So, when player $i$ deviates to $p_i'$, conditional on $T_j = t \leq t_0$, we have that $p_j$'s action is distributed as $c_j^t$. Meanwhile, conditional on $T_j>t_0$, we know $p_j$'s action is distributed as $s_j(i,t_0,d_i)$, by construction.

Thus, the action distribution of each player $j$ in the program game, conditional on $T_j$, is the same as in the new game. Since the $T_j$ have the same distribution in both, we may conclude that all players receive the same utilities in both. In particular, since $p_i'$ cannot be a profitable deviation in the program game, $\tau_i'$ cannot be a profitable deviation in the new game. We are then done.

\end{proof}
\simulationistNegativeResult*
\begin{proof}\label{pf:simulationistNegativeResult}
We will apply \Cref{prop:simulationistNegativeResultGame} and its proof. Let $T_i$, $c_i^t$, $\tau_i$ and $s_i(j,t_0,d_j)$ be as in the proof of \Cref{prop:simulationistNegativeResultGame} for each $i$, $t$, $t_0$, $j$, $d_j$. Let $\delta_i^t = \prob{T_i+1=t}$. Then we have from \Cref{prop:simulationistNegativeResultGame} that if player $i$ (in the game from \Cref{prop:simulationistNegativeResultGame}) deviates to play $d_i$ when $T_i+1\geq t_0$, this cannot be profitable. Meanwhile, this results in overall strategy $\sum_{t=1}^{t_0-1}\delta_i^t c_i^t +d_i \sum_{t\geq t_0} \delta_i^t$ for player $i$. For $j\neq i$, player $j$ then plays according to overall strategy $\sum_{t=1}^{t_0}\delta_j^t c_j^t + s_j(i,t_0, d_i) \sum_{t>t_0}\delta_j^t$. Thus, we must have
\begin{align*}
    v_i &\geq u_i\left(\sum_{t=1}^{t_0-1}\delta_i^t c_i^t +d_i \sum_{t\geq t_0} \delta_i^t, \left(\sum_{t=1}^{t_0}\delta_j^t c_j^t + s_j(i,t_0, d_i) \sum_{t>t_0}\delta_j^t\right)_{j\neq i}\right)\\
    &\geq \min_{s_{-i}}u_i\left(\sum_{t=1}^{t_0-1}\delta_i^t c_i^t +d_i \sum_{t\geq t_0} \delta_i^t, \left(\sum_{t=1}^{t_0}\delta_j^t c_j^t + s_j \sum_{t>t_0}\delta_j^t\right)_{j\neq i}\right)
\end{align*}
for all $t_0$ and $d_i$. Taking the maximum over $t_0$ and $d_i$ gives the result.

\end{proof}
\noRuleOutPareto*
\begin{proof}\label{pf:noRuleOutPareto}
    Let $(c_1,c_2)$ be a strictly Pareto optimal in the limit of small deviations, strictly individually rational policy profile.  WLOG assume $u_1(c_1,c_2) = u_2(c_1,c_2) = 0$ (by translating utilities if necessary)/
    
    Consider the policy of playing according to strategies $(c_1,c_2)$, and if player $i$ deviates to policy $d_i$ starting at time step $t_0$, punishing with whatever strategy minimises the player's expected utility given their overall policy (assuming that they will continue to play according to $d_i$). I.e., choose $s_{-i}$ so as to minimise $u_i\left(\sum_{t=1}^{t_0-1}\delta)i^t c_i^t + d_i \sum_{t\geq t_0}\delta_i^t, \sum_{t=1}^{t_0}\delta_{-i}^tc_{-i}^t + s_{-i}\sum_{t>t_0}\delta_{-i}^t\right)$, and punish using $s_{-i}$, where the sequences $(\delta_{1:2}^t)$ are to be determined. We will show that for appropriate choice of $T_1,T_2$ (or equivalently $(\delta_1^t),(\delta_2^t)$, this is an equilibrium of the game described in \Cref{prop:simulationistNegativeResultGame}).

    Now, given that player 1 deviates with probability $p$ to strategy $d_1$, let $q_2(d_1,a_2,p)$ be the minimum probability of detection needed to effectively punish player 1 using action $a_2$ -- i.e., $q_2$ is minimal such that $u_1((1-p)c_1+pd_1, (1-q_2)c_2+q_2 a_2)\leq 0$.

    Then, let $q_2(p) = \max_{d_1}\min_{a_2} q_2(d_1,a_2,p)$. Note that we need player 2 to detect player 1 with probability at least $q_2(p)$ when player 1 deviates with probability $p$, in order to enforce $(c_1,c_2)$. Define $q_1(p)$ similarly (where player 2 deviates and player 1 punishes).

    Note that since $(c_1,c_2)$ is strictly individually rational, we must have that $q_1$ and $q_2$ are well defined and $q_i(p)<1$ for all $p$. (This is because if a player punishes with probability 1, they can ensure the other player gets utility \emph{strictly} less than 0, and so by continuity of the utility in the punishment probability, there must be some punishment probability less than one which allows for utility at most zero for the deviating player.)

    It now remains to show that we can construct $T_1$ and $T_2$ such that for all $t$, $\prob{T_1\geq t+1} \geq q_1(\prob{T_2\geq t})$, and vice versa. Now, if either player $i$ has no profitable deviations (i.e., no actions that perform better than $c_i$ against $c_{-i}$), then we may simply set $T_{-i}=0$ with probability 1, and $T_i = 1$ with probability 1. Otherwise, consider setting $\prob{T_1\geq t} = (1-\eps)^t$ for some small $\eps>0$, to be determined. Then for each $t\geq 1$, set $\prob{T_2\geq t} = q_2((1-\eps)^{t-1})$. This is well defined, and by definition we have that $\prob{T_2\geq t+1} \geq q_2(\prob{T_1\geq t})$, for all $t$.

   For the other direction, we have for $t\geq 1$, 
   \begin{align*}
       q_1(\prob{T_2\geq t}) &= q_1(q_2((1-\eps)^{t-1}))\\
       &= (1-\eps)^{t-1}\frac{q_1(q_2((1-\eps)^{t-1}))}{(1-\eps)^{t-1}}\\
       &= \prob{T_1\geq t+1}(1-\eps)^{-2}\frac{q_1(q_2((1-\eps)^{t-1}))}{(1-\eps)^{t-1}}\\
       &\leq \prob{T_1\geq t+1}(1-\eps)^{-2}\sup_{p\in(0,1]}\frac{q_1(q_2(p))}{p}.
   \end{align*}
   Thus, it suffices to show that $\sup_{p\in(0,1]}\frac{q_1(q_2(p))}{p} < 1$, so that for sufficiently small $\eps$, we have $(1-\eps)^{-2}\sup_{p\in(0,1]}\frac{q_1(q_2(p))}{p}<1$. We will proceed by showing that $\frac{q_1(q_2(p))}{p} < 1$ for all $p>0$, and that the limit of $\frac{q_1(q_2(p))}{p}$ as $p\rightarrow 0$ exists and is less than 1. We will then use compactness of the interval $[0,1]$ to obtain our conclusion.

   First, we will show that $q_1(q_2(p)) < p$ (and similarly for $q_2(q_1(p))$), for $p>0$. To see this, consider any $p$, and possible deviation $d_2$ for player 2. Suppose player 2 deviates to $d_2$ with probability $q_2(p)$. Then if player 1 detects this deviation with probability $p$, either for some $a_1$ this deviation and punishment is strictly bad for player 2 (showing $q_1(q_2(p))<p$), or for every possible $a_1$ it is at worst neutral for player 2. But then by strict Pareto optimality, it must be negative for player 1 for every possible $a_1$ (assuming the deviation is not just to $c_2$). However, this means that for any possible strategy that \emph{player 1} could deviate to with probability $p$, \emph{player 2 punishing} with $d_2$ with probability $q_2(p)$ would result in a strictly negative payoff. This contradicts the definition of $q_2$ (we could make it smaller, if this were the case), and so we are done. The case for $q_2(q_1(p))$ is symmetric.

   Note that $q_1$ and $q_2$ are continuous, (non-strictly) increasing in $p$, and $q_1(0)=q_2(0) = 0$. Now, in the limit of small $p$, the utility for player $i$ of player $i$ deviating to $d_i$ with probability $p$ and player $-i$ punishing with some $s_{-i}$ with probability $q_{-i}(p)$ is $p u_i(c_{-i},d_i)+ q_{-i}(p)u_i(s_{-i},c_i) + o(p)$. This is then at most zero, and zero in the worst case, which as $p\rightarrow0$ tends towards $d_i = d_i^*$ where $d_i^*$ is chosen to maximise $u_i(c_{-i},d_i^*)$, and $s_{-i} = s_{-i}^*$, chosen to minimise $u_i(s_{-i},c_i)$. Recall that we assumed both players have positive deviations, and hence $u_i(c_{-i},d_i^*)>0$ and $u_{-i}(s_{-i}^*,c_i)>0$ and so by strict pareto optimality, $u_{-i}(s_{-i}^*,c_i)<0$. Thus, in the limit of small $p$, we must have $p u_i(c_{-i},d^*_i)+ q_{-i}(p)u_i(s_{-i}^*,c_i) = o(p)$, and $\frac{q_{-i}(p)}{p} \rightarrow \frac{u_i(c_{-i},d_i^*)}{-u_i(s_{-i}^*,c_i)}$, which is well-defined and positive. Thus, as $p\rightarrow 0$, we have that 
   \begin{align*}
       \frac{q_i(q_{-i}(p))}{p} &= \frac{q_i(q_{-i}(p))}{q_{-i}(p)}\frac{q_{-i}(p)}{p}\\
       &\rightarrow 
       \frac{\max_{d_1}u_1(d_1,c_2)}{-\min_{s_2}u_1(c_1,s_2)} \frac{\max_{d_2}u_2(c_1,d_2)}{-\min_{s_1}u_2(s_1,c_2)}\\
       &\leq \frac{u_1(d^*_1,c_2)}{-u_1(c_1,d_2^*)} \frac{u_2(c_1,d^*_2)}{-u_2(d_1^*,c_2)}
   \end{align*}
   
   It remains to show that this is less than 1. Let $\rho = \frac{u_1(d^*_1,c_2)}{-u_1(c_1,d^*_2)}$. Consider $s_1 = \frac{1}{1+\rho}d_1^*+\frac{\rho}{1+\rho}c_1$, and $s_2 = \frac{\rho}{1+\rho}d_2^*+\frac{1}{1+\rho}c_2$. By strict Pareto optimality in the limit of small deviations, we know that $u_i(c_1,s_2)+u_i(s_1,c_2)$ is negative for at least one player. Now, by construction,
   \begin{equation*}
       u_1(c_1,s_2)+u_1(s_1,c_2) = \frac{\rho}{1+\rho} u_1(c_1,d_2^*) + \frac{1}{1+\rho}u_1(d_1^*,c_2)=0.
   \end{equation*}
   Thus, it must be negative for player 2:
   \begin{align*}
       0 &> u_2(c_1,s_2)+u_2(s_1,c_2)\\
       &= \frac{\rho}{1+\rho} u_2(c_1,d_2^*) + \frac{1}{1+\rho}u_2(d_1^*,c_2)\\
       &\propto \frac{u_1(d^*_1,c_2)}{-u_1(c_1,d^*_2)}u_2(c_1,d^*_2)+u_2(d^*_1,c_2).
   \end{align*}
   So $\frac{u_1(d^*_1,c_2)}{-u_1(c_1,d^*_2)}u_2(c_1,d^*_2) < -u_2(d^*_1,c_2)$, and so
   \begin{equation*}
       \lim_{p\rightarrow 0}\frac{q_i(q_{-i}(p))}{p}\leq \frac{u_1(d_1,c_2)}{-u_1(c_1,d_2)}\frac{u_2(c_1,d_2)}{-u_2(d_1,c_2)} < 1.
   \end{equation*} 
   
   Now, since since each $q_i$ must be continuous, we have that the function defined by $\frac{q_i(q_{-i}(p)))}{p}$ on $(0,1]$, and the limit of that function at $0$, is then continuous in $p$, and so by compactness of the interval $[0,1]$, we have that this function obtains its supremum on $[0,1]$. Its supremum must then be less than 1. It follows that the supremum of $\frac{q_i(q_{-i}(p)))}{p}<1$ on the interval $(0,1]$ must be less than 1, and we are done.

\end{proof}

\section{Additional details for examples}
\subsection{\Cref{eg:pirates1}}\label{app:correl_pirates}
    Let us explicitly describe an $\eps$Grounded$\pi$Bot (\Cref{algorithm:pi-bot-correl}) equilibrium in which each program plays $C$ deterministically at equilibrium. Define the policy for player $i$ as $\pi_i(\bm{a}^{1:t}) = C$ if $a_j^{t'}= C$ for all $j$ and $t'$, and $\pi_i(\bm{a}^{1:t}) = L$ otherwise.

    Now, suppose that player $i$ deviates to a program $p_i'$, which, against the other programs, sometimes plays some action other than $C$. Let $T+1$ be the time step as defined in the $\eps$Grounded$\pi$Bot, and $X$ be the time of first deviation for player $i$. I.e., $X\geq 1$ is minimal such that $\apply^*(p_i,(p_{-i},(r_m)_{m\geq T+1-X},(x_m)_{m\geq 0})) \neq C$, and $X=\infty$ if said minimum is not well defined. We have that if $X>T+1$, then all players must play $C$, and so player $i$ receives utility 10. Meanwhile, if $X\leq T$ we have that other players play $L$, and so player $i$ receives utility at most 9. Finally, if $X=T+1$, other players must play $C$, whereas player $i$ may play $D$, for a a maximum utility of 14. Note that the observed $a_i^t$s are correlated between players $-i$, and the actions of players $-i$ are correlated with one another. Thus, we have that the expected utility of player $i$ is at most $10\prob{X>T+1} + 14\prob{X = T+1} + 9\prob{X<T+1}$. Thus, the gain in utility from deviating is at most $4\prob{X=T+1} - \prob{X<T+1}$. Now, note that the distribution of a player's action at time $t$ (i.e., with input sequence $(r_m)_{m\geq T+1-t}$) is independent of $T$, conditional on $T+1\geq t$. Thus, $\prob{X=t\mid T+1 = t'}$ is the same for all $t'\geq t$. In particular
    
    \begin{align*}
    \prob{X<T+1} &= \sum_{t}\prob{X=t\cap T+1 > t}\\
    &= \sum_{t}\prob{X=t \mid T+1>t}\prob{T+1>t}\\
    &= \sum_{t}\prob{X=t \mid T+1=t}\prob{T+1>t}\\
    &= \sum_{t}\prob{X=t \mid T+1=t}\prob{T+1=t}\frac{1-\eps}{\eps}\\
    &= \frac{1-\eps}{\eps}\prob{X=T+1}.
    \end{align*}
    
    Hence we have that the gain from deviating is at most $4\prob{X=T+1} - \prob{X<T+1} = 4\prob{X=T+1} - \frac{1-\eps}{\eps}\prob{X=T+1}$. For $\eps\leq \frac{1}{5}$, this is at most 0.
\subsection{\Cref{eg:pirates_uncorr_pi_bots}}\label{app:pirates_uncorr_pi_bots}
We show that we can restrict attention to $s_2 = s_3 = L$ or $s_2=s_3 = D$ in the following:
\begin{equation*}
    \min_{s_2,s_3} u_1((1-q)C+qD, (1-q)C + qs_2, (1-q)C + q s_3)
\end{equation*}

Note that for all $a$, we have $u_1(a,D,a_3) = u_1(a,a_2,D) = u_1(a,D,D)$ for all $a_2, a_3$ and so (considering the case where both players 2 and 3 play $L$, and the case where at least one of them plays $D$):
\begin{align*}
    u_1(a, xD+(1-x)L, yD + (1-y)L) &\geq (1-x)(1-y)u_1(a,L,L) + (x+y-xy)u_1(a,D,D)\\
    &= (1-x)(1-y)u_1(a,L,L) + (1-(1-x)(1-y))u_1(a,D,D)
\end{align*}

and

\begin{align*}
    u_1(a,C,yD+(1-y)L) &= yu_1(a,C,D) + (1-y)u_1(a,C,C)\\
    &\geq (x+y-xy)u_1(a,C,D) + (1-x)(1-y)u_1(a,C,C)\\
    &=  (x+y-xy)u_1(a,C,D) + (1-x)(1-y)u_1(a,C,L)\\
    &= (1-x)(1-y)u_1(a,C,L) + (1-(1-x)(1-y))u_1(a,C,D)
\end{align*}

and similar for $u_1(a,xD+(1-x)L,C)$. Thus, $s_2$ and $s_3$ being mixtures of $D$ and $L$ gives player 1 at least as much utility as a convex combination of both being pure $D$ and both being pure $L$ (since it is at least as much as the same convex combination in all possible cases in terms of whether the probability $q$ or $1-q$ event happens for each player). Thus, we need only consider $s_2=s_3=L$ and $s_2=s_3=D$.

Utility under each of these possibilities:
We have
\begin{align*}
    &u_1((1-q)C+qD,(1-q)C+qL,(1-q)C+qL)\\
    &= q^2u_1((1-q)C+qD,L,L)+ (1-q^2)(1-q)u_1(C,C,C) + (1-q^2)q u_1(D,C,C)\\
    &= 9q^2 + 10(1-q)(1-q^2) + 14q(1-q^2)\\
    &= 10 +4q -q^2 -4q^3,
\end{align*}
which is equal to $10\frac{3}{4}$ when $q=\frac{3}{4}$. Meanwhile

\begin{align*}
    &u_1((1-q)C+qD,(1-q)C+qD,(1-q)C+qD)\\
    &= qu_1(D,(1-q)C+qD,(1-q)C+qD) + (1-q)^3u_1(C,C,C)\\
    &= 14q + 10(1-q)^3,
\end{align*}
which is equal to $10\frac{21}{32}$ when $q=\frac{3}{4}$.
\subsection{\Cref{eg:trust}}\label{app:mixed}

\begin{table}[ht]
\centering
\setlength{\extrarowheight}{3pt}
\begin{tabular}{cc|c|c|}
  & \multicolumn{1}{c}{} & \multicolumn{2}{c}{Player $2$ }\\
  & \multicolumn{1}{c}{} & \multicolumn{1}{c}{$G$}  & \multicolumn{1}{c}{$C$}\\
  \cline{3-4}
  \multirow{2}*{Player 1}  & $K$ & $3,0$ & $3,0$\\
  \cline{3-4}
  & $S$ & $2,4$ & $4,2$\\
  \cline{3-4}
\end{tabular}
\caption{The trust game from \Cref{eg:trust}}
\label{tab:TrustMixed1}
\end{table}

Consider the following trust game (with payoffs shown in \Cref{tab:TrustMixed1}): Player 1 receives \$3, which he may Keep ($K$) or Send to the other player ($S$). Money sent to the other player is doubled. Player 2 may then either be Greedy ($G$) and take $\$4$ and send only $\$2$ back, or she may be Charitable ($C$) and take only $\$2$, giving $\$4$ to player 1.

First, we will show that there exists an uncorrelated $\eps$Grounded$\pi$Bot equilibrium in which player 1 plays $S$ and player 2 randomises at an approximately $1:1$ ratio between $G$ and $C$, for approximately $\$3$ in expectation for both players (with player 1 receiving slightly more). Let both programs have a common value of $\eps$. Then, define the policies as follows: $\pi_2(a_1^{1:t},a_2^{1:t}) = C$ if $t+1$ is odd, and $G$ if $t+1$ is even. $\pi_1(a_1^{1:t},a_2^{1:t}) = K$ if $a_2^t = G$ and $t$ is odd, and $S$ otherwise.

Now, by the proof of \Cref{prop:uncorrelEquilibria}, if player $2$ deviates to another program (that plays some other strategy $s_2'$ with overall probability $q$ for some $q$), she may obtain utility at most $\max_{q, s_2'}u_2((1-q)s_{2}+qs_2',(1-q)s_1+qS_{1}'\bullet s_2')+O(\delta q)$, where $\delta\rightarrow 0$ as $\eps\rightarrow 0$, where $s_1$ and $s_2$ are the strategies played by the $\eps$Grounded$\pi$Bots against each other, $S_1'(C) = S$ and $S_1'(G) = K$. That is, $s_1 = S$ and $s_2 = \frac{1}{2-\eps}C + \frac{1-\eps}{2-\eps}G$. Now, player 2 has no possible reason to deviate to play more $C$. Meanwhile, $u_2(s_1,s_2) = 2+\frac{2-2\eps}{2-\eps}$. Then, if player 2 deviates to play more $G$ (for effective strategy $(1-q)s_2+qG$), she then receives additional utility at most $u_2((1-q)S+qK,(1-q)s_2 + qG)  - u_2(s_2,s_1) + O(\delta q)$, which we may verify is equal to $-q\left(2+q +O(\eps+\delta)\right)$.
For $\eps$ sufficiently small, this is negative for all $q>0$. Hence, this is an equilibrium.

Meanwhile, there is no way of enforcing this equilibrium with \citet{oesterheld2019robust}'s $\eps$Grounded$\pi$Bots. This is because those programs can only implement myopic policies. A myopic policy cannot punish deviation to play $G$ more, without also responding in the same way to $G$ when played at equilibrium. Thus, equilibria with player 2 mixing and player 1 playing pure $S$ cannot be implemented by the 2019 $\eps$Grounded$\pi$Bots.

\end{document}